\definecolor{colorp1}{RGB}{229,245,249}
\definecolor{colorp2}{RGB}{153,216,201}
\definecolor{colorp3}{RGB}{44,162,95}
\theoremstyle{plain}
\newtheorem{theorem}{Theorem}
\newtheorem{lemma}{Lemma}
\newtheorem{proposition}{Proposition}
\newtheorem{corollary}{Corollary}
\newtheorem{conjecture}{Conjecture}
\theoremstyle{definition}
\newtheorem{definition}{Definition}
\newtheorem{remark}{Remark}
\newtheorem{notations}{Notations}
\newtheorem{example}{Example}
\newcommand{\fw}{\mathrel{\longrightarrow}} %
\newcommand{\bw}{\mathrel{\rightsquigarrow}} %
\newcommand{\fbw}{\mathrel{\twoheadrightarrow}} %
\newcommand{\red}{\mathrel{\fw}}
\newcommand{\revred}{\mathrel{\bw}}
\newcommand{\redl}[1]{\mathrel{\overset{#1}{\red}}}
\newcommand{\revredl}[1]{\mathrel{\overset{#1}{\revred}}}
\newcommand{\fbl}[1]{\mathrel{\overset{#1}{\fbw}}}
\newcommand{\idmem}[2]{#1 : #2} %
\newcommand{\compfleche}[3]{\overset{\idmem{#1}{#2}}#3}
\newcommand{\fwlts}[2]{\compfleche{#1}{#2}{\fw}}
\newcommand{\bwlts}[2]{\compfleche{#1}{#2}{\bw}}
\newcommand{\fbwlts}[2]{\compfleche{#1}{#2}{\fbw}}
\newcommand{\congru}{\equiv} %
\newcommand{\bisim}{\sim}
\newcommand{\bbisim}{\overset{\cdot}{\bisim}^{\tau}}
\newcommand{\bfbisim}{\overset{\cdot}{\bisim}^{\tau}}
\newcommand{\bfcong}{\bisim^{\tau}}
\newcommand{\hhpb}{\bisim}
\newcommand{\scbisim}{\mathrel{\bisim^{\tau}}}
\newcommand{\sbfbc}{\mathrel{\sim^{\tau}}}
\newcommand{\Forw}[1]{F_{#1}} %
\newcommand{\Backw}[1]{B_{#1}} %
\newcommand{\emptymem}{\varnothing}
\newcommand{\orig}[1]{O_{#1}} %
\newcommand{\fork}{\curlyvee} %
\newcommand{\substs}[1]{[#1]}
\newcommand{\tempty}{\epsilon}%
\newcommand{\mproc}[2]{#1 \rhd #2}
\newcommand{\tick}{\checkmark}
\newcommand{\addfork}{\mathtt{\curlyvee}} %
\newcommand{\context}{C_{\curlyvee}} %
\newcommand{\integer}{\mathbb{N}}
\newcommand{\compa}{{\uparrow}} %
\newcommand{\setst}{~|~} %
\DeclareMathOperator{\ids}{\mathsf{I}}
\DeclareMathOperator{\mathaddress}{ad}
\DeclareMathOperator{\card}{Card}
\newcommand{\funaddress}[3]{\mathaddress_{#1}(#2, #3)}
\newcommand{\encc}[1]{[\![#1]\!]}
\newcommand{\enc}[1]{\encc{#1}} %
\newcommand{\encr}[1]{[\![#1]\!]}
\newcommand{\fn}[1]{\mathrm{fn}(#1)}
\newcommand{\bn}[1]{\mathrm{bn}(#1)}
\newcommand{\nm}[1]{\mathrm{nm}(#1)}
\newcommand{\names}{\mathsf{N}}
\newcommand{\labels}{\mathsf{L}}
\newcommand{\iso}{\cong} %
\newcommand{\out}[1]{\bar{#1}}
\newcommand{\power}{\mathcal{P}}
\newcommand{\conf}{\mathcal{C}}
\newcommand{\confzero}{{\mathbf{0}}}
\newcommand{\rel}{\mathrel{\mathcal{R}}}
\newcommand{\relCCS}{\mathrel{\mathcal{R}_{\text{CCS}}}}
\newcommand{\relCONF}{\mathrel{\mathcal{R}_{\text{Conf}}}}
\newcommand{\mem}[1]{\langle#1\rangle}
\newcommand{\labl}{\ell}
\newcommand{\restr}{\upharpoonleft}
\newcommand{\erase}[1]{\varepsilon(#1)}
\newcommand{\drule}[1]{\textsc{#1}}
\newcommand{\bs}{\backslash}
\newcommand{\Act}{\textrm{Act}}
\newcommand{\st}{s.t.\@\xspace} %
\newcommand*{\ie}{i.e.\@\xspace}
\newcommand{\withoutlog}{w.l.o.g.\@\xspace}
\newcommand*{\resp}{resp.\@\xspace}
\newcommand{\BNFsepa}{\enspace \Arrowvert \enspace}
\begin{document}

\begin{frontmatter}
	\title{Contextual equivalences in configuration structures and reversibility%
		\tnoteref{t1,t2}%
	}
	\tnotetext[t1]{Extended version of a work presented at \href{http://discotec2015.inria.fr/workshops/ice-2015/}{ICE 2015}~\cite{Aubert2015d}. A part of this work appears, with much more contextual material, in the second author's Ph.D Thesis~\cite{Cristescu2015}.}
	\tnotetext[t2]{This work was partly supported by the ANR-14-CE25-0005 \href{http://lipn.univ-paris13.fr/~mazza/Elica/}{ELICA} and the ANR-11-INSE-0007 \href{http://www.pps.univ-paris-diderot.fr/~jkrivine/ANR/REVER/ANR_REVER/Welcome.html}{REVER}.}

	\author[inria,lacl]{Clément Aubert\fnref{fn1}\corref{cor1}}
	\ead{aubertc@appstate.edu}
	\ead[url]{http://lacl.fr/~caubert/}

	\author[pps]{Ioana Cristescu\corref{cor2}}
	\ead{ioana.cristescu@pps.univ-paris-diderot.fr}
	\ead[url]{http://www.pps.univ-paris-diderot.fr/~ioana/}

	\address[inria]{INRIA}
	\address[lacl]{Université Paris-Est, LACL (EA 4219), UPEC, F-94010 Créteil, France}
	\address[pps]{Diderot, Sorbonne Paris Cité, P.P.S., UMR 7126, F-75205 Paris, France}

	\fntext[fn1]{Present address: Department of Computer Science, Appalachian State University, Boone, NC 28608, USA.}
	\cortext[cor1]{Corresponding author}
	\cortext[cor2]{Principal Corresponding author}
	\journal{Journal of Logical and Algebraic Methods in Programming}
	\begin{abstract}
		Contextual equivalence equate terms that have the same observable behaviour in any context.
		A standard contextual equivalence for CCS is the strong barbed congruence.
		Configuration structures are a denotational semantics for processes in which one define equivalences that are more discriminating, \ie that distinguish the denotation of terms equated by barbed congruence.
		Hereditary history preserving bisimulation (HHPB) is such a relation. %
		We define a strong back-and-forth barbed congruence on RCCS, a reversible variant of CCS.
		We show that the relation induced by the back-and-forth congruence on configuration structures is equivalent to HHPB, thus providing a contextual characterization of HHPB.
	\end{abstract}

	\begin{keyword}
		Formal semantics\sep
		Process algebras and calculi\sep
		Reversible CCS\sep
		Hereditary history preserving bisimulation\sep
		Strong barbed congruence\sep
		Contextual caracterisation
	\end{keyword}
\end{frontmatter}

\section*{Introduction}
\addcontentsline{toc}{section}{Introduction}
\label{sec:intro}
\subsection*{Reversibility}
Being able to reverse a computation is an important feature of computing systems.
Reversibility is a key aspect in every system that needs to achieve distributed consensus~\cite{Bouge1988} to escape local states where the consensus cannot be found.
In such problems, multiple computing agents have to reach a common solution.
Allowing independent agents to backtrack and explore the solution space enables them to reach a globally accepted state if given enough time and if a common solution exists.
For example, the dining philophers problem~\cite{Hoare1985} requires a backtracking mechanism to prevent deadlocks.
Rewinding a computation step by step is also a common way to debug programs.
In such settings the step by step approach is often more useful than restarting the program from an initial state.

Importantly, the backtracking mechanism can be integrated to the operational semantics of a programming language, instead of adding a tailor-made implementation on top of each program.
A formal model for reversible concurrent systems needs to address two challenges at the same time: (i) how to compute without forgetting and (ii) what is an optimal notion of legitimate backward moves.
Roughly speaking, the first point is about syntax: processes need to carry a memory that keeps track of everything that has been done (and of the choices that have not been made).
Importantly the needed information to backtrack is recorded in a \emph{distributed} fashion instead of using a centralized store, which could be a bottleneck for the computation.
The second point is tied to the choice of the computation's semantics.
In a sequential program, one backtracks computations in the opposite order to the execution.
However, in a concurrent setting, we do not want to undo the actions precisely in the opposite order than the one in which they were executed, as this order may not materialise.
The concurrency relation between actions has to be taken into account.
It can be argued that the most liberal notion of reversibility is the one that just respects causality: an action can be undone precisely after all the actions that causally depend on it have also been undone.
Then an acceptable backward path is \emph{causally consistent} with the forward computation.

There are different accounts of reversible operational semantics, RCCS~\cite{Danos2004,Danos2007} and CCSK~\cite{Phillips2007} being the two main propositions for a reversible CCS.
In these works, reversibility is embedded into a (classical) process calculus.

\subsection*{Causal models}
In interleaving models, the internal relations between different events cannot be observed.
In particular, causality is not treated as a primitive concept.
On the other hand, non-interleaving semantics have a primitive notion of \emph{concurrency} between computation events.
As a consequence one can also derive a \emph{causality} relation, generally defined as the complement of concurrency.
These models are therefore sometimes called \emph{true-concurrent} or \emph{causal} or, if causality is represented as a partial order on events, \emph{partial order} semantics\footnote{Event and configuration structures were introduced to define domains for concurrency~\cite{Nielsen1979}.
	Causal models are thus often, but inaccurately, called \emph{denotational}: a denotational interpretation is supposed to be invariant by reductions, a property that event structures do not have.}.

A causal model is often an alternative representation of an existing interleaving semantics that helps in understanding the relations between computations in the latter.
Usually in such models, sets of events are considered computational states.
Each set, called a \emph{configuration}, represents a reachable state in the run of the process.
The behaviour of a system is encoded as a collection of such sets.
The set inclusion relation between the configurations stands for the possible paths followed by the execution.
Concurrency and causality are derivable from set inclusion.
In their generality, such models are called \emph{configuration structures}~\cite{Glabbeek2009}, they are a syntax-free and causal model that can interpret multiple calculi.

\emph{Stable families}~\cite{Winskel1986} are configuration structures equipped with a set of axioms, that capture the intended behaviour of a CCS process.
Morphisms of stable families capture sub-behaviours of processes and form a category of stable families.
Process combinators correspond then to universal constructions in this category.
The correspondence with CCS is established through an operational semantics defined on stable families, that we abusively name in that context configurations structures as well.

\subsection*{Behavioural equivalence}
Behavioural equivalences are a major motivation in the study of formal semantics.
For instance, one wants to verify that the execution of a program satisfies its expected behaviour, or that binaries obtained from the same source code, but with different compilation techniques, behave the same.
Thus the interesting equivalences equate terms that behave the same.
Moreover the equivalence should be a congruence: two processes are equivalent if they behave similarly in any context.
Loosely speaking it aims at identifying process that have a common external behaviour in any environment.

Equivalences defined on reduction semantics are often hard to prove.
A proof technique in this case is to define a LTS-based equivalence that is equivalent with the reduction-based one and carry the proofs in LTS semantics.

Behavioural equivalences are defined on the operational semantics and thus cannot access the structure of a term.
The observations one do during the execution of a process are called the \emph{observables} of the relation.
For instance one observes whether the process terminates or whether it interacts with the environment~\cite{Honda1995}.

\subsection*{Causality and reversibility}
Causality and reversibility are tightly connected notions~\cite{Nicola1990,Danos2004}.
Causal consistency is a correctness criterion for reversible computations.
Therefore whenever a reversible semantics is proposed, the calculus has to be equipped first with a causal semantics.

\emph{Prime} LTS are known~\cite{Nielsen1981} to generates a prime event structure.
Since a specific reversible LTS~\cite{Phillips2007} is indeed prime, and moreover since the forward and backward reductions correspond to reductions in its causal representation, reversible models and causal ones are easily derivable from each other.

Notably the connection between reversibility and causality is useful to define meaningful reversible equivalences.
Causal equivalences are more discriminating than the traditional operational ones.
However on a reversible operational semantics one define equivalences of the same expressivity.
Causal equivalences have been extensively studied~\cite{Phillips2012,Bednarczyk1991,Baldan2014,Vogler1993}.
Of particular interest is the hereditary history preserving bisimulation, which was shown to correspond to a LTS-based equivalence for a reversible CCS~\cite{Phillips2007}.

\subsection*{Equivalences on configuration structures}
In CCS equivalences are defined only on forward transitions and are therefore inappropriate to study reversible processes. %

A reversible bisimulation~\cite{Lanese2010} is more adapted but it is not contextual.
We introduce a contextual equivalence on RCCS by adapting the notions of contexts and barbs to the reversible setting.
The resulting relation, called \emph{barbed back-and-forth congruence} is defined similarly to the barbed congruence of CCS except that the backwards reductions are also observed.

Configuration structures provide a causal semantics for CCS.
Equivalences on configuration structures are more discriminating than the ones on the operational setting. %
It is possible to move up and down in the lattice, whereas in the operational semantics, only forward transitions have to be simulated.
As an example, consider the processes \(a \mid b\) and \(a.b+b.a\) that are bisimilar in CCS but whose causal relations between events differ.

In particular we are interested in hereditary history preserving bisimulation (HHPB) in \autoref{def:hhpb}, which equates configuration structures that simulate each others' forward and backward moves.
Phillips and Ulidowski~\cite{Phillips2012} showed that the back-and-forth bisimulation corresponds to HHPB, that can be defined in an operational setting thanks to reversibility.
Allowing both forward and backward transitions gives to the operational world the discriminating power of causal models.
We show that HHPB also corresponds to a congruence on RCCS, the barbed back-and-forth congruence.
It is the a contextual characterisation of HHPB which implies a contextual equivalences in configuration structures.

\subsection*{Outline}
We begin by recalling notions on LTS and CCS, as well as their so-called reversible variants (\autoref{sec:syntax}).
RCCS (\autoref{sec:rccs}) is then proven to be a conservative extension of CCS over the traces: their is a strong bisimulation between a reversible process and a \enquote{classical}, memory-less, process (\autoref{lem:corresp_ccs_rccs}).
Lastly, we adapt the usual CCS notions of contexts, barbs, and barbed congruence to RCCS (\autoref{sec:contextual-equiv-rccs}), thus introducing the back-and-forth barbed congruence (\autoref{def:sbfc_rccs}).

We next introduce the interpretation of reversible process on configuration structures (\autoref{sec:semantics}).
We recall the classical definitions (\autoref{sec:st_fam}) as well as the encoding of CCS terms in configuration structures (\autoref{sec:causal_ccs}).
Encoding of RCCS terms is built on top of it (\autoref{sec:stfam_rccs}), and an operational correspondence between reversible processes and their interpretations is proven (\autoref{lem:operational_corresp}).

Finally, we introduce a notion of context for configuration structures (\autoref{sec:context_stfam}) and study the relation induced on configuration structures by the barbed back-and-forth congruence (\autoref{sec:barbed_congruence_stfam}).
In \autoref{sec:inductive_hhpb} we define the hereditary history preserving bisimilarity and provide a characterisation by inductive relations.
Lastly, we show in \autoref{sec:contextual_hhpb} that HHPB is a congruence (\autoref{prop:HHPB_congr}) and that whenever two configuration structures are barbed back-and-forth congruent, they also are hereditary history preserving bisimilar (\autoref{main-thm}).

Our main contribution is proving that barbed congruence in RCCS corresponds to hereditary history preserving bisimulation, which is defined on configuration structures
As a consequence, it provides a contextual characterization of equivalences defined in non-interleaving semantics.

\subsection*{Limitations}
Our work is restrained to processes that forbid \enquote{auto-concurrency} and \enquote{auto-conflict} (\autoref{rem-concur}).
We do not cover recursion, though a treatment of recursion in configuration structures exists~\cite{Winskel1986}.
\enquote{Irreversible} action is a feature of RCCS~\cite{Danos2004} that is absent of our work.

We tried to stick to canonical notations and to remind of common definitions.
However, we consider the reader familiar with the syntax, congruence relation and reduction rules of CCS.
If not, a quick glance at a textbook~\cite{Milner1989} or at lectures notes~\cite{Amadio2014} should help the reader uneasy with them.

\section{Contextual equivalences in reversibility}
\label{sec:syntax}
Reversibility provides an implicit mechanism to undo computations.
Interleaving semantics use a Labeled Transition System (LTS) to represent computations, henceforth refered to as the \emph{forward} LTS.
In a reversible semantics a second LTS is defined that represents the \emph{backward} moves (\autoref{sec:lts_gen}).

RCCS~\cite{Danos2004, Danos2005, Krivine2006} (\autoref{sec:rccs}) is a reversible variant of CCS, that allows computations to \emph{backtrack}, hence introducing the notions of \emph{forward} and \emph{backward} transitions.
Memories attached to processes store the relevant information to eventually do backward steps.
Without this memory, RCCS terms are essentially CCS terms (\autoref{lem:corresp_ccs_rccs}), but their presence forces to be precise when defining contexts and contextual equivalence for the reversible case (\autoref{sec:contextual-equiv-rccs}).

\subsection{(Reversible) labelled transition systems}
\label{sec:lts_gen}
A labelled transition system is a multi-graph where the nodes are called \emph{states} and the edges, \emph{transitions}.
Transitions are labelled by \emph{actions} and may be fired non-deterministically.

\begin{definition}[Labelled Transition System]
	A \emph{labelled transition system} is a tuple \((\to,S, \Act)\) made of a set \(S\) of \emph{states}, a set \(\Act\) of \emph{actions} (or labels) and a relation \(\to\subseteq S\times \Act\times S\).

	For \(s, s'\in S\) and \(a, b \in \Act \) , we write \(s\redl{a}s'\) for \((s,a,s')\in\to\) and \(s\to s'\) if \(s\redl{a}s'\) for some \(a\in \Act\).

	Elements \(t:s\redl{a}s'\) of \(\red\) are called transitions.
	Two transitions, \(t\) and \(t'\) are \emph{composable}, written \(t;t'\), if the target of \(t\) is the source of \(t'\).
	The empty trace is denoted \(\tempty\).
\end{definition}

\begin{definition}[Trace]
	\label{def:trace}
	A \emph{trace}, denoted by \(\sigma:t_1;\dots;t_n\) is a sequence of composable transitions.
	Except for the empty trace, all traces have a source and a target.

	Define \(\red^{\star}\subseteq S\times \Act^{\star}\times S\) the \emph{reachability} relation as follows:
	\[ s\redl{\alpha_1} \dotsb \redl{\alpha_n}s'\iff \begin{multlined}[t]
		\exists t_1,\dots,t_n\text{ and }s_1,\dots,s_{n+1}\text{ such that }\\
		t_i:s_i\redl{\alpha_i}s_{i+1}\text{ and }s_1=s, s_{n+1} = s'.
		\end{multlined}
	\]
	We say in that case that \(s'\) is \emph{reachable} from \(s\), that \(s'\) is a \emph{derivative} of \(s\), and that \(s\) is an \emph{ancestor} of \(s'\).
\end{definition}

\begin{definition}[Reversible LTS]
	\label{def:rlts}
	Given \((\fw,S, \Act)\) and \((\bw,S, \Act)\) two labelled transition systems defined on the same set of states and actions, we define \((\fbw, S, \Act)\) a third LTS by taking \(\fbw = \fw \cup \bw\).
	By convention, a transition \(s \fw t\) is said to be \emph{forward}, whereas a transition \(t \bw s\) is said to be \emph{backward}.
	In \(t \bw s\), \(s\) is an \emph{ancestor} of \(t\).
\end{definition}

A variety of semantically different backtracking mechanisms exists, for instance,
\begin{itemize}
	\item taking \(\bw=\emptyset\) models a language with only irreversible moves,
	\item in a sequential setting, if \(\fw\) draws a tree, taking \(\bw = \{(t, a, s) \setst s \redl{a} t\}\) forces the backward traces to follow exactly the forward execution.
\end{itemize}

In concurrency, backward traces are allowed if their source and target are respectively the target and source of a forward trace.

\subsection{Reversible CCS}
\label{sec:rccs}
A RCCS term, also called a \emph{monitored process}, is a CCS process equipped with a memory.
A \emph{thread} is a CCS term \(P\) guarded by a memory \(m\) and denoted \(\mproc{m}P\).
Processes can be composed of multiple threads.
The memory acts as a stack for the previous computations.
Each entry in the memory is called a \emph{(memory) event} and has a unique identifier.
The forward transitions push events to the memories while the backward moves pop them out.

\begin{definition}[Names, labels and actions]
	We define \(\names=\{a,b,c,\dots\}\) to be the set of \emph{names} and \(\out{\names}=\{\out{a},\out{b},\out{c},\dots\}\) its \emph{co-names}.
	The complement of a (co-)name is given by a bijection \(\out{[\cdot]}:\names \to \out{\names}\), whose inverse is also denoted by \(\out{[\cdot]}\), so that \(\out{\out{a}}=a\).

	A \emph{synchronisation} is a pair of names that complement each other, as \((a,\out{a})\), and that is denoted with the special symbol \(\tau\), whose complement is undefined.

	Actions are labelled using the set \(\labels = \names \cup \out{\names} \cup \{\tau\}\) of (event) labels defined by the following grammar:
	\begin{align}
		\names \cup \out{\names} : \lambda, \pi & \coloneqq a \BNFsepa \out{a} \BNFsepa \hdots \tag{CCS prefixes}               \\
		\labels : \alpha,\beta                  & \coloneqq \tau \BNFsepa a \BNFsepa \out{a} \BNFsepa \hdots \tag{Event labels}
	\end{align}
	As it is common, we will sometimes use \(a\) and \(b\) to range over names, and call the set of names and co-names simply the set of names.
\end{definition}

Transitions in both directions are decorated by the identifier of the associated event.
Identifiers on the (partial) events are used to remember their synchronisation partners.
Thus to combine into a \(\tau\), the transitions need complementary labels and the same identifier.

\paragraph{Grammar}
Consider the following \emph{process constructors}, also called \emph{combinators} or \emph{operators}:
\begin{align}
	e & \coloneqq \mem{i, \alpha, P} & \tag{memory events} \\
	m &\coloneqq \emptymem \BNFsepa \fork . m \BNFsepa e. m \tag{memory stacks}\\
	P,Q &\coloneqq \lambda.P \BNFsepa P\mid Q \BNFsepa \lambda.P+\pi.Q \BNFsepa P\bs a \BNFsepa 0 \tag{CCS processes}\\
	R, S & \coloneqq m \rhd P \BNFsepa R \mid S \BNFsepa R\bs a \tag{RCCS processes}
\end{align}
A (memory) event \(e=\mem{i, \alpha, P}\) is made of:
\begin{itemize}
	\item An event identifier \(i \in \ids\) that \emph{tags} transitions.
	      We may think of them as \texttt{pid}, in the sense that they are a centrally distributed identifier attached to each transition.
	\item A label \(\alpha\) that marks which action has been fired (in the case of a forward transition), or what action should be restored (in the case of a backward move).
	\item A backup of the whole process \(P\) that has been erased when firing a sum, or \(0\) otherwise.
\end{itemize}

In the memory stack, the fork symbol \(\fork\) marks a parallel composition.
The memory is then copied in two local memories, as despicted in the congruence rule called \enquote{\ref{distrib-memory}} in \autoref{def:congru_rccs}).

Lastly the null process, denoted \(0\), cannot perform any transition.
We will often omit it, so for example we write \(a\mid b\) instead of \(a.0\mid b.0\).

\begin{notations}
	\label{notation:ids}
	\begin{itemize}
		\item We use \(\integer\) for the set of \emph{event identifiers} \(\ids\) and let \(i,j,k\) range over elements of \(\ids\).
		      Forward and backward transitions will be tagged with such identifiers, and so we write \(\fwlts{i}{\alpha}\) and \(\bwlts{i}{\alpha}\).
		      We use \(\fbwlts{i}{\alpha}\) as a wildcard for \(\fwlts{i}{\alpha}\) or \(\bwlts{i}{\alpha}\), and if there are indices \(i_1, \hdots, i_n\) and labels \(\alpha_1, \hdots, \alpha_n\) such that \(R_1 \fbwlts{i_1}{\alpha_1} \dotsb \fbwlts{i_n}{\alpha_n} R_n\), then we write \(R_1 \fbw^{\star} R_n\).
		      We sometimes omit the identifier or the label in the transition.
		\item For \(R\) a reversible process and \(m\) a memory, we denote \(\ids(m)\) (\resp \(\ids(R)\)) the set of identifiers occurring in \(m\) (\resp in \(R\)).
		\item The sets \(\nm{R}\) of names in \(R\), \(\fn{R}\) of free names in \(R\) and \(\bn{R}=\nm{R}\setminus{\fn{R}}\) of bound (or private) names in \(R\) are defined by extending the definition of free names on CCS terms to memories and RCCS terms:
		      \begin{align*}
		      	              &
		      	\begin{aligned}
		      	\fn{P\bs a}   & =\fn{P}\setminus{\{a\}}         \\
		      	\fn{a.P}      & =\fn{\out{a}.P}=\{a\}\cup\fn{P} \\
		      	\fn{P\mid Q}  & =\fn{P+Q}=\fn{P}\cup\fn{Q}      \\
		      	\fn{0}        & =\emptyset
		      	\end{aligned}
		      	\tag{CCS rules}\\
		      	\\&
		      	\begin{aligned}
		      	\fn{R\bs a}   & =\fn{R}\setminus{\{a\}}         \\
		      	\fn{R\mid S}  & =\fn{R}\cup\fn{S}               \\
		      	\fn{m \rhd P} & =\fn{P}
		      	\end{aligned}
		      	\tag{RCCS rules}
		      \end{align*}
	\end{itemize}
\end{notations}

\begin{remark}[On recording the past]
	\label{rm:past}
	To store the information needed to backtrack, RCCS attaches local memories to each thread.
	CCSK~\cite{Phillips2007}, a variant of CCS, simulates reductions by movings a pointer in the term, that is left unchanged.
	Reversible higher-order \(\pi\)~\cite{Lanese2010} uses a centralised, global memory to store the process before a reduction.
	Keys are associated to each reduction, thus reverting a transition with key \(k\) consists in restoring the process associated to \(k\) from the global memory.
	The exact mechanism used for recording does not have an impact on the theory except for the structural rules, as we note in \autoref{rk:congr}.
\end{remark}

The labelled transition system for RCCS is given by the rules of \autoref{fig:lts_rccs}.

\begin{figure}
	\begin{minipage}[b]{.99\linewidth}
		\centering
		\begin{tabular}{c}
			                                                                                                                                                        \\[1em]
			\begin{prooftree}
			\Infer[left label = {\drule{In\(+\)}}, right label = {\(i\notin\ids(m)\)}]{0}{\mproc{m}{a.P + Q}\fwlts{i}{a} \mproc{\mem{i,a,Q}.m}P}
			\end{prooftree}
			                                                                                                                                                        \\[2em]
			\begin{prooftree}
			\Infer[left label = {\drule{Out\(+\)}}, right label = {\(i\notin\ids(m)\)}]{0}{\mproc{m}{\out{a}.P + Q}\fwlts{i}{\out{a}} \mproc{\mem{i,\out{a},Q}.m}P}
			\end{prooftree}
			                                                                                                                                                        \\[2em]
			\begin{prooftree}
			\Infer[left label = {\drule{In\(-\)}}, right label = {\(i\notin\ids(m)\)}]{0}{\mproc{\mem{i,a,Q}.m}P\bwlts{i}{a} \mproc{m}{a.P + Q}}
			\end{prooftree}
			                                                                                                                                                        \\[2em]
			\begin{prooftree}
			\Infer[left label = {\drule{Out\(-\)}}, right label = {\(i\notin\ids(m)\)}]{0}{\mproc{\mem{i,\out{a},Q}.m}P\bwlts{i}{\out{a}} \mproc{m}{\out{a}.P + Q}}
			\end{prooftree}
			                                                                                                                                                        \\[2em]
		\end{tabular}
		\subcaption{Prefix and sum rules}\label{fig:lts_a}
	\end{minipage}

	\vspace{2em}

	\begin{minipage}[b]{.99\linewidth}
		\centering
		\begin{tabular}{c c}
			\\[1em]
			\begin{prooftree}
			\Hypo{R \fwlts{i}{\alpha} R' \quad S \fwlts{i}{\out{\alpha}} S'}
			\Infer[left label = {\drule{Com\(+\)}}]{1}{R \mid S \fwlts{i}{\tau} R' \mid S'}
			\end{prooftree}
			  &
			\begin{prooftree}
			\Hypo{R \fbwlts{i}{\alpha} R'}
			\Infer[left label = {\drule{ParL}}, right label = {\(i\notin\ids(S)\)}]{1}{R \mid S \fbwlts{i}{\alpha} R' \mid S}
			\end{prooftree}
			\\[2em]

			\begin{prooftree}
			\Hypo{R \bwlts{i}{\alpha} R' \quad S \bwlts{i}{\out{\alpha}} S'}
			\Infer[left label = {\drule{Com\(-\)}}]{1}{R \mid S \bwlts{i}{\tau} R' \mid S'}
			\end{prooftree}
			  &
			\begin{prooftree}
			\Hypo{R \fbwlts{i}{\alpha} R'}
			\Infer[left label = {\drule{ParR}}, right label = {\(i\notin\ids(S)\)}]{1}{S \mid R \fbwlts{i}{\alpha} S \mid R'}
			\end{prooftree}
			\\[2em]
		\end{tabular}
		\subcaption{Parallel constructions}\label{fig:lts_c}
	\end{minipage}

	\vspace{2em}

	\begin{minipage}[b]{.99\linewidth}
		\centering
		\begin{tabular}{c c}
			\\[1em]
			\begin{prooftree}
			\Hypo{R \fbwlts{i}{\alpha} R'}
			\Infer[left label = {\drule{Hide}}, right label = {\(a \notin \{\alpha, \out{\alpha}\}\)}]{1}{ R\bs a \fbwlts{i}{\alpha} R'\bs a}
			\end{prooftree}
			  &
			\begin{prooftree}
			\Hypo{R \congru R' \fbwlts{i}{\alpha} S' \congru S}
			\Infer[left label = {\drule{Congr}}]{1}{R \fbwlts{i}{\alpha} S}
			\end{prooftree}
			\\[2em]
		\end{tabular}
		\subcaption{Hiding and congruence}\label{fig:lts_b}
	\end{minipage}
	\caption{Rules of the RCCS LTS}
	\label{fig:lts_rccs}
\end{figure}

The prefix constructor \(a.P\) stands for sequential composition, the process interacts on \(a\) before continuing with \(P\).
Rules \drule{In\(+\)} (for the input) and \drule{Out\(+\)} (for the output) consumes a prefix by adding in the memory the corresponding event.
The backward moves, described by the rules \drule{In\(-\)} and \drule{Out\(-\)}, remove an event at the top of a memory and restores the prefix and the non-deterministic sum.
Those rules are presented with a (guarded) sum, but we consider for instance \(\emptymem \rhd a.P \fwlts{1}{a} \mem{1,a,0}. \emptymem \rhd P\) to be a legal transition, taking \(P + 0\) (which is not syntactically correct) to be \(P\).

Parallel composition \(P\mid Q\) employ the four rules of \autoref{fig:lts_c} to derive a transition.
Rules \drule{Com\(+\)} and \drule{Com\(-\)} depicts two process agreeing to synchronize or to undo a synchronization by providing two dual prefixes\footnote{Notice that since the complement of \(\tau\) is not defined, only inputs and outputs synchronize.}, agreeing on the event identifier and triggering the transitions simultaneously.
Rules \drule{ParL} and \drule{ParR} allow respectively the left or the right process to compute independently of the rest of the process.
In those two later rules, the side condition \(i \notin \ids(S)\) ensures, in the forward direction, the uniqueness of the event identifiers and it prevents, in the backward direction, a part of a previous synchronisation to backtrack alone.

Once the name \(a\) is \enquote{hidden in \(P\)}, that is, made private to the process \(P\), it cannot be used to interact with the environment.
This situation is denoted with \(P\bs a\) and illustrated in rule \drule{Hide}.
Whenever the private name \(a\) is encountered in the environment, \(\alpha\)-renaming of \(a\) is done inside \(P\):
\[P\bs a =_{\alpha} (P\substs{b/a})\bs b\]
where \(P\substs{b/a}\) stands for process \(P\) in which \(b\) substitutes \(a\).
We say that the hiding operator is a binder for the private name \(a\).

The structural congruence, whose definition follows, is applied on terms by the rule \drule{Congr}.
It is built on top of some of the corresponding rules for CCS, and rewrites the terms under the memory or distributes it between two forking processes.

\begin{definition}[Structural congruence]
	\label{def:congru_rccs}
	Structural congruence on monitored processes is the smallest equivalence relation up to uniform renaming of identifiers generated by the:
	\begin{align}
		m \rhd (P + Q)       & \congru m \rhd (Q + P)                                                                              \\
		m \rhd ((P + Q) + R) & \congru m \rhd (P + (Q + R))                                                                        \\
		                     & \frac{P =_\alpha Q}{m \rhd P \congru m \rhd Q} \tag{\(\alpha\)-conversion}                          \\
		m \rhd (P \mid Q)    & \congru (\fork . m \rhd P \mid \fork . m \rhd Q)\tag{distribution memory}\label{distrib-memory} %
	\end{align}
	Adding a \emph{scope of restriction} rule \(m \rhd P\bs a \congru (m \rhd P)\bs a\) could be done at the price of a cumbersome definition of what a free name in a memory is.
\end{definition}

\begin{remark}[On reduction semantics]
	\label{rk:congr}
	Correctness criteria for reversible semantics mostly relate it with its \emph{only-forward} counterpart.
	However one may be interested in defining a reduction semantics for the LTS of \autoref{fig:lts_rccs} if only to relate RCCS with other reversible semantics for CCS.
	One, then needs a congruence relation on RCCS terms that has the monoid structure for parallel composition and the null process \(0\).
	However, due to the fork constructor, the associativity does not hold:
	\[(R_1\vert R_2)\vert R_3\not\congru R_1\vert (R_2\vert R_3).\]
	Other reversible calculi, in particular the reversible higher-order \(\pi\)-calculus~\cite{Lanese2010} fares better: its congruence relation respects associativity, thanks to a mechanism that uses bounded names for forking processes.
	Then \(\alpha\)-renaming can be applied on these forking names.

	Alternatively, one could use \enquote{at distance rewriting}~\cite{Accatoli2013} to bypass the lack of flexibility of our structural congruence.
\end{remark}

In RCCS not all syntactically correct processes have an operational meaning.
Consider for instance
\[\fork . \mem{i, \alpha, 0} . \emptymem \triangleright P \mid \emptymem \triangleright Q.\]
To make a backward transition, one should first apply the congruence rule called \enquote{\ref{distrib-memory}} and then look for a rule of the LTS to apply.
But this is impossible, since the memory on the right-hand side of the parallel operator does not contain a fork symbol (\(\fork\)) at its top.
The distributed memory does not agree on a common past, blocking the execution, but this term is correct from a syntactical point of view.
In the following, we will consider only the semantically correct terms, called \emph{coherent}.

\begin{definition}[Coherent process]
	\label{def:coherent_rccs}
	A RCCS process \(R\) is \emph{coherent} if there exists a CCS process \(P\) such that \(\emptymem \rhd P \fw^{\star} R\).
\end{definition}

Coherent terms are also called \emph{reachable}, as they are obtained by a forward execution from a process with an empty memory.
Coherence of terms can equivalently be defined in terms of coherence on memories~\cite[Definition~1]{Danos2004}.

Backtracking is non-deterministic because backtracking is possible on different threads.
However, it is noetherian and confluent as backward synchronisations are deterministic~\cite[Lemma~11]{Danos2005}.

\begin{lemma}[Unique origin]
	If \(R\) is a coherent process, then \(\forall R'\) such that \(R \congru R'\) or \(R \fbw R'\), then \(R'\) is also coherent.
	Up to structural congruence, there exists a unique process \(P\) such that \(R \bw^{\star} \emptyset \triangleright P\), we call it \emph{the origin of \(R\)} and denote it \(\orig{R}\).
\end{lemma}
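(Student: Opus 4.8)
The plan is to deduce everything from two ingredients: the elementary \emph{loop lemma}, and the fact --- recalled just above, and imported from \cite[Lemma~11]{Danos2005} --- that the backward reduction \(\bw\) is noetherian and confluent up to \(\congru\). For the loop lemma, inspecting \autoref{fig:lts_rccs} shows that each forward rule \drule{In\(+\)}, \drule{Out\(+\)}, \drule{Com\(+\)} is the exact mirror of the homonymous backward rule (same side conditions), while \drule{ParL}, \drule{ParR}, \drule{Hide} and \drule{Congr} are stated uniformly for \(\fbw\); hence \(R \fwlts{i}{\alpha} R'\) if and only if \(R' \bwlts{i}{\alpha} R\). Reversing a witnessing forward trace step by step, and conversely, this yields the convenient reformulation of \autoref{def:coherent_rccs} that I will use throughout: \emph{\(R\) is coherent iff \(R \bw^{\star} \emptyset \triangleright P\) for some CCS process \(P\)}.

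First I would establish coherence preservation, reading --- as the statement's \enquote{up to structural congruence} phrasing invites --- coherence as a property of \(\congru\)-classes. The case \(R \fw R'\) is immediate: append the transition to a witnessing forward trace. For \(R \congru R'\): since \(\fbw\) is closed under \drule{Congr}, \(R\) and \(R'\) have the same one-step successors, so from \(R \bw^{\star} \emptyset \triangleright P\) one gets \(R' \bw^{\star} \emptyset \triangleright P\) (up to \(\congru\)). The only substantial case is \(R \bw R'\): by the reformulation \(R \bw^{\star} \emptyset \triangleright P\); since \(\bw\) is terminating and confluent modulo \(\congru\), \(R\) has a unique \(\bw\)-normal form modulo \(\congru\), and \(\emptyset \triangleright P\) is such a normal form (no backward rule applies --- see below); the one-step reduct \(R'\) therefore also reduces to it, so \(R' \bw^{\star} \emptyset \triangleright P'\) with \(\emptyset \triangleright P' \congru \emptyset \triangleright P\), \ie \(R'\) is coherent.

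For the origin, existence is exactly the reformulation: a coherent \(R\) reaches some \(\emptyset \triangleright P\) by backward steps. Uniqueness up to \(\congru\) rests on the observation that a monitored process \(\emptyset \triangleright P\), with \(P\) a CCS process, is a \(\bw\)-normal form. Indeed \drule{In\(-\)}, \drule{Out\(-\)} need a memory event at the top of the stack, \drule{Com\(-\)} needs both components to backtrack, and \drule{ParL}, \drule{ParR}, \drule{Hide} only propagate such a move; as for \drule{Congr}, the only congruences applicable to \(\emptyset \triangleright P\) --- sum commutativity and associativity, \(\alpha\)-conversion, and \ref{distrib-memory} (which at most replaces \(\emptyset\) by \(\fork\)-headed stacks) --- never expose a memory event, which gives the claim by an easy induction on \(P\). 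Consequently, if \(R \bw^{\star} \emptyset \triangleright P\) and \(R \bw^{\star} \emptyset \triangleright P'\), both are \(\bw\)-normal forms reachable from \(R\), hence equal modulo \(\congru\) by confluence-plus-termination; so \(P\) is determined up to structural congruence and the notation \(\orig{R}\) is justified.

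I expect the bookkeeping around \drule{Congr} and \ref{distrib-memory} to be the only finicky part --- precisely, verifying that empty-memory (equivalently, \(\fork\)-headed) monitored processes are exactly the \(\bw\)-normal forms occurring among coherent processes, and that the whole development is invariant under \(\congru\). The substantive input, termination and confluence of \(\bw\) modulo \(\congru\), is not re-proved here but quoted from \cite[Lemma~11]{Danos2005}; once it is granted, the argument above is essentially assembly around the loop lemma.
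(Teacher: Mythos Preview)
The paper does not give an explicit proof of this lemma: it merely records, in the sentence preceding the statement, that backward reduction is noetherian and confluent \cite[Lemma~11]{Danos2005}, and then states the lemma as an immediate consequence. Your proposal is correct and makes explicit precisely the argument the paper leaves to the reader: the loop lemma to reformulate coherence as backward reachability of an empty-memory process, termination plus confluence of \(\bw\) modulo \(\congru\) to obtain a unique normal form, and the observation that \(\emptyset \triangleright P\) is such a normal form. There is nothing to compare approaches against; your write-up is a faithful and careful elaboration of what the paper only hints at.
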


Lastly, we recall a useful result, that asserts that every reversible trace can be rearranged as a sequence of only-backward moves followed by a sequence of only-forward moves.

\begin{lemma}[Parabolic traces, {\cite[Lemma 10]{Danos2004}}]
	\label{lem:rearrange_trace}
	If $R \fbw \cdots \fbw S$, then there exists \(R'\) such that \(R \bw^{\star} R'\fw^{\star} S\).
\end{lemma}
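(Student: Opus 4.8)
The plan is to normalise the given mixed trace by repeatedly commuting backward moves to the left of forward moves, the essential ingredient being a local commutation (\emph{square}) lemma, together with a well-founded measure on the forward/backward pattern of the trace that guarantees termination.

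First I would establish the square lemma: if $R \fwlts{i}{\alpha} R_1 \bwlts{j}{\beta} R_2$, then either $i = j$, in which case $\alpha = \beta$ and $R_2 \congru R$ --- the memory event tagged $i$ popped by the backward step must be the one just pushed by the forward step, since identifiers are unique in a coherent process (this is what the side conditions of \drule{ParL} and \drule{ParR} maintain), so the two steps cancel --- or $i \neq j$, in which case there is $R_1'$ with $R \bwlts{j}{\beta} R_1' \fwlts{i}{\alpha} R_2$. The latter is obtained by inspecting the rules of \autoref{fig:lts_rccs}: because $i \neq j$, the event popped by the backward step sits at the top of a memory in thread(s) distinct from the thread(s) whose topmost event is $i$, so the two steps act on disjoint threads; backing up event $j$ is already enabled at $R$, pushing event $i$ is still enabled afterwards, and the two resulting terms agree up to \drule{Congr}.

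Then I would run the induction. Given a trace $t_1;\dots;t_n$ of $\fbw$-transitions from $R$ to $S$, let $m$ be the number of \emph{inversions}, \ie of pairs $k < l$ with $t_k$ forward and $t_l$ backward. If $m = 0$ then no forward transition is immediately followed by a backward one, and a short combinatorial argument shows the trace then has the shape $\bw^{\star}\fw^{\star}$, so $R'$ is the intermediate state and we are done. If $m > 0$, choose $k$ with $t_k$ forward and $t_{k+1}$ backward and apply the square lemma to this pair. In the cancellation case we obtain a trace of length $n-2$ from $R$ to $S$ (the pieces $t_1;\dots;t_{k-1}$ and $t_{k+2};\dots;t_n$ glue together, \drule{Congr} absorbing the congruence supplied by the lemma), which strictly decreases $m$. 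In the swap case we obtain a trace of length $n$ from $R$ to $S$ whose forward/backward pattern differs from the original only by replacing an adjacent \enquote{forward, backward} with \enquote{backward, forward}; a one-line count shows this removes exactly one inversion, the contributions of the pairs $(j,k)$ and $(j,k+1)$ for $j<k$ cancelling. Either way $m$ strictly decreases, so well-founded induction on $m$ concludes.

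The crux is the square lemma in the case $i \neq j$: one has to go through each pairing of a forward rule (\drule{In\(+\)}, \drule{Out\(+\)}, \drule{Com\(+\)}, possibly below \drule{ParL}, \drule{ParR}, \drule{Hide}, \drule{Congr}) with a backward rule and check both the disjoint-threads claim and the reassembly up to structural congruence. It is exactly the uniqueness of identifiers in coherent processes (\autoref{def:coherent_rccs}) that makes disjointness hold; treating \drule{Congr} uniformly --- congruence steps are silent and fold into the adjacent transition --- dispatches the remaining bookkeeping.
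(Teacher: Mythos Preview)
Your proposal is correct and follows exactly the standard proof of this result as given in the cited reference~\cite[Lemma~10]{Danos2004}: a local \emph{square lemma} (forward followed by backward either cancel when the identifiers coincide, or swap when they differ) combined with a well-founded induction on the number of forward/backward inversions in the trace. The present paper does not supply its own proof of this lemma; it is stated with attribution to Danos and Krivine and used as a black box, so there is nothing further to compare against.
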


It is natural to wonder if our reversible syntax is a conservative extension of CCS.
We will make sure in the following that the forward rules in the reversible LTS correspond to the LTS of the natural semantics.

\begin{definition}[Map from RCCS to CCS]
	\label{def:rccs_ccs}
	We define inductively a map \(\erase{\cdot}\) from RCCS terms to CCS terms by erasing the memory:
	\begin{align*}
		\erase{m\rhd P} & =P &   &   & \erase{R\vert S} & =\erase{R}\vert\erase{S} &   &   & \erase{R\bs a} & =(\erase{R})\bs a
	\end{align*}
\end{definition}

In the following lemma, we denote \(\redl{\alpha}\) the standard rewriting rule on CCS terms.
\begin{lemma}[Strong \enquote{forward} bisimulation between \(R\) and \(\erase{R}\)]
	\label{lem:corresp_ccs_rccs}

	For all \(R\) and \(S\), \(R \fwlts{i}{\alpha} S\) for some \(i\) iff \( \erase{R} \redl{\alpha} \erase{S} \).
\end{lemma}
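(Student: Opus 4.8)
The plan is to prove both directions by induction on the derivation of the transition, exploiting the fact that the forward rules of the RCCS LTS in Figure~\ref{fig:lts_rccs} are in close correspondence with the standard CCS rules, the only difference being the bookkeeping in the memory (which $\erase{\cdot}$ discards) and the event identifier $i$ (which has no CCS counterpart). First I would observe that $\erase{\cdot}$ is surjective and that it is compatible with structural congruence: if $R \congru R'$ then $\erase{R} \congru \erase{R'}$ in CCS, since every generating equation of \autoref{def:congru_rccs} either acts purely under the memory (the sum-commutativity, sum-associativity and $\alpha$-conversion rules all project to a CCS congruence step or to an identity) or is the \ref{distrib-memory} rule, whose two sides both erase to $\erase{P} \mid \erase{Q}$. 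This observation is what lets the \drule{Congr} rule be handled uniformly in the induction.

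For the forward direction ($R \fwlts{i}{\alpha} S$ implies $\erase{R} \redl{\alpha} \erase{S}$), I would induct on the derivation. The base cases are \drule{In$+$} and \drule{Out$+$}: here $R \congru m \rhd (a.P + Q)$ steps to $m \rhd \mem{i,a,Q}.m \rhd P$ wait — to $\mem{i,a,Q}.m \rhd P$, and erasing gives $a.\erase{P} + \erase{Q} \redl{a} \erase{P}$, which is exactly the CCS prefix/sum rule (using the convention $P + 0 = P$ for the degenerate case). The inductive cases \drule{Com$+$}, \drule{ParL}, \drule{ParR}, \drule{Hide} each match a CCS rule of the same name after applying $\erase{\cdot}$ and the induction hypothesis; the side conditions $i \notin \ids(S)$ and $a \notin \{\alpha,\out\alpha\}$ are either vacuous after erasure or carry over verbatim. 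The \drule{Congr} case uses the congruence-compatibility observation above together with the fact that CCS itself is closed under structural congruence.

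For the converse ($\erase{R} \redl{\alpha} \erase{S}$ implies $R \fwlts{i}{\alpha} S$ for some $i$), I would again induct on the CCS derivation, reconstructing at each step an RCCS transition that picks a fresh identifier. The key point is that for any $R$ the set $\ids(R)$ is finite, so a fresh $i$ exists at every prefix-consuming step, and the \drule{ParL}/\drule{ParR} side conditions can always be met by choosing identifiers not yet used in the sibling thread. One subtlety: $\erase{R} \redl{\alpha} T$ only tells us $T = \erase{S}$ for the *target of some* RCCS transition — I would phrase the statement as constructing $S$ with $\erase{S} = T$ rather than assuming $S$ given, which matches the "for some $i$" quantifier in the lemma. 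The main obstacle, and the place I would spend the most care, is the interaction between \drule{Congr} and the reconstruction: a single CCS rewrite may correspond to an RCCS step only after re-bracketing parallel compositions or distributing memories via \ref{distrib-memory}, so the induction hypothesis must be applied modulo $\congru$ on both sides, and one must check that the memory produced is still coherent — though in fact coherence is not needed for the bisimulation statement itself, only the raw transition relation is compared, which simplifies this point considerably.
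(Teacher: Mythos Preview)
The paper states this lemma without proof, so there is no argument of its own to compare against. Your plan---induction on the transition derivation in each direction, using the one-to-one correspondence between the forward RCCS rules of \autoref{fig:lts_rccs} and the standard CCS rules, plus the observation that $\erase{\cdot}$ collapses every generating equation of \autoref{def:congru_rccs} to a CCS congruence or an identity---is the expected proof and is correct.

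You have also put your finger on the one genuine subtlety: read literally, the $\Leftarrow$ direction quantifies over a fixed $S$ and claims $\erase{R}\redl{\alpha}\erase{S}$ forces $R\fwlts{i}{\alpha}S$, which is false (take $R=\emptymem\rhd a.0$ and any $S$ with $\erase{S}=0$ but the wrong memory). The lemma is meant as a bisimulation statement, and your reformulation---from $\erase{R}\redl{\alpha}P'$ construct $i$ and $S$ with $R\fwlts{i}{\alpha}S$ and $\erase{S}=P'$, choosing $i$ fresh for $\ids(R)$ so that the side conditions of \drule{ParL}/\drule{ParR} are met---is the right way to make it precise.
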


\subsection{Contextual equivalences}
\label{sec:contextual-equiv-rccs}

Contextual equivalence for CCS terms~\cite{Milner1992} is now standard, but its extension to RCCS is not straightforward, since contexts needs to be properly defined (\autoref{def:rccs_context}).
As usual, reductions are part of the observables, but observing only them results in a too coarse relation, and adding termination is not relevant in concurrency.
\emph{Barbed congruence} (\autoref{def:barbed_congr}) has proven to be the right notion for CCS, and we revisit it for RCCS terms.
We begin by recalling definitions of context and observables for CCS.

\begin{definition}[Context]
	\label{def:context}
	A context is a process with a hole \([\cdot]\) defined formally by the grammar:
	\[C[\cdot]\coloneqq [\cdot] \BNFsepa \lambda.C[\cdot] \BNFsepa P \mid C[\cdot] \BNFsepa C[\cdot]\bs a\]
\end{definition}

\begin{definition}[Barbs]
	\label{def:barb}
	Write \(P \downarrow_\alpha\) if there exists \(P'\) such that \(P \redl{\alpha} P'\).
\end{definition}

We now define a contextual equivalence where reductions and barbs are the observables.

\begin{definition}[Barbed congruence]
	\label{def:barbed_congr}
	The \emph{barbed bisimulation} is a symmetric relation \({\rel}\) on CCS processes such that whenever \(P \rel Q\) the following holds:
	\begin{align}
		P\red P'      & \implies \exists Q' \text{ \st~} Q\red Q'\text{ and }Q\rel Q'\tag{closed by reduction} \\
		P\downarrow a & \implies Q\downarrow a\tag{barb preserving}
	\end{align}
	If there exists a barbed bisimulation between \(P\) and \(Q\) we write \(P \bbisim Q\) and say that \(P\) and \(Q\) are \emph{barbed bisimilar}.

	If \(\forall C[\cdot]\), \(C[P] \bbisim C[Q]\), we write \(P \scbisim Q\) and say that \(P\) and \(Q\) are \emph{barbed congruent}.
\end{definition}

An interesting proposition allows to restrict the grammar of contexts in the following.

\begin{proposition}
	\label{prop:context_that_counts}
	\(\forall a, P_1, P_2 , Q, \lambda, a\),
	\[
		P_1 \bbisim P_2 \implies
		\begin{cases}
			\lambda.P_1 \bbisim \lambda.P_2 \\
			P_1 \bs a\bbisim P_2 \bs a      \\
			P_1 + Q\bbisim P_2 + Q
		\end{cases}
	\]
\end{proposition}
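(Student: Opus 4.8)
The plan is to prove each of the three implications by exhibiting an explicit barbed bisimulation built from a given barbed bisimulation $\mathcal{R}$ witnessing $P_1 \bbisim P_2$. In each case the candidate relation is obtained by closing $\mathcal{R}$ under the relevant context former together with whatever derivatives that context former can produce, and then checking the two clauses of \autoref{def:barbed_congr}: closure under reduction and barb preservation.

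First I would handle the prefix case $\lambda.P_1 \bbisim \lambda.P_2$. Here the candidate relation is $\mathcal{R}' = \mathcal{R} \cup \{(\lambda.P_1, \lambda.P_2) \mid P_1 \rel P_2\}$. The key observation is that $\lambda.P_i$ has exactly one transition, namely $\lambda.P_i \redl{\lambda} P_i$, so any reduction $\lambda.P_1 \red P'$ (if $\lambda=\tau$, but in fact a prefixed process only offers the labelled transition, not a $\tau$-reduction unless $\lambda=\tau$) is matched symmetrically, landing in $(P_1,P_2)\in\mathcal{R}\subseteq\mathcal{R}'$; for barbs, $\lambda.P_i \downarrow_\alpha$ holds iff $\alpha=\lambda$, so barb preservation is immediate and does not even use $P_1 \rel P_2$. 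For the conjunction $C[P]\bbisim C[Q]$ claimed in \autoref{prop:context_that_counts} to be usable inductively later, the real content is just that $\mathcal{R}$ "survives" under the residual $(P_1,P_2)$.

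Next the restriction case $P_1 \bs a \bbisim P_2 \bs a$. The candidate is $\mathcal{R}' = \{(P\bs a, Q\bs a) \mid P \rel Q\}$. For closure under reduction, a reduction $P_1\bs a \red P'$ must, by the CCS rule for restriction (the \drule{Hide}-analogue), come from $P_1 \redl{\beta} P_1'$ with $\beta\notin\{a,\out a\}$ and $P' = P_1'\bs a$; since a $\tau$-reduction has $\beta=\tau\notin\{a,\out a\}$ automatically, $P_1 \red P_1'$, so $\mathcal{R}$ gives $Q_2'$ with $P_2 \red P_2'$ and $P_1'\rel P_2'$, whence $P_2\bs a \red P_2'\bs a$ and the pair is in $\mathcal{R}'$. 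For barbs, $P_1\bs a\downarrow_\alpha$ iff $P_1\downarrow_\alpha$ and $\alpha\notin\{a,\out a\}$; the side condition is symmetric, and $P_1\downarrow_\alpha \implies P_2\downarrow_\alpha$ comes from $\mathcal{R}$. Finally the sum case $P_1+Q \bbisim P_2+Q$ with candidate $\mathcal{R}' = \mathcal{R} \cup \{(P_1+Q, P_2+Q) \mid P_1\rel Q_2\} \cup \mathrm{Id}$ on all CCS processes (to absorb residuals of $Q$). A reduction of $P_1+Q$ is either a reduction inherited from $P_1$ — but a $\tau$-transition of a sum can only arise if one summand is itself prefixed by $\tau$, so we get $P_1 \red P_1'$ and use $\mathcal{R}$ — or a reduction inherited from $Q$, giving $Q\red Q'$ matched by $Q\red Q'$ with $(Q',Q')\in\mathrm{Id}$. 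Barbs: $P_1+Q\downarrow_\alpha$ iff $P_1\downarrow_\alpha$ or $Q\downarrow_\alpha$; in the first case use $\mathcal{R}$, in the second it is matched by $Q$ itself.

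The main obstacle — really the only subtle point — is making sure the candidate relations are genuinely \emph{barbed bisimulations} in the technical sense, i.e. that every reachable pair is covered, not just the "top" pairs: after one step the process may no longer have the syntactic shape $C[\,\cdot\,]$, so the relation must be closed under taking those derivatives. This is why each candidate above is padded with $\mathcal{R}$ itself and, in the sum case, with the identity relation. Once the relations are set up this way the verification of the two clauses is the routine case analysis sketched above, driven entirely by the shape of the CCS transition rules for prefix, restriction and sum, and using $\mathcal{R}$'s bisimulation property exactly at the point where the active part of the redex lies in the $P_i$ component.
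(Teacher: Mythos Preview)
Your proposal is correct and follows essentially the same approach as the paper: build explicit candidate bisimulations by closing a witnessing relation under each context former and check the reduction and barb clauses by case analysis on the CCS rules; for restriction the candidate is exactly your $\{(P\bs a,Q\bs a)\mid P\rel Q\}$, and for sum the paper uses the \emph{largest} barbed bisimulation (which already contains the identity) where you explicitly add $\mathrm{Id}$. The one simplification you miss is in the prefix case: in this calculus $\lambda$ ranges over $\names\cup\out\names$ only, never $\tau$, so $\lambda.P_i$ has no $\tau$-reduction whatsoever and the singleton $\{(\lambda.P_1,\lambda.P_2)\}$ is already a barbed bisimulation---no need to pad with $\mathcal{R}$ or worry about landing in $(P_1,P_2)$.
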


\begin{proof}
	\begin{enumerate}
		\item \(P_1 \bbisim P_2 \implies \lambda.P_1 \bbisim \lambda.P_2\).
		      From CCS's grammar, \(\lambda\neq\tau\), hence \(\nexists P_1', P_2'\) such that \(P \redl{\tau} P_1'\) or \(P_2 \redl{\tau} P_2'\).
		      The relation \(\{\lambda.P_1,\lambda.P_2\}\) is trivially a barbed bisimulation.
		\item \(P_1 \bbisim P_2 \implies P_1 \bs a \bbisim P_2 \bs a\).
		      Let us denote \({\rel_1}\) the largest barbed bisimulation for \(P_1\) and \(P_2\).
		      We show that the relation \({\rel_2}=\{P_1 \bs a, P_2 \bs a \setst P_1 \rel_1 P_2\}\) is a barbed bisimulation.
		      We have to show that:
		      \begin{itemize}
		      	\item \(\forall b\) such that \(P_1\bs a\downarrow \beta\) then \(P_2\bs a\downarrow \beta\).
		      	      It follows from \(P_1\bs a\downarrow \beta\implies P_1 \downarrow \beta\) and \(\beta\neq a\).
		      	\item \(P_1\bs a \redl{\tau} P_1'\) implies that \(P_2\bs a\redl{\tau} P_2'\) and \(P_1' \rel_2 P_2'\).
		      	      By structural induction on the transition \(P_1\bs a \redl{\tau} P_1'\) we have that \(\exists P_1''\) such that
		      	      \(P_1\redl{\tau} P_1''\) and \(P_1''\bs a=P_1'\).
		      	      As \(P_1 \rel_1 P_2\) there exists \(P_2''\) such that \(P_2 \redl{\tau}P_2''\) and we apply the rule \drule{Hide} we get \(P_2\bs a \redl{\tau} P_2''\bs a\).
		      	      Thus there exists \(P_2'=P_2''\bs a\) and \(P_1' \rel_2 P_2'\).
		      \end{itemize}
		      It follows similarly for the barbs and reductions on \(P_2\).

		\item \(P_1 \bbisim P_2 \implies P_1 + Q \bbisim P_2 + Q\).
		      Let us denote \({\rel_1}\) the largest barbed bisimulation for \(P_1\) and \(P_2\).
		      We show that the relation \({\rel_2}=\rel_1\cup\{P_1, P_2\}\) is a barbed bisimulation.
		      As above, we show that:
		      \begin{itemize}
		      	\item \(\forall\alpha\) such that \((P_1 + Q)\downarrow \alpha\) then \((P_2 + Q)\downarrow \alpha\).
		      	      It follows from the subcases \(P_1 \downarrow \alpha\) (hence \(P_2 \downarrow \alpha\)) or \(Q \downarrow \alpha\).
		      	\item \(P_1 + Q \redl{\tau} P_1'\).
		      	      From rules \drule{SumL} and \drule{SumR} either \(Q \redl{\tau} P_1'\) or \(P_1 \redl{\tau} P_1'\).
		      	      In the first case we deduce, using rule \drule{SumL} that \(P_1 + Q\redl{\tau}P_1'\), and therefore \(P_1' \rel_2 P_1'\).
		      	      In the second case, we apply rule \drule{SumR} and have that \(P_2 \redl{\tau}P_2'\) and \(P_1' \rel_1 P_2'\), which concludes our proof.
		      \end{itemize}
		      It follows similarly for the barbs and reductions on \(P_2\).\qedhere
	\end{enumerate}
\end{proof}

\begin{corollary}
	If a context \(C[\cdot]\) does not contain the parallel operator, then for all \(P\), \(Q\), \(C[P] \not\bbisim C[Q]\) implies \(P \not\bbisim Q\)
\end{corollary}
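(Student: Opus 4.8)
The plan is to prove the contrapositive of the contrapositive, i.e.\ to show directly that if $C[\cdot]$ is built from the combinators $[\cdot]$, $\lambda.C'[\cdot]$, $P\mid C'[\cdot]$, $C'[\cdot]\bs a$ but \emph{excludes} the parallel case with the hole inside $C'$, then $P\bbisim Q \implies C[P]\bbisim C[Q]$. Since the corollary hypothesis rules out parallel composition entirely, every context $C[\cdot]$ of this restricted shape is a finite stack of the three operations treated in \autoref{prop:context_that_counts}: prefixing $\lambda.{-}$, restriction ${-}\bs a$, and summation ${-}+Q$ (the last with $Q$ a closed process). So the natural route is induction on the structure of $C[\cdot]$.

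First I would handle the base case: if $C[\cdot]=[\cdot]$ then $C[P]=P\bbisim Q=C[Q]$ by hypothesis, trivially. Then for the inductive step I would consider the outermost operator of $C[\cdot]$. Write $C[\cdot]=D[C'[\cdot]]$ where $D[\cdot]$ is one of $\lambda.[\cdot]$, $[\cdot]\bs a$, $[\cdot]+Q$. By the induction hypothesis, $C'[P]\bbisim C'[Q]$. Now I apply the appropriate clause of \autoref{prop:context_that_counts} with $P_1:=C'[P]$ and $P_2:=C'[Q]$: the three clauses give exactly $\lambda.C'[P]\bbisim\lambda.C'[Q]$, $C'[P]\bs a\bbisim C'[Q]\bs a$, and $C'[P]+Q\bbisim C'[Q]+Q$ respectively, which is $C[P]\bbisim C[Q]$ in each case. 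This closes the induction.

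The one subtlety worth a sentence in the write-up is why the grammar of \autoref{def:context} forces this decomposition once parallel composition is banned: the only way a hole can occur strictly below a $\mid$ is via the clause $P\mid C'[\cdot]$, and the symmetric case $C'[\cdot]\mid P$ is absorbed by it up to structural congruence — so forbidding the parallel operator in $C[\cdot]$ leaves precisely prefix, restriction, and sum, each of which is covered. I would also note that $\bbisim$ is transitive and an equivalence (it is the union of all barbed bisimulations), which is what legitimises chaining the inductive hypothesis with the clause of the proposition, although here each step is a single application so transitivity is not even strictly needed.

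The main obstacle is essentially bookkeeping rather than mathematical depth: one must be careful that \autoref{prop:context_that_counts} is stated for the \emph{one-step} operators and that the context grammar really does admit the claimed normal form, and — if one wants full rigour — that $D[\cdot]+Q$ versus $Q+D[\cdot]$ are interchangeable (commutativity of $+$ up to $\congru$), and that a prefix $\lambda$ ranging over $\names\cup\out\names$ never equals $\tau$, which is what makes the prefix clause of the proposition go through. None of these is hard; the proof is short.

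\begin{proof}
	We prove the contrapositive: if $C[\cdot]$ does not contain the parallel operator and $P\bbisim Q$, then $C[P]\bbisim C[Q]$. From \autoref{def:context}, a context with no parallel operator is generated by the grammar $C[\cdot]\coloneqq[\cdot]\BNFsepa\lambda.C[\cdot]\BNFsepa C[\cdot]\bs a$, to which we may add the summation case $C[\cdot]+Q$ (writing $Q+C[\cdot]\congru C[\cdot]+Q$). We proceed by induction on the structure of $C[\cdot]$.

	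If $C[\cdot]=[\cdot]$, then $C[P]=P\bbisim Q=C[Q]$ by hypothesis.

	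Otherwise $C[\cdot]=D[C'[\cdot]]$ for a strictly smaller context $C'[\cdot]$ with no parallel operator and $D[\cdot]$ one of $\lambda.[\cdot]$, $[\cdot]\bs a$, or $[\cdot]+Q$. By the induction hypothesis, $C'[P]\bbisim C'[Q]$. Applying the corresponding clause of \autoref{prop:context_that_counts} with $P_1\coloneqq C'[P]$ and $P_2\coloneqq C'[Q]$ yields, respectively, $\lambda.C'[P]\bbisim\lambda.C'[Q]$, $C'[P]\bs a\bbisim C'[Q]\bs a$, or $C'[P]+Q\bbisim C'[Q]+Q$; in every case this is $C[P]\bbisim C[Q]$, which closes the induction.

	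Hence $C[P]\bbisim C[Q]$, and by contraposition $C[P]\not\bbisim C[Q]$ implies $P\not\bbisim Q$.
\end{proof}
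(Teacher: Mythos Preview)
Your proof is correct and matches the paper's intent: the paper gives no explicit proof, treating the corollary as immediate from \autoref{prop:context_that_counts}, and your structural induction on $C[\cdot]$ is exactly the argument being tacitly invoked. One minor point: the context grammar in \autoref{def:context} does \emph{not} include a sum clause, so the case $C[\cdot]+Q$ you add is unnecessary (though harmless, since the proposition covers it); a parallel-free context is generated solely by $[\cdot]$, $\lambda.C[\cdot]$, and $C[\cdot]\bs a$.
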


Stated differently, this implies that discriminating contexts regarding barbed congruence involve parallel composition.
As we will focus on this relation, we will only consider in the following the contexts to be parallel compositions:
\[C[\cdot]\coloneqq [\cdot] \BNFsepa P \mid [\cdot] \]

This is handy to define RCCS context, but some subtleties remain.
A context has to become an executable process regardless of the process instantiated with it.
We say that a context has a coherent memory, it may backtrack up to the context with an empty memory (similar to the \autoref{def:coherent_rccs} of coherent processes).
We distinguish three types of contexts, depending on their memories:
\begin{itemize}
	\item Contexts with an empty memory.
	\item Contexts with a non-empty memory that is coherent on its own\footnote{Up to minor addition of \(\fork\) symbols, as explained later on.}.
	      The process that we instantiate with it can be
	      \begin{itemize}
	      	\item incoherent\footnotemark[\value{footnote}], in which case we conjecture that the term obtained after instantiation is also incoherent,
	      	\item coherent on their own\footnotemark[\value{footnote}], in which case it is possible to backtrack the memory of the context up to the empty memory.
	      \end{itemize}
\end{itemize}
Hence \withoutlog we consider contexts without memory and contexts with coherent memories to be equivalent.
These are the types of contexts that we use throughtout the article.
However, a third case remains:
\begin{itemize}
	\item Contexts that have a non-coherent memory.
	      There exists incoherent terms whose instantiation with an incoherent context is coherent.
	      For instance, \(C=\mem{1,a,0}.\fork . \emptymem \rhd P\mid [\cdot]\) and \(R=\mem{1,\out{a},0}.\fork . \emptymem \rhd P'\) are incoherent individually, but \(C[R]\) is coherent and can backtrack to \(\fork . \emptymem \rhd P \mid \fork . \emptymem \rhd P'\).
	      We leave this case as future work.
\end{itemize}

The \enquote{up to minor addition of \(\fork\) symbols} comes from a simple consideration on the parallel composition in RCCS.
A process with an empty memory compose with a RCCS term if fork symbols are added to reflect the parallel composition.
For instance, two processes with an empty memory \(\emptymem \rhd P\) and \(\emptymem \rhd P'\) compose and we obtain
\[\fork.\emptymem \rhd P\mid \fork.\emptymem \rhd P'\congru \emptymem \rhd (P\mid P')\]
instead of \(\emptymem \rhd P\mid\emptymem \rhd P'\), an incoherent process.

We define a rewriting function on RCCS processes, that adds a fork symbol at the beginning of a memory.
It allows a process with a memory to compose with a context.

\begin{definition}[RCCS context]
	\label{def:rccs_context}
	Define \(\addfork(R)\) the operator that adds a fork symbol at the beginning of the memory of each thread in \(R\):
	\begin{align*}
		\addfork(R_1\mid R_2) & = \addfork(R_1)\mid\addfork(R_2)                       \\
		\addfork(R\bs a)      & = (\addfork(R))\bs a                                   \\
		\addfork(m\rhd P)     & = m'.\fork.\emptymem\rhd P\text{ where }m=m'.\emptymem \\
		\addfork(0)           & = 0
	\end{align*}

	Define \(\context[R]\) as follows
	\[
		\context[R] =
		\begin{cases}
			R                                       & \text{if } C[\cdot]=[\cdot]          \\
			\fork.\emptymem \rhd P \mid \addfork(R) & \text{if } C[\cdot] = P \mid [\cdot]
		\end{cases}
	\]
\end{definition}

RCCS context are basically CCS context with additional fork symbols in the memory of the context and in the memory of the process instantiated.
We now verify that \(\context[R]\) is a coherent process, using the function \(\erase{\cdot}\) that erases the memories from a term (\autoref{def:rccs_ccs}).

\begin{proposition}
	For all \(R\) and \(\context[\cdot]\), \(\emptymem\rhd C[\erase{\orig{R}}]\fw^{\star} \context[R]\).
\end{proposition}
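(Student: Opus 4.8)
The plan is to reason by case analysis on the shape of the context $C[\cdot]$, following the two-line case split in \autoref{def:rccs_context}, and to build the required forward execution by composing two ingredients: (i) a forward execution that realizes $\orig{R} \fw^{\star}$ (the process underlying $R$) — which exists by coherence of $R$, since $R$ is coherent and $\orig{R}$ is its origin by the Unique origin lemma — and (ii) the bookkeeping needed to push the $\fork$ symbols introduced by $\addfork$ through the trace, together with the memory of the context when $C[\cdot] = P \mid [\cdot]$. The key observation is that $\erase{\orig{R}}$ is precisely a CCS process $P_0$ with $\orig R = \emptymem \rhd P_0$, so $C[\erase{\orig R}]$ is an ordinary CCS term, and $\emptymem \rhd C[\erase{\orig R}]$ is a legitimate starting point.

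**The empty-context case.** When $C[\cdot] = [\cdot]$ we have $\context[R] = R$ and $C[\erase{\orig R}] = \erase{\orig R}$, so the statement reduces to $\emptymem \rhd \erase{\orig R} \fw^{\star} R$. But $\orig R = \emptymem \rhd \erase{\orig R}$ up to structural congruence, and by coherence of $R$ together with the Unique origin lemma there is a backward trace $R \bw^{\star} \orig R$; reversing it (every backward rule \drule{In-}, \drule{Out-}, \drule{Com-} has a forward counterpart \drule{In+}, \drule{Out+}, \drule{Com+} with matching identifier and label, closed under \drule{ParL}, \drule{ParR}, \drule{Hide}, \drule{Congr}) yields exactly $\emptymem \rhd \erase{\orig R} \fw^{\star} R$. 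This is essentially \autoref{def:coherent_rccs} read backwards, so this case is immediate.

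**The parallel case.** When $C[\cdot] = P \mid [\cdot]$ we must show $\emptymem \rhd \bigl(P \mid \erase{\orig R}\bigr) \fw^{\star} \fork.\emptymem \rhd P \mid \addfork(R)$. First apply the congruence rule \ref{distrib-memory} to get $\emptymem \rhd (P \mid \erase{\orig R}) \congru \fork.\emptymem \rhd P \mid \fork.\emptymem \rhd \erase{\orig R}$; by \drule{Congr} it suffices to drive the right-hand component $\fork.\emptymem \rhd \erase{\orig R} \fw^{\star} \addfork(R)$. Now take the forward trace $\emptymem \rhd \erase{\orig R} \fw^{\star} R$ from the empty-context case and replay it with $\fork.\emptymem$ in place of $\emptymem$ at the root. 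The point is that every rule in \autoref{fig:lts_rccs} is insensitive to what sits at the bottom of a memory stack: \drule{In+}/\drule{Out+} push onto the top, the side conditions $i \notin \ids(m)$ are unaffected (the identifiers used along the trace are disjoint from $\ids(\emptymem) = \varnothing = \ids(\fork.\emptymem)$), and the \ref{distrib-memory} rule copies whatever prefix is present, so a $\fork$ already at the bottom is faithfully propagated into both branches. Replaying the trace under $\fork.\emptymem$ therefore lands on exactly the term obtained from $R$ by inserting a $\fork$ at the bottom of every thread's memory, which is the definition of $\addfork(R)$.

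**Main obstacle.** The delicate step is the last one: making precise that "replaying the trace with a longer initial memory yields $\addfork$ of the result." This requires a small lemma — by induction on the length of the forward trace $\emptymem \rhd \erase{\orig R} \fw^{\star} R$ — stating that if $\emptymem \rhd Q \fw^{\star} R'$ then $\fork.\emptymem \rhd Q \fw^{\star} \addfork(R')$, and dually that $\addfork$ commutes with each transition rule. The subtlety is purely syntactic: one must check that $\addfork$ interacts correctly with the \ref{distrib-memory} rule (a $\fork$ at the bottom of $m$ stays at the bottom of each copy $\fork.m$) and with the \drule{Congr} rule (structural congruence is closed under the $\addfork$ rewriting, which follows because \ref{distrib-memory} and the sum/$\alpha$-conversion rules only touch the process or the top of the memory). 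Once this commutation lemma is in hand, the proposition follows by assembling the pieces above; I do not expect any genuine difficulty beyond this bookkeeping.
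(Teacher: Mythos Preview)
Your proof is correct and follows essentially the same route as the paper: case split on the shape of the context, use \ref{distrib-memory} to distribute the empty memory with a fork on each side, then replay the forward trace $\orig{R}\fw^{\star} R$ in the right component. The paper's proof is terser---it simply asserts that the trace $\orig{R}\fw^{\star} R$ lifts to $(\fork.\emptymem\rhd P')\fw^{\star}\addfork(R)$ without spelling out the commutation you identify as the ``main obstacle''---so your version is in fact more careful on exactly the point the paper glosses over.
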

\begin{proof}
	Let \(C[\cdot] = P \mid [\cdot]\) and \(\orig{R}= \emptymem\rhd P'\).
	By definition and application of the congruence rules, we have that
	\begin{align*}
		\emptymem\rhd C[\erase{\orig{R}}] & = \emptymem\rhd \big(P\mid \erase{\orig{R}}\big)              \\
		                                  & = \emptymem\rhd \big(P\mid P'\big)                            \\
		                                  & \congru (\fork.\emptymem\rhd P) \mid (\fork.\emptymem\rhd P')
	\end{align*}
	We have from the trace \(\orig{R}\red^{\star} R\) that
	\[
		(\fork.\emptymem\rhd P) \mid (\fork.\emptymem\rhd P') \red^{\star} (\fork.\emptymem\rhd P) \mid \addfork (R) = \context[R].	\qedhere
	\]
\end{proof}

\begin{example}
	Let \(R=\fork.m . \emptymem \rhd P_1\mid \fork.m . \emptymem \rhd P_2\) and \(C[\cdot] = P \mid [\cdot]\).
	Let us rewind \(R\) to its origin:
	\begin{align*}
		R & =\fork.m . \emptymem \rhd P_1\mid \fork.m . \emptymem \rhd P_2 \\
		  & \qquad \congru m . \emptymem \rhd (P_1\mid P_2)                \\
		  & \qquad \qquad \bw^{\star} \emptymem \rhd P'                    \\
		  & \qquad \qquad \qquad = \orig{R}
	\end{align*}
	We instantiate the context with \(\orig{R}\) and redo the execution from the origin of \(R\) up to \(R\):
	\begin{align*}
		\context[\orig{R}] & = (\fork.\emptymem\rhd P)\mid (\fork.\emptymem \rhd P')                                                                \\
		                   & \qquad \fw^{\star}(\fork.\emptymem\rhd P)\mid \big( (m.\fork.\emptymem \rhd P_1) \mid (m.\fork.\emptymem\rhd P_2)\big) \\
		                   & \qquad \qquad =(\fork.\emptymem\rhd P)\mid\addfork(R)                                                                  \\
		                   & \qquad \qquad \qquad= \context[R]
	\end{align*}
	Hence we have that \(\context[\orig{R}]\fw^{\star}\context[R]\).
\end{example}

Once this delicate notion of context for reversible process is settled, extending the CCS barbs (\autoref{def:barb}) and barbed congruence (\autoref{def:barbed_congr}) are straightforward.

\begin{definition}[RCCS barbs]
	\label{def:barb_rccs}
	We write \(R \downarrow_\alpha\) if there exists \(i \in I\) and \(R'\) such that \(R \fwlts{i}{\alpha} R'\).
\end{definition}

\begin{definition}[Back-and-forth barbed congruence]
	\label{def:sbfc_rccs}
	A \emph{back-and-forth bisimulation} is a symmetric relation on coherent processes \(\rel\) such that if \( R \rel S\), then
	\begin{align*}
		R \bwlts{i}{\tau} R' & \implies \exists S' \text{ \st~} S \bwlts{i}{\tau} S' \text{ and } R' \rel S'; \tag{back} \label{arriere} \\
		R\fwlts{i}{\tau} R'  & \implies \exists S' \text{ \st~} S \fwlts{i}{\tau} S'\text{ and } R' \rel S'; \tag{forth} \label{avant}
		\intertext{and it is a \emph{back-and-forth \emph{barbed} bisimulation} if, additionally,}
		R \downarrow_{a}     & \implies S \downarrow_{a}. \tag{barbed} \label{barbe}
	\end{align*}
	We write \( R \bfbisim S\) if there exists \(\rel\) a back-and-forth barbed bisimulation such that \( R\rel S\).

	The \emph{back-and-forth barbed congruence}, denoted \( R \bfcong S\), holds if for all context \(\context[\cdot]\), \(\context[\orig{R}] \bfbisim \context[\orig S]\).
\end{definition}

From the definition of \(R \bfcong S\), the following lemma trivially holds.
\begin{lemma}
	\label{lem-orig-sbfbc}
	For all \(R\) and \(S\), \(R \sbfbc S \implies \orig{R} \sbfbc \orig{S}\).
\end{lemma}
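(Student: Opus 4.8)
The statement $R \sbfbc S \implies \orig{R} \sbfbc \orig{S}$ follows almost immediately from the definition of $\sbfbc$ (Definition~\ref{def:sbfc_rccs}) together with the idempotence of the origin operator. First I would unfold the hypothesis: $R \sbfbc S$ means that for every context $\context[\cdot]$ we have $\context[\orig{R}] \bfbisim \context[\orig{S}]$. My goal is to establish the same statement with $R$ replaced by $\orig{R}$ and $S$ replaced by $\orig{S}$, i.e. that for every context $\context[\cdot]$, $\context[\orig{\orig{R}}] \bfbisim \context[\orig{\orig{S}}]$.

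The key observation is that $\orig{\orig{R}} = \orig{R}$ (and likewise for $S$): by the Unique Origin Lemma, $\orig{R}$ is the unique process reachable from $R$ by backward moves that has an empty memory, and it is of the form $\emptymem \triangleright P$; such a process admits no further backward transition, so rewinding it leaves it unchanged, whence its origin is itself. More formally, $\orig{R}$ is coherent, $\orig{R} \bw^{\star} \orig{R}$ trivially via the empty trace, and uniqueness in the Unique Origin Lemma forces $\orig{\orig{R}} = \orig{R}$ up to structural congruence. (One should note that $\bfbisim$ is defined on coherent processes and is insensitive to structural congruence via the rule \drule{Congr}, so working up to $\congru$ is harmless.)

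Combining these, for every context $\context[\cdot]$ we have $\context[\orig{\orig{R}}] = \context[\orig{R}]$ and $\context[\orig{\orig{S}}] = \context[\orig{S}]$, and the hypothesis $R \sbfbc S$ gives exactly $\context[\orig{R}] \bfbisim \context[\orig{S}]$. Hence $\context[\orig{\orig{R}}] \bfbisim \context[\orig{\orig{S}}]$ for all $\context[\cdot]$, which is by definition $\orig{R} \sbfbc \orig{S}$.

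There is no real obstacle here; the only point requiring a modicum of care is the appeal to idempotence of $\orig{\cdot}$, which must be justified from the Unique Origin Lemma rather than taken for granted, and the observation that the equalities in question hold only up to structural congruence — which is exactly why $\bfbisim$ was set up to be closed under $\congru$. This is why the paper remarks that the lemma "trivially holds."
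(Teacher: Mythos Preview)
Your proof is correct and takes essentially the same approach as the paper, which simply states that the lemma ``trivially holds'' from the definition of $\sbfbc$. Your unfolding of the definition together with the idempotence of $\orig{\cdot}$ is precisely the content behind that word ``trivially''.
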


However, the converse does not hold as \(R\) and \(S\) can be any derivative of the same origin, as illustrated below.

\begin{example}
	Let \(R = \mem{1, a, b.Q} . \emptyset \rhd P\) and \(S = \mem{2, b, a.P} . \emptyset \rhd Q\), with \(P \not\bfbisim Q\).
	We have that \(\orig{R} \bfcong \orig{S}\), as \(\orig{R} = \orig{S}\), but \(R \not\bfcong S\), as \(P \not\bfbisim Q\):
	\begin{center}
		\begin{tikzpicture} %
			\node (or) at (1,1) {\(\orig{R}\sbfbc \orig S\)};
			\node (r) at (1, -0.5) {\(\mem{1, a, b.Q} . \emptyset \rhd P \not\sbfbc \mem{2, b, a.P} . \emptyset \rhd Q\)};
			\draw [->] (or) -- node[left]{\small{\(1:a\)}~~} (-0.8,-0.3);
			\draw [->] (or) -- node[right]{~\small{\(2:b\)}} (2.8,-0.3);
		\end{tikzpicture}
	\end{center}
\end{example}

Note that even if the context is defined for any reversible process, we instantiate the context with processes with an empty memory in \autoref{def:sbfc_rccs}.
If instead we had defined \( R \bfcong S\) iff for all contexts, there exists \(\rel\) such that \(\context[R] \rel \context[S]\), then \autoref{lem-orig-sbfbc} would not hold.
We highlight this in the following example.

\begin{example}
	Let us consider the processes \(\emptymem\rhd a.P + Q\) and \(\emptymem \rhd a.P\) which can do a transition on \(a\) to obtain \(R=\mem{1,a, Q}\rhd P\) and \(S=\mem{1,a}\rhd P\).
	We have that \(R\) and \(S\) are back-and-forth barbed bisimular.
	As we are using contexts without memory, there is no context able to backtrack on \(a\).
	\begin{center}
		\begin{tikzpicture} %
			\node (or) at (-1,1) {\(\orig{R}=\emptymem\rhd a.P+Q\)};
			\node (rel1) at (1,1) {\(\not\bfbisim\)};
			\node (os) at (2.5,1) {\(\orig S =\emptymem\rhd a.P\)};
			\node (r) at (-1,0) {\(R =\mem{1,a,Q}\rhd P\)};
			\node (rel2) at (1,0) {\(\sbfbc\)};
			\node (s) at (2.5,0) {\(S =\mem{1,a}\rhd P\)};
			\draw [->] (or) -- node[right]{\small{\(1:a\)}} (r);
			\draw [->] (os) -- node[right]{\small{\(1:a\)}} (s);
		\end{tikzpicture}
	\end{center}
\end{example}

\begin{remark}[On backward barbs]
	\label{rk:back_barb}
	Let us informally argue that backward barbs are not an interesting addition to a contextual equivalence.
	One can always define ad-hoc barbs that potentially change the equivalence relations, however we end up with relations that have no practical meaning.
	We consider below another definition of barb~\cite[Definition 2.1.3]{Madiot2015}, which gives an intuitive reading and is not syntax-specific.

	Let the tick (\(\tick\notin \names\)) be a special symbol denoting termination.
	A \emph{barb} is an interaction with a context that can do a tick immediately after:
	\[P\downarrow_\alpha \iff P\mid \out{\alpha}.\tick\redl{\tau}Q\mid\tick \text{ for some }Q.\]

	Note that the definition above implies that (i) the barb is an interaction with a context that can terminate immediately after and (ii) the interaction \emph{blocks} the termination on the context side, \ie no further transition is possible on that side.

	If we try to apply this definition to a backward barb then the tick has to be in the memory of the context and blocked by another action, \ie the context has to be of the form \(C[\cdot] \congru [\cdot] \mid (\mem{i,\out{\alpha},0}.\mem{\tick}.\emptymem \rhd 0) \).
	This raises multiple problems:
	\begin{enumerate}
		\item Syntactically, \(\tick\) becomes a prefix, rather than a \enquote{terminal process}, \ie terms of the form \(\tick . a . P\) appear.
		      This contradict the intuition that this symbol stands for termination.
		\item In a situation where \(C[R] \bwlts{i}{\tau} R' \mid \mem{\tick}.\emptymem \rhd 0 \), the \(\tick\) symbol is not observable, and \(R'\) could continue its computation before \(\tick\) is popped from the context's memory.
		      So we would have to add the content of the memory to what is observable.
		      But in that case, one might as well look directly in the process' memory if a label is present.
		\item Lastly, defining the backward barb as the capability to do a backward step, and having immediately after the forward barb, seems to be equivalent to any reasonable definition of backward barb.
	\end{enumerate}

	Thus we argue that the backward barbs are a contrived notion.
\end{remark}

\section{Configuration structures as a model of reversibility}
\label{sec:semantics}
Causal models take causality and concurrency between events as primitives.
In configuration structures, configurations stands for computational states and the set inclusion represents the executions, so that in each state one can infer a local order on the events based on the set inclusion.
We introduce them and their categorical representation modeling operations from process (\autoref{sec:st_fam}).

One can also obtain a causal semantics of a process calculus, by decorating its LTS.
In \autoref{sec:causal_ccs} we briefly show how to interpret CCS terms in configuration structures and how to decorate its LTS to derive causal information from the transitions.

Lastly, we introduce configuration structure for a restricted class of RCCS process, called \emph{singly labelled} (\autoref{def:singlylabelled}).
They are essentially an address in the configuration structure of the underlying, \enquote{original} CCS term.
We then introduce the LTS of those configuration structures and prove their operational correspondence with the reversible syntax (\autoref{lem:bisim_stfam_ccs}).

\subsection{Configuration structures as a causal model}
\label{sec:st_fam}

Let \(E\) be a set, \(\subseteq\) be the usual set inclusion and \(C\) be a family of subsets of \(E\).
For \(X\subseteq C\), \(X\) is \emph{compatible}, denoted \(X\compa\), if \(\exists y\in C\) finite such that \(\forall x\in X\), \(x\subseteq y\).

\begin{definition} [Configuration structures]
	\label{def:conf_str}
	A \emph{configuration structure} \(\conf\) is a triple \((E,C,\subseteq)\) where \(E\) is a set of events, \(\subseteq\) is the set inclusion and \(C\subseteq\power(E)\) is a set of subsets satisfying:
	\begin{itemize}
		\item \emph{finiteness}:
		      \(\forall x \in C, \forall e \in x, \exists z \in C\) finite such that \( e\in z\) and \(z \subseteq x\),
		\item \emph{coincidence freeness}:
		      \(\forall x \in C, \forall e, e' \in x, e\neq e'\Rightarrow \big(\exists z \in C, z \subseteq x\text{ and }(e\in z \iff e'\notin z)\big)\),
		\item \emph{finite completness}:
		      \( \forall X \subseteq C \text{ if } X \compa \text{ then } \bigcup X \in C\),
		\item \emph{stability}:
		      \(\forall x,y\in C \text{ if }x\cup y\in C\text{ then }x\cap y\in C\).
	\end{itemize}
	We denote \(\confzero\) the configuration structure with \(E = \emptyset\).
\end{definition}

Intuitively, events are the actions occurring during the run of a process, while configurations represents computational states.
The first axiom, \emph{finiteness}, guarantees that for each event the set of causes is finite. \emph{Coincidence freeness} states that each computation step consists of a single event.
Axioms \emph{finite completness} and \emph{stability} are more abstract and are better explained on some examples.
Consider the structures in \autoref{fig:counterex_stfam}: the structure \ref{fig:counterex_stfam_a} does not satisfy the second axiom, as two events occur in a single step.
The structure \ref{fig:counterex_stfam_b} does not satisfy finite completeness.
Intuitively, configuration structures cannot capture \enquote{pairwise} concurrence.
Finally, the structure \ref{fig:counterex_stfam_c} does not satisfies stability and the intuition is that the causes of event \(e_3\) are \emph{either} \(e_1\) \emph{or} \(e_2\), but not both.

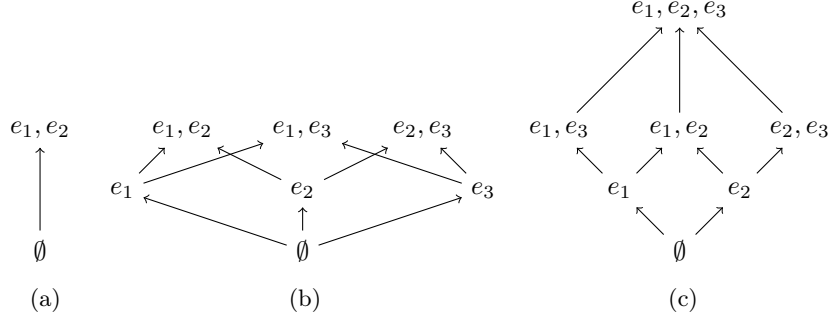
\begin{figure}
	{\centering
		\begin{minipage}[b]{.1\linewidth}
			\begin{tikzpicture}[scale=0.8]
				\node (emptyset) at (0,-1) {\(\emptyset\)};
				\node (ab) at (0,1) {\(e_1,e_2\)};
				\draw [->] (emptyset) -- (ab);
			\end{tikzpicture}
			\subcaption{}\label{fig:counterex_stfam_a}
		\end{minipage}
		\begin{minipage}[b]{.45\linewidth}
			\begin{tikzpicture}[scale=0.8]
				\node (emptyset) at (0,-1) {\(\emptyset\)};
				\node (a) at (-3,0) {\(e_1\)};
				\node (b) at (0,0) {\(e_2\)};
				\node (c) at (3,0) {\(e_3\)};
				\node (ab) at (-2, 1) {\(e_1,e_2\)};
				\node (ac) at (0,1) {\(e_1,e_3\)};
				\node (bc) at (2,1) {\(e_2,e_3\)};
				\draw [->] (emptyset) -- (a);
				\draw [->] (emptyset) -- (b);
				\draw [->] (emptyset) -- (c);
				\draw [->] (a) -- (ab);
				\draw [->] (a) -- (ac);
				\draw [->] (b) -- (bc);
				\draw [->] (b) -- (ab);
				\draw [->] (c) -- (bc);
				\draw [->] (c) -- (ac);
			\end{tikzpicture}
			\subcaption{}\label{fig:counterex_stfam_b}
		\end{minipage}
		\begin{minipage}[b]{.36\linewidth}
			\begin{tikzpicture}[scale=0.8]
				\node (emptyset) at (0,-1) {\(\emptyset\)};
				\node (b) at (-1,0) {\(e_1\)};
				\node (c) at (1,0) {\(e_2\)};
				\node (bcp) at (0, 1) {\(e_1,e_2\)};
				\node (ba) at (-2,1) {\(e_1,e_3\)};
				\node (ca) at (2,1) {\(e_2,e_3\)};
				\node[align=center] (bapc) at (0,3) {\(e_1,e_2,e_3\)};
				\draw [->] (emptyset) -- (b);
				\draw [->] (emptyset) -- (c);
				\draw [->] (b) -- (bcp);
				\draw [->] (c) -- (bcp);
				\draw [->] (b) -- (ba);
				\draw [->] (c) -- (ca);
				\draw [->] (ba) -- (bapc);
				\draw [->] (ca) -- (bapc);
				\draw [->] (bcp) -- (bapc);
			\end{tikzpicture}
			\subcaption{}\label{fig:counterex_stfam_c}
		\end{minipage}
		\caption{Structures that are not coincidence free, finite complete and stable, respectively}
		\label{fig:counterex_stfam}
	}
\end{figure}

\begin{notations}
	In a configuration \(\conf{C}\), if \(x, x' \in C\), \(e\in E \), \(e\notin x \) and \(x' = x \cup \{e\} \), then we write \(x\redl{e}x'\).
\end{notations}

\begin{definition}[Labelled configuration structure]
	A \emph{labelled configuration structure} \(\conf=(E,C,\labl)\) is a configuration structure endowed with a \emph{labelling function} from events to labels \(\labl:E\to\labels\).
\end{definition}
From now on, we will only consider configurations structures that are labelled, so we omit that adjective in the following.

Now we define morphisms on configurations structures that permits to form a category whose objects are configuration structures.
Intuitively, morphisms model the inclusion or refinement relations between processes.
Process algebras' operators are then extended to configuration structures, which makes it a modular model.

\begin{definition}[Category of configuration structures]
	\label{def:st_fam_morph}
	A morphism of configurations structures \(f:(E_1,C_1,\labl_1)\to(E_2,C_2,\labl_2)\) is a partial function on the underlying sets \(f:E_1\rightharpoonup E_2\) that is:
	\begin{itemize}
		\item \emph{configurations preserving}: \(\forall x\in C_1, f(x)=\{f(e) \setst e\in x\}\in C_2\),
		\item \emph{local injective}: \(\forall x\in C_1, \forall e_1, e_2\in x, f(e_1)=f(e_2) \implies e_1=e_2\),
		\item \emph{label preserving}: \(\forall x \in C_1, \forall e \in x, \labl_1(e)=\labl_2(f(e))\).
	\end{itemize}
	An isomorphism on configuration structures is denoted \(\iso\).
\end{definition}

\begin{definition}[Operations on configuration structures]
	\label{cat-op-def}
	Let \(\conf_1=(E_1,C_1,\labl_1)\), \(\conf_2=(E_2,C_2,\labl_2)\) be two configuration structures and set \(E^\star=E \cup \{\star\}\).
	\begin{description}
		\item[Product]
		Let \(\star\) denote \emph{undefined} for a partial function.
		Define \emph{the product of \(\conf_1\) and \(\conf_2\)} as \(\conf=\conf_1\times\conf_2\), for \(\conf=(E,C,\labl)\), where \(E=E_1\times_{\star} E_2\) is the product in the category of sets and partial functions%
		\footnote{The category of sets and partial functions has sets as objects and functions that can take the value \(\star\) as morphisms~\cite[Appendix A]{Winskel1982}.}:
		\[
			E_1\times_{\star} E_2 =
			\begin{multlined}[t]
				\{(e_1,\star)\mid e_1\in E_1\}\cup\{(\star,e_2)\mid e_2\in E_2\}\\
				\cup\{(e_1,e_2)\mid e_1\in E_1, e_2\in E_2\}
			\end{multlined}
		\]
		with the projections \(p_1:E\to E_1\cup\{\star\}\), \(p_2:E\to E_2\cup\{\star\}\).
		Define the projections \(\pi_1:(E,C)\to (E_1,C_1)\), \(\pi_2:(E,C)\to (E_2,C_2)\) such that the following holds, for \(e\in E\) and \(x\in C\):
		\begin{itemize}
			\item \(\pi_1(e)=p_1(e)\) and \(\pi_2(e)=p_2(e)\);
			\item \(\pi_1(x)\in C_1\) and \(\pi_2(x)\in C_2\);
			\item \(\forall e,e'\in x\), if \(\pi_1(e)=\pi_1(e')\neq\star\) or \(\pi_2(e)=\pi_2(e')\neq\star\) then \(e=e'\);
			\item \(\forall e \in x, \exists z\subseteq x\) finite s.t. \(\pi_1(x) \in C_1\), \(\pi_2(x)\in C_2\) and \(e\in z\);
			\item \(\forall e, e' \in x, e\neq e'\Rightarrow \exists z\subseteq x\) s.t. \(\pi_1(z) \in C_1\), \(\pi_2(z)\in C_2\) and \((e\in z \iff e'\notin z)\).
		\end{itemize}
		The labelling function \(\labl\) is defined as follows:
		\[
			\labl(e) = \begin{cases}
			\labl_1(e_1) & \text{ if }\pi_1(e)=e_1,\pi_2(e)=\star \\
			\labl_2(e_2) & \text{ if }\pi_1(e)=\star,\pi_2(e)=e_2 \\
			(\labl_1(e_1),\labl_2(e_2)) & \text{ otherwise}
			\end{cases}
		\]

		\item[Coproduct]
		Define \emph{the coproduct of \(\conf_1\) and \(\conf_2\)} as \(\conf=\conf_1+\conf_2\), for \(\conf=(E,C,\labl)\), where \(E=(\{1\}\times E_1)\cup(\{2\}\times E_2)\) and \(C=\{\{1\}\times x \setst x\in C_1\}\cup\{\{2\}\times x \setst x\in C_2\}\).
		The labelling function \(\labl\) is defined as \(\labl(e)=\labl_i(e_i)\) when \(e_i\in E_i\) and \(\pi_i(e_i)=e\).

		\item[Restriction]
		Let \(E'\subseteq E\) and define \emph{the restriction of a set of events} as \((E,C,\labl)\restr E'=(E',C',\labl')\) where \(x\in C'\iff x\in C\) and \(x\subseteq E'\). %

		\emph{The restriction of a name} is then \((E,C,\labl)\restr a \coloneqq (E,C,\labl)\restr E_a\) where \(E_a=\{e\in E \setst \labl(e) \in \{a, \out{a}\}\}\).
		For \(a_1, \hdots, a_n\) a list of names, we define similarly \(\restr \cup_{1 \leqslant i \leqslant n}E_{a_i}\).

		\item[Prefix]Let \(\lambda\) be the label of an event and define \emph{the prefix operation on configuration structures} as \(\alpha.(E,C,\labl)=(e\cup E,C',\labl')\), for \(e\notin E\) where \(x'\in C' \iff \exists x\in C\), \(x'=x\cup e\) and \(\labl'(e) = \alpha\), and \(\forall e' \neq e\), \(\labl'(e') = \labl(e')\).

		\item[Relabelling]
		Define \emph{the relabelling of a configuration structure} as \(\conf_1\circ\labl=(E_1,C_1,\labl)\), where \(\conf_1=(E_1,C_1,\labl_1)\) and \(\labl:E_1\to\labels\) is a labelling function.

		\item[Parallel composition]
		Define parallel composition \(\conf = \big((\conf_1\times\conf_2)\circ\labl\big)\restr E\) as the application of product, relabelling and restriction, with \(\labl\) and \(E\) defined below.
		\begin{itemize}
			\item First, \(\conf_1\times\conf_2 =\conf_3\) is the product with \(\conf_3=(E_3,C_3,\labl_3)\);
			\item Then, \(\conf'=\conf_3\circ\labl\) with \(\labl\) defined as follows:
			      \[
			      	\labl(e) = \begin{cases}
			      	\labl_3(e) & \text{ if }\labl_3(e)\in\{a;\out{a}\} \\
			      	\tau & \text{ if }\labl_3(e)\in\{(a,\out{a}); (\out{a},a)\}\\
			      	0 & \text{ otherwise }
			      	\end{cases}
			      \]
			\item Finally, \(\conf= (E_1 \times_{\star} E_2,C_3,\labl) \restr E\) is the resulted configuration structure, where \(E=\{e\in E_3 \setst \labl(e)\neq 0\}\).
		\end{itemize}
	\end{description}
\end{definition}

In the definition of the product, the conditions guarantee that \(\conf_1\times\conf_2\) is the product in the category of configuration structures and that the projections \(\pi_1\), \(\pi_2\) are morphisms.
In particular, the third condition ensures that the projections are local injective, the fourth and fifth enforce finiteness and coincidence-freeness axioms in the resulted configuration structure.

\begin{definition}[Causality]
	\label{def:causality}
	Let \(x\in C\) and \(e_1,e_2\in x\) for \((E,C,\labl)\) a configuration structure.
	Then we say that \emph{\(e_1\) happens before \(e_2\) in \(x\)} or that \emph{\(e_1\) causes \(e_2\) in \(x\)}, written \(e_1\leqslant_x e_2\), iff \(\forall x_2\in C, x_2\subseteq x, e_2\in x_2\implies e_1\in x_2\).
\end{definition}

Configurations can also be interpreted as \emph{temporal} observations~\cite[Chap.~5]{Cristescu2015}, instead of causal orders present in the structure of a term.
Refering to the order as \emph{happens before} instead of causality highlights the observational nature of the order.

Morphisms on configuration structures reflect causality.
Let \(f:\conf_1\to\conf_2\) a morphism and \(x\in C_1\) a configuration.
Then
\[ \forall e_1,e_2\in x,\text{ if } f(e_1) \leqslant_{f(x)} f(e_2)\text{ then }e_1\leqslant_x e_2.\]

However, morphisms do not preserve causality in general.
In the case of a product we can show that all \emph{immediate} causalities are due to one of the two configurations structures.
Stated differenlty, a context can add but cannot remove causality in the process.

\begin{definition}[Immediate causality]
	Let \(e,e'\) be two events in a configuration \(x\) for a configuration structure \((E,C,\labl)\).
	Denote \(e\to_x e'\) if \(e\) is an \emph{immediate cause} for \(e'\) in \(x\), that is \(e<_xe'\) and \(\nexists e''\) such that \(e<_x e''<_x e'\).
\end{definition}
Note that we overload the notation \(e\to_x e'\) however as it is defined on events only, it is not ambiguous.

\begin{proposition}
	\label{prop:cause_projection}
	Let \(x\in\conf_1\times\conf_2\).
	Then \(e_1\to_x e_2\iff\) either \(\pi_1(e_1)<_{\pi_1(x)} \pi_1(e_2)\) or \(\pi_2(e_1)<_{\pi_2(x)} \pi_2(e_2)\).
\end{proposition}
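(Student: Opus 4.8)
The plan is to reduce both implications to one structural fact: the causal order on a configuration of a product is the transitive closure of the two orders pulled back along the projections. So I would first prove the lemma that for $x\in\conf_1\times\conf_2$ and $e,e'\in x$ one has $e\leqslant_x e'$ iff there is a finite sequence $e=f_0,f_1,\dots,f_n=e'$ of events of $x$ with, for each $k$, either $\pi_1(f_k)<_{\pi_1(x)}\pi_1(f_{k+1})$ or $\pi_2(f_k)<_{\pi_2(x)}\pi_2(f_{k+1})$. The ``if'' direction is immediate, because $\pi_1,\pi_2$ are morphisms by \autoref{cat-op-def} and morphisms reflect causality (the property for morphisms displayed after \autoref{def:causality}), so each link gives $f_k\leqslant_x f_{k+1}$ and transitivity concludes. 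The ``only if'' direction is the heart of the matter; I would prove it by induction along a chain $\varnothing=x_0\subset x_1\subset\dots\subset x_m=\lceil e'\rceil_x$ of the down-closure $\lceil e'\rceil_x=\{f\in x\mid f\leqslant_x e'\}$ increasing by one event at a time, using the clauses of \autoref{cat-op-def} — that $\pi_1(x_k),\pi_2(x_k)$ are configurations and that $\pi_1,\pi_2$ are injective on $x$ — to pinpoint, at the step that first introduces $e'$, which component forces it.

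Granting the lemma, the left-to-right implication of the proposition is short. Assume $e_1\to_x e_2$; then $e_1<_x e_2$, so the lemma gives such a chain, which we may take repetition-free and hence of minimal length $n\geqslant 1$. If $n\geqslant 2$ then $f_1\notin\{e_1,e_2\}$, and reflecting causality along the links gives $e_1<_x f_1<_x e_2$, contradicting immediacy of $e_1\to_x e_2$; hence $n=1$, which is exactly the disjunction $\pi_1(e_1)<_{\pi_1(x)}\pi_1(e_2)$ or $\pi_2(e_1)<_{\pi_2(x)}\pi_2(e_2)$.

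For the converse, assume say $\pi_1(e_1)<_{\pi_1(x)}\pi_1(e_2)$. Reflection of causality gives $e_1\leqslant_x e_2$, and $e_1\neq e_2$ since $\pi_1$ separates their images while being injective on $x$, so $e_1<_x e_2$. It remains to rule out an intermediate event: if $e_1<_x e''<_x e_2$ for some $e''\in x$, I would concatenate the chains the lemma supplies for $e_1<_x e''$ and for $e''<_x e_2$, project the combined chain to the component carrying the witness (discarding links with a $\star$-valued endpoint), and extract an element of $\pi_1(x)$, or of $\pi_2(x)$, strictly between the projections of $e_1$ and $e_2$, which contradicts the chosen witness provided it is taken minimal. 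Together the two halves yield the equivalence, the left-to-right half of which is the precise content of the slogan that a parallel context may add but not remove immediate causality.

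The step I expect to be the obstacle is the ``only if'' half of the generation lemma: since the projections reflect but do not in general preserve causality, one cannot argue link-by-link in the obvious way, and the induction must call on the finite-completeness and stability axioms — to intersect and union configurations sitting below $x$ — together with the exact conditions of \autoref{cat-op-def} that determine which subsets of $x$ are configurations of the product. A secondary, more bureaucratic difficulty that pervades both directions is the handling of events whose projection to one component is the undefined value $\star$, which forces the arguments on the $\conf_1$ and $\conf_2$ sides to be run symmetrically, case by case.
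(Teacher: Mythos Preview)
The converse direction of the proposition, as stated with an $\iff$, is false, so your attempt to prove it cannot succeed. Take $\conf_1$ with three events $d_1<d_2<d_3$ in a single chain and $\conf_2=\confzero$; in the product let $x=\{(d_1,\star),(d_2,\star),(d_3,\star)\}$, $e_1=(d_1,\star)$ and $e_2=(d_3,\star)$. Then $\pi_1(e_1)<_{\pi_1(x)}\pi_1(e_2)$ holds, yet $e_1\to_x e_2$ fails because $(d_2,\star)$ lies strictly between them. Your argument for the converse breaks exactly where you hedge: you write that producing an intermediate element in a projection ``contradicts the chosen witness provided it is taken minimal'', but the hypothesis is merely $\pi_1(e_1)<_{\pi_1(x)}\pi_1(e_2)$, with no minimality available---and in the counterexample the intermediate $d_2$ exists and nothing is contradicted. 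The paper's own prose before the proposition, its proof hint, and both later uses invoke only the forward implication; the $\iff$ is an overstatement.

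For the forward direction your generation-lemma route is sound and gives a somewhat more structural picture than the paper's argument, which proceeds directly by contradiction: assuming neither projection records the causality, one exploits the fact that an event maximal in a configuration can be removed (this is the single ingredient the proof sketch cites) to manufacture a subconfiguration of $x$ containing $e_2$ but not $e_1$, contradicting $e_1<_x e_2$. Your approach packages the same work inside the ``only if'' half of the generation lemma---which, as you correctly anticipate, is where the product clauses of \autoref{cat-op-def} are really exercised---and then reads off the result from a shortest chain. The content is equivalent; the lemma formulation has the mild advantage of being reusable. Just drop the converse.
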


\begin{proof}
	The proof~\cite[Proposition 6]{Cristescu2015} follows by contradiction, using that if \(x\) is a configuration in \(\conf\) and if \(e\in x\) is such that \(\forall e'\in x\), \(e\not <_x e'\), then \(x\setminus e\) is a configuration in \(\conf\).
\end{proof}

\subsection{Operational semantics, correspondence and equivalences}
\label{sec:causal_ccs}
Configuration structures are a causal model for CCS~\cite{Winskel1995} in which the computational states are the configurations and the forward executions are dictated by set inclusion.
To show the correspondence with CCS (\autoref{lem:bisim_stfam_ccs}), one defines an operational semantics on configurations structures (\autoref{def:transition_stable}) that erases the part of the structure that is not needed in future computations.
In order to define a reversible semantics on configurations structures a second LTS is introduced (\autoref{def:reverLTS}), that instead of being defined on configurations structures is defined on the configurations of a stable family.
Thus forward and backward moves are simply the set inclusion relation and its opposite, respectively.

The soundness of the model is proved by defining an operational semantics on configurations structures and showing an operational correspondence between the two worlds.

\begin{definition}[Encoding a CCS term]
	\label{def:encoding-CCS-cf}
	Given \(P\) a CCS term, its encoding \(\enc{P}\) as a configuration structure is built inductively, using the operations of \autoref{cat-op-def}:
	\begin{align*}
		\enc{a.P}     & =a.\enc{P}          &   &   &
		\enc{\out{a}.P}&=\out{a}.\enc{P}\\
		\enc{P\mid Q} & =\enc{P}\mid\enc{Q} &   &   &
		\enc{P+Q}&=\enc{P}+\enc{Q}\\
		\enc{P\bs a}  & =\enc{P}\restr E_a  &   &   &
		\enc{0}&=\confzero
	\end{align*}
\end{definition}

\begin{definition}[LTS on configurations structures]
	\label{def:transition_stable}
	Let $\conf=(E,C,\labl)$ be a configuration structure.
	Define $\conf\setminus{e} = (E\setminus{e},C',\labl')$ where $\labl'$ is the restriction of $\labl$ to the set $E\setminus{e}$ and
	\[ x\in C'\iff x\cup\{e\}\in C.\]
	We easily make sure that $\conf\setminus{e}$ is also a configurations structures.

	We define a LTS on configurations structures thanks to the relation $\conf\redl{e}\conf\setminus{e}$, and we extend the notation to $\conf\redl{\labl(e)}\conf\setminus{e}$ and to $\conf\redl{x}\conf\setminus{x}$, for $x$ a configuration.
\end{definition}

\begin{lemma}[Operational correspondence between a process $P$ and its encoding $\enc{P}$]
	\label{lem:bisim_stfam_ccs}
	Let $P$ a process and $\enc{P}=(E,C,\labl)$ its interpretation.
	\begin{enumerate}
		\item $\forall \alpha$, $P'$ such that $P\redl{\alpha}P'$, $\exists e\in E$ such that $\labl(e)=\alpha$ and $\enc{P}\setminus{e}\iso\enc{P'}$;
		\item $\forall e\in E$, if $\{e\}\in C$ then $\exists P'$ such that $P\redl{\labl(e)}P'$ and $\enc{P}\setminus{e}\iso\enc{P'}$.
	\end{enumerate}
\end{lemma}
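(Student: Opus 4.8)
The plan is to prove the two parts of \autoref{lem:bisim_stfam_ccs} simultaneously by structural induction on the CCS term $P$, showing at each step that the encoding commutes with the single-step LTS in both directions. For the base case $P = 0$, we have $\enc{0} = \confzero$, which has no events and no non-trivial configurations, and $0$ has no transitions, so both implications hold vacuously. The inductive cases follow the grammar of CCS processes: prefix $\lambda.P$, parallel composition $P \mid Q$, sum $P + Q$, and restriction $P \bs a$. In each case I would first recall how $\enc{P}$ is built from the encodings of subterms via \autoref{cat-op-def} (using \autoref{def:encoding-CCS-cf}), then analyze which transitions $P$ can perform using the CCS reduction rules, and finally match each such transition with an event $e$ of the appropriate label such that $\enc{P}\setminus e \iso \enc{P'}$, and conversely.

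The prefix case is the cleanest: $\lambda.P \redl{\lambda} P$ is the only transition, and $\enc{\lambda.P} = \lambda.\enc{P}$ adjoins a single fresh event $e$ with $\labl(e) = \lambda$ that is below every other event; removing it, $\enc{\lambda.P}\setminus e \iso \enc{P}$ by construction, and $\{e\}$ is the unique singleton configuration. The restriction case uses that $\enc{P\bs a} = \enc{P}\restr E_a$: transitions of $P\bs a$ are exactly transitions of $P$ on labels other than $a, \out a$, and restriction commutes with the $\setminus e$ operation on events surviving the restriction, so the correspondence lifts from the inductive hypothesis on $P$. The sum case uses that $\enc{P+Q} = \enc{P} + \enc{Q}$ is the coproduct: its non-empty configurations are either entirely in the $\{1\}\times E_1$ copy or entirely in the $\{2\}\times E_2$ copy, so a first transition of $P+Q$ (firing either a $P$-event or a $Q$-event) collapses the coproduct onto one side, and $\enc{P+Q}\setminus e$ restricted to the relevant copy is isomorphic to $\enc{P}\setminus e$ (resp. $\enc{Q}\setminus e$), which the inductive hypothesis relates to $\enc{P'}$.

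The parallel composition case is the main obstacle, because $\enc{P\mid Q} = \big((\enc{P}\times\enc{Q})\circ\labl\big)\restr E$ involves the product construction, relabelling, and restriction all at once, and the transitions of $P\mid Q$ come in three flavors: a left move (from a transition of $P$), a right move (from a transition of $Q$), and a synchronization $\tau$-move pairing a transition of $P$ on $\alpha$ with one of $Q$ on $\out\alpha$. For the left and right moves one uses the projections $\pi_1, \pi_2$: an event $e$ in the product with $\pi_1(e) = e_1 \ne \star$, $\pi_2(e) = \star$ corresponds to a left-only event, and one must check that removing $e$ from the product commutes with removing $e_1$ from $\enc{P}$, after which the relabelling and restriction are compatible with the inductive hypothesis. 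For the synchronization move one takes the event $e$ with $\pi_1(e) = e_1$, $\pi_2(e) = e_2$ both defined, $\labl_1(e_1) = \alpha$, $\labl_2(e_2) = \out\alpha$, so that $\labl(e) = \tau$ survives the restriction; checking $\enc{P\mid Q}\setminus e \iso \enc{P'\mid Q'}$ requires knowing that the product of the residuals equals the residual of the product, which is where \autoref{prop:cause_projection} (and the stability/finiteness axioms guaranteeing the product is well-defined) does the real work. The delicate point is that $\{e\}\in C$ for the product event must be checked to correspond exactly to the side conditions of the CCS synchronization rule (both $\{e_1\}\in C_1$ and $\{e_2\}\in C_2$, with matching complementary labels), and that no spurious events of the form $(e_1,e_2)$ with incompatible labels get into $E$ — this is exactly what the relabelling-to-$0$-then-restrict step removes. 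I would handle this by first proving a standalone lemma that $(\conf_1\times\conf_2)\setminus(e_1,e_2) \iso (\conf_1\setminus e_1)\times(\conf_2\setminus e_2)$ and analogous statements for the one-sided cases, then deriving the parallel case of the operational correspondence as a corollary combined with the inductive hypotheses on $P$ and $Q$.
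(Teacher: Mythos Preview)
The paper does not actually prove \autoref{lem:bisim_stfam_ccs}: it is stated without proof as a standard fact about configuration structures as a model of CCS (the surrounding text cites Winskel~\cite{Winskel1995}, and the lemma is immediately followed by commentary rather than a proof environment). So there is no ``paper's own proof'' to compare against; the result is imported from the literature.

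Your structural-induction outline is the expected route for establishing such a correspondence and is essentially correct. Two small remarks. First, your appeal to \autoref{prop:cause_projection} in the parallel case is misplaced: that proposition concerns how immediate causality in a product decomposes along the projections, which is not what drives the commutation $\enc{P\mid Q}\setminus e \iso \enc{P'\mid Q'}$; what you actually need is the definition of configurations in the product (the five bullet points under \textbf{Product} in \autoref{cat-op-def}) together with the observation that local injectivity kills all events sharing a component with the removed one. Second, your proposed auxiliary isomorphism $(\conf_1\times\conf_2)\setminus(e_1,e_2) \iso (\conf_1\setminus e_1)\times(\conf_2\setminus e_2)$ is not literally an isomorphism of configuration structures on the nose---events of the form $(e_1,\star)$, $(\star,e_2)$, $(e_1,e_2')$ remain in the left-hand event set but can appear in no configuration---so you should either phrase it as an isomorphism after restricting to live events, or work directly at the level of the parallel-composition operator (product, relabel, restrict) where the dead events are irrelevant anyway. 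Neither point is a genuine gap; the argument goes through once these are tidied up.
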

The above lemma shows that \emph{labelled} transitions are in correspondence, but labels are just a tool for compositionality.
The main result is that a process and its encoding simulate each others \emph{reductions}.

\begin{theorem}[Operational correspondence with CCS]
	Let $P$ a process and $\enc{P}=(E,C,\labl)$ its encoding.
	\begin{enumerate}
		\item $\forall P'$ such that $P\redl{\tau}P'$, $\exists \{e\}\in C$ closed such that $\enc{P}\setminus{e}\iso\enc{P'}$;
		\item $\forall e\in E$, $\{e\}\in C$ closed, $\exists P'$ such that $P\redl{\tau}P'$ and $\enc{P}\setminus{e}\iso\enc{P'}$.
	\end{enumerate}
\end{theorem}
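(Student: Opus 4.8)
The plan is to obtain this theorem as a direct corollary of \autoref{lem:bisim_stfam_ccs}, specialising the label $\alpha$ to $\tau$; the only genuine work is to reconcile the side condition ``$\{e\}\in C$ closed'' on the configuration-structure side with the existence of a reduction on the process side.

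For the first item I would start from a reduction $P\redl{\tau}P'$. Since $\tau\in\labels$, \autoref{lem:bisim_stfam_ccs}(1) applied with $\alpha=\tau$ gives an event $e\in E$ with $\labl(e)=\tau$ and $\enc{P}\setminus{e}\iso\enc{P'}$, so $e$ is a $\tau$-event and $\{e\}$ is closed once we know it is a configuration. That $\{e\}\in C$ is forced by the isomorphism: writing $\enc{P}\setminus{e}=(E\setminus{e},C',\labl')$ as in \autoref{def:transition_stable}, one has $\emptyset\in C'\iff\{e\}\in C$, and $\emptyset$ is a configuration of $\enc{P'}$, hence of the isomorphic $\enc{P}\setminus{e}$. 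This yields the required closed $\{e\}\in C$ with $\enc{P}\setminus{e}\iso\enc{P'}$.

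For the second item I would start from a closed $\{e\}\in C$, \ie with $\labl(e)=\tau$. Here $\{e\}\in C$ is exactly the hypothesis of \autoref{lem:bisim_stfam_ccs}(2), which therefore delivers a $P'$ with $P\redl{\labl(e)}P'$, that is $P\redl{\tau}P'$, together with $\enc{P}\setminus{e}\iso\enc{P'}$, as wanted. I would also note that, by the grammar of \autoref{sec:rccs}, the only way a CCS term performs a $\tau$-transition is through a synchronisation, so \autoref{lem:bisim_stfam_ccs} really does account for every reduction of $P$.

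I expect no serious obstacle here, since the substance is carried by \autoref{lem:bisim_stfam_ccs}; the one point that needs care — and the only place the argument could break — is that a $\tau$-labelled event of $\enc{P}$ need not form a singleton configuration (it may have causes, e.g.\ in the encoding of $a.(b\mid\out b)$), which is precisely why the second item must assume $\{e\}\in C$ rather than merely $e\in E$, and why in the first item minimality has to be recovered from the isomorphism with $\enc{P'}$ rather than taken for granted.
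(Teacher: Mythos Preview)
Your proposal is correct and matches the paper's intended approach: the paper gives no explicit proof for this theorem, presenting it immediately after \autoref{lem:bisim_stfam_ccs} with the remark that the lemma handles labelled transitions while the theorem specialises to reductions, so deriving it by instantiating $\alpha=\tau$ in the lemma is exactly what is expected. Your extra care in recovering $\{e\}\in C$ from the isomorphism via $\emptyset\in C'\iff\{e\}\in C$ fills in the one detail the paper leaves implicit.
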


Multiple equivalence relations on configuration structures have been defined and studied~\cite{Glabeek1989, Nicola1990,Phillips2012,Baldan2014,Vogler1993}.
Among them, hereditary history preserving bisimulation (HHPB)~\cite{Bednarczyk1991} equates structures that simulate each others' forward and backward moves and thus connects configuration structures to reversibility.
It is considered a canonical equivalence on configuration structures as it respects the causality and concurrency relations between events and admits a categorical representation~\cite{Joyal1996b}.

Those connections between reversibility and causal models sheds a new light on what help are the meaningful equivalences on reversible processes.
Consequently, one apply them in the operational setting.
We begin by modifying the definition of our LTS on configuration structures to include backward moves as well.

\begin{definition}[A reversible LTS on configuration structures]
	\label{def:reverLTS}
	Consider \((E, C,\labl)\) a configuration structure.
	For \(x\in C\), \(e\in E\) define \(x\redl{e}x'\) iff \(x'=x\cup\{e\}\) and \(x\redl{\alpha}x'\) if additionally, \(\labl(e)=\alpha\).
	The backward moves are defined as \(x\revredl{e}x'\) and \(x\revredl{\alpha}x'\) if \(x=x'\cup\{e\}\) and \(\labl(e)=\alpha\).

	Denote \(x\fbl{e}x'\) when either \(x\redl{e}x'\) or \(x\revredl{e}x'\).
\end{definition}

Such a LTS naturally satisfies elementary criterion that one wold expect from a LTS~\cite[Chap.~2]{Cristescu2015}.

\begin{definition}[HHPB {\cite[Definition 1.4]{Bednarczyk1991}}]
	\label{def:hhpb}
	A \emph{hereditary history preserving bisimulation} on labelled configuration structures is a symmetric relation \(\rel\subseteq C_1\times C_2\times\power(E_1\times E_2)\) such that \((\emptyset,\emptyset,\emptyset)\in {\rel}\) and if \((x_1,x_2,f)\in {\rel}\), then
	\begin{align*}
		f\text{ is a label and order preserving bijection between }x_1\text{ and }x_2\notag                                                                \\
		x_1\redl{e_1}x_1'\implies \exists x_2'\in C_2 \text{ \st~} x_2\redl{e_2}x_2' \text{ and } f=f'\restr x_1, (x_1',x_2',f')\in {\rel} \notag          \\
		x_1 \revredl{e_1} x_1'\implies \exists x_2'\in C_2 \text{ \st~} x_2 \revredl{e_2} x_2'\text{ and } f'=f\restr x_2, (x_1',x_2',f')\in {\rel} \notag
	\end{align*}
\end{definition}

It is known~\cite{Phillips2007} that hereditary history preserving bisimulation corresponds to the back-and-forth bisimulation (\autoref{def:sbfc_rccs}), in the following sense: CCSK~\cite{Phillips2007b} is proven to satisfy the \enquote{the axioms of reversibility}~\cite[Definition~4.2]{Glabbek1996}), so that its LTS is \emph{prime}.
Then, this LTS is represented as a process graph, on which the forward-reverse bisimulation~\cite[Definition 5.1]{Phillips2007}---our back-and-forth bisimulation (\autoref{def:sbfc_rccs})--- is defined.
Finally, configuration graphs and hereditary history-preserving bisimulation are defined from configuration structures, and both relation are proven to coincide.~\cite[Theorem~5.4, p.~105]{Phillips2007}.

\subsection{Configuration structures for RCCS }
\label{sec:stfam_rccs}
All the possible future behaviours of a process without memory are present in its encoding as a configuration structure.
All alike, we want our encoding of processes with memory to record both their past \emph{and} their future, so that they can evolve in both directions, as process do.
To this end, we encode RCCS terms as a configuration in the configuration structure of their origins (\autoref{def:encod_rccs}).
Then, we show an operational correspondence between RCCS terms and their encoding.

To determine which configuration corresponds to the computational state of the term we are encoding, we need to uniquely identify a process from its past and future. %
However, as the following example illustrates, this is not always possible: \autoref{rem-concur} explains the limitations of the encoding we are going to develop.

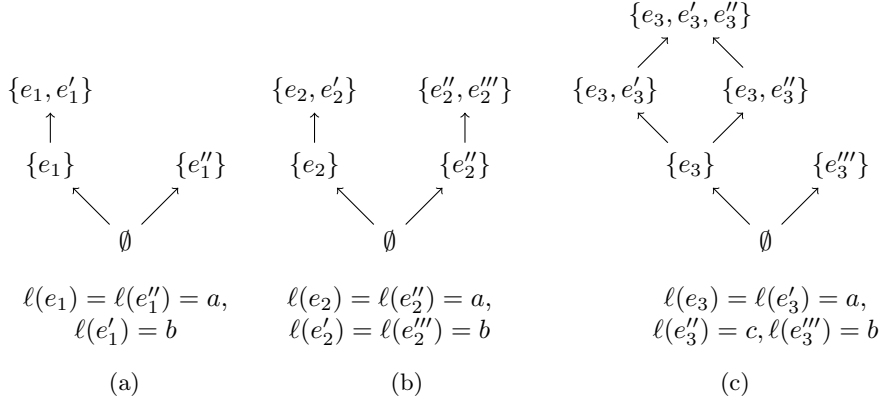
\begin{figure}
	\begin{minipage}[t]{.28\linewidth}
		\begin{tikzpicture}
			\node (emptyset) at (0, -1) {\(\emptyset\)};
			\node (a) at (-1, 0) {\(\{e_1\}\)};
			\node (b) at (1, 0) {\(\{e_1''\}\)};
			\node (ab) at (-1, 1) {\(\{e_1, e'_1\}\)};
			\draw [->] (emptyset) -- (a);
			\draw [->] (emptyset) -- (b);
			\draw [->] (a) -- (ab);
			\node[align=center] (labels) at (0, -2) {\(\labl(e_1) = \labl(e_1'') = a\),\\ \(\labl(e'_1) = b\)};
		\end{tikzpicture}
		\subcaption{}\label{ex_unif_a}
	\end{minipage}
	\begin{minipage}[t]{.32\linewidth}
		\begin{tikzpicture}
			\node (emptyset) at (0, -1) {\(\emptyset\)};
			\node (a) at (-1, 0) {\(\{e_2\}\)};
			\node (b) at (1, 0) {\(\{e''_2\}\)};
			\node (ab) at (-1, 1) {\(\{e_2, e'_2\}\)};
			\node (ba) at (1, 1) {\(\{e''_2, e'''_2\}\)};
			\draw [->] (emptyset) -- (a);
			\draw [->] (emptyset) -- (b);
			\draw [->] (a) -- (ab);
			\draw [->] (b) -- (ba);
			\node[align=center] (labels) at (0, -2) {\(\labl(e_2) = \labl(e_2'') =a\),\\ \(\labl(e_2') = \labl(e_2''') =b\)};
		\end{tikzpicture}
		\subcaption{}\label{ex_unif_b}
	\end{minipage}
	\begin{minipage}[t]{.38\linewidth}
		\begin{tikzpicture}
			\node (emptyset) at (0, -1) {\(\emptyset\)};
			\node (a) at (-1, 0) {\(\{e_3\}\)};
			\node (b) at (1, 0) {\(\{e_3'''\}\)};
			\node (aa1) at (-2, 1) {\(\{e_3, e_3'\}\)};
			\node (aa2) at (0, 1) {\(\{e_3, e_3''\}\)};
			\node (aaa) at (-1, 2) {\(\{e_3, e_3',e_3''\}\)};
			\draw [->] (emptyset) -- (a);
			\draw [->] (emptyset) -- (b);
			\draw [->] (a) -- (aa1);
			\draw [->] (a) -- (aa2);
			\draw [->] (aa1) -- (aaa);
			\draw [->] (aa2) -- (aaa);
			\node[align=center] (labels) at (0, -2) {\(\labl(e_3) = \labl(e_3') =a, \)\\ \(\labl(e_3'') = c, \labl(e_3''') = b\)};
		\end{tikzpicture}
		\subcaption{}\label{ex_unif_c}
	\end{minipage}
	\caption{Encoding RCCS in configurations structures}
	\label{ex_unif}
\end{figure}

\begin{example}
	\label{ex:auto}
	\begin{enumerate}
		\item The process \(P=a.b+a\) is interpreted as the configuration structure in \autoref{ex_unif_a}.
		      Let us consider the execution \(\emptymem\rhd P\redl{a}R\).
		      To determine which of the configurations labelled \(a\) correspond to \(R\) we have to consider the future of \(R\) as well.

		\item Hence we choose a configuration that respects the past and the future of \(R\), but such a configuration might not be unique.
		      Let \(Q=a.b + a.b\) be a process whose configuration structure is in \autoref{ex_unif_b}.
		      For the trace \(P \redl{a} b\) there is no way to choose between the two configurations labelled \(a\).
	\end{enumerate}
\end{example}

The situation of \autoref{ex:auto} is generalizable to any process whose reduction may lead to a process of the form \(a.P+a.P\) or \(a.P\mid a.P\).
We consider then a restricted class of processes, as discussed in the following remark.
\begin{remark}[On auto-concurrency and others limitations]
	\label{rem-concur}
	In the following, we need to uniquely identify configurations based solely on the labels and orders of the \emph{open} (\ie non-synchronized) events.
	As seen in \autoref{ex:auto}, this is not possible in encoding of processes that may reduce to the form \(a.P \mid a.Q\) or \(a.P+a.Q\).

	The first kind of process is characterised by an \emph{auto-concurrency} condition~\cite[Definition 9.5]{Glabbeek2001}.
	We need a stronger condition, a sort of \emph{auto-conflict}, to forbid the second type of process.
\end{remark}

\begin{definition}[Singly labelled configuration structures and processes]
	\label{def:singlylabelled}
	A configuration structure \(\conf\) is \emph{singly labelled}, or \emph{without auto-concurrency nor auto-conflict} if \(\forall x\in\conf\) and \(\forall e,e'\notin x\) we have that
	\[\big( x\redl{e} y, x\redl{e'}y'\text{ and }\labl(e)=\labl(e')\big)\implies e = e'.\]

	A process is singly labelled if its encoding as a configuration structure is.
\end{definition}

Remark that being singly labelled does not mean that each label has to occurs only once in a process: whereas \(a \mid b.a \) is not, since after firing \(b\) two transitions labelled \(a\) can be fired, \(a.a\) and \(a.b + b\) are singly labelled.
However, a syntactical definition of this restriction cannot be inductively defined, since \(P\) and \(Q\) might be singly labelled, but not \(P \mid Q\) nor \(P + Q \).

The following encoding, and all the results that use it, require the process to be singly labelled (on top of being coherent, if they are reversible).
This restriction could probably be removed at the price of a \emph{tagging} of the occurrences of names, maybe in the spirit of the \emph{localities}~\cite{Boudol1998b}.

\begin{definition}[Encoding RCCS processes in configurations structures]
	\label{def:encod_rccs}
	Let \(R\) be a singly labelled process and \(\conf = \encc{\erase{\orig{R}}}\) the encoding (\autoref{def:encoding-CCS-cf}) of its \enquote{memory-less} origin (\autoref{def:rccs_ccs}).

	We first need the function \(\mathaddress_{\conf}\), defined as:
	\begin{align}
		\funaddress{\conf}{x}{f,R_1\redl{i:\alpha}R_2\red^{\star} R_3} & =\funaddress{\conf}{x\cup\{e\}}{f\cup\{e\leftrightarrow i\},R_2\red^{\star} R_3} \label{fun1} \\
		\funaddress{\conf}{x}{f,R_2\red^{\star}R_3}                    & = x \text{ if } R_2 =R_3 \label{fun2}
	\end{align}
	Where in \eqref{fun1} \(e\) is such that
	\begin{itemize}
		\item \(\labl(e)=\alpha\);
		\item \(x\cup\{e\}\in \conf\);
		\item \(j<_{R_2}i\iff f(j)<_{x\cup\{e\}}e\);
		\item and \(\enc{\erase{R_2}}= \big(\conf\setminus (x\cup\{e\})\big)\).
	\end{itemize}
	Now we define the encoding of \(R\) in configuration structure by induction on the trace (\autoref{def:trace}) \(\sigma: \orig{R}\red^{\star} R\), as \(\encr{R}_{\sigma} = (\conf,\funaddress{\conf}{\emptyset}{\emptyset,\sigma})\).

\end{definition}
We show in \autoref{prop-soundness-rccs} that the function is well defined, \ie for every singly labelled process \(R\) and for every trace \(\sigma: \orig{R}\red^{\star}R\) there exists a unique configuration in \(\enc{\erase{\orig{R}}}_{\sigma}\) defined as above.

\begin{example}
	A first simple example is the encoding of a process with an empty memory.
	Let \(S = \emptymem \rhd P\), \(\erase{\orig{S}} = P\) and \(\enc{S}_{\tempty}=(\enc{P}, \emptyset)\).

	Let us show how to compute the encoding of the process
	\[R=\mem{2,a,0}.\fork.\mem{1,a,b}\rhd 0\mid \fork.\mem{1,a,b}\rhd a.\]
	We backtrack to its origin and obtain \(\orig{R}=\emptymem \rhd a.(a \mid c)+b\).
	The term is encoded in the configuration structure in \autoref{ex_unif_c}.
	We apply the function \(\funaddress{\conf}{\emptyset}{\orig{R}\red^{\star} R}\) on the trace
	\begin{align*}
		\emptymem\rhd a.(a \mid c)+b & \fwlts{1}{a} \mem{1,a,b}\rhd (a\mid c)                                                     \\
		                             & \qquad \congru (\fork.\mem{1,a,b}\rhd a)\mid (\fork.\mem{1,a,b}\rhd c )                    \\
		                             & \qquad \qquad \fwlts{2}{a} \mem{2,a,0}.\fork.\mem{1,a,b}\rhd 0\mid \fork.\mem{1,a,b}\rhd c \\
		                             & \qquad \qquad \qquad = R'
	\end{align*}

	The configuration corresponding to \(R\) is then \(\{e_3,e_3'\}\).
\end{example}

Let us show that the encoding is correct, and in particular that the function \(\mathaddress_{\conf}\) is well defined.

\begin{proposition}[Soundness of the RCCS encoding]
	\label{prop-soundness-rccs}
	Let \(P\) be a singly labelled process and \(\conf=\encc{P}\) its encoding.
	Then for any \(R\) reachable from \(\emptymem\rhd P\)
	there exists a unique \(x\in\conf\) such that \(\funaddress{\conf}{\emptyset}{\emptyset,\emptymem\rhd P\red^{\star}R}=x\).
\end{proposition}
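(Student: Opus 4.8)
The statement is about well-definedness of $\mathaddress_{\conf}$ as a recursive function on traces: for any $R$ reachable by $\emptymem \rhd P \fw^\star R$, the recursion terminates and produces a unique configuration $x \in \conf$. The plan is to proceed by induction on the length $n$ of the forward trace $\sigma : \emptymem \rhd P \fw^\star R$, maintaining along the way a strengthened invariant that is stable under the recursive call in \eqref{fun1}. The invariant is precisely what makes the definition sensible: at an intermediate call $\funaddress{\conf}{x}{f, R_1 \red^\star R}$, the pair $(x,f)$ should satisfy (a) $x \in \conf$; (b) $f$ is a bijection between $x$ and $\ids(R_1)$ (the identifiers consumed so far); (c) $f$ is order-reflecting and order-preserving in the sense $j <_{R_1} i \iff f(j) <_x f(i)$ in $x$; (d) the residual structure agrees, $\enc{\erase{R_1}} \iso \conf \setminus x$; and (e) the labels match, $\labl(f(i)) = \alpha_i$ for each consumed event. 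The base case $n=0$ is clause \eqref{fun2} with $x = \emptyset$, $f = \emptyset$, where all five conditions hold trivially since $\enc{\erase{\emptymem \rhd P}} = \enc{P} = \conf \setminus \emptyset$.

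For the inductive step, suppose $\sigma$ factors as $\emptymem \rhd P \fw^\star R_1 \redl{i:\alpha} R_2 \fw^\star R$, with the invariant holding at $R_1$ with data $(x,f)$. I must show there is a \emph{unique} event $e \notin x$ meeting the four bulleted requirements of \eqref{fun1}, and that after the update to $(x \cup \{e\}, f \cup \{e \leftrightarrow i\})$ the invariant is re-established at $R_2$; then I invoke the induction hypothesis on the shorter trace $R_2 \fw^\star R$. Existence of $e$: by \autoref{lem:corresp_ccs_rccs} the forward step $R_1 \redl{\alpha} R_2$ lifts to $\erase{R_1} \redl{\alpha} \erase{R_2}$, and by \autoref{lem:bisim_stfam_ccs}(1) applied through the isomorphism $\enc{\erase{R_1}} \iso \conf \setminus x$, there is an event $e'$ in $\conf \setminus x$ with $\labl(e') = \alpha$ and $(\conf \setminus x) \setminus e' \iso \enc{\erase{R_2}}$; transporting back, this yields $e \notin x$ with $x \cup \{e\} \in \conf$, $\labl(e) = \alpha$, and $\enc{\erase{R_2}} \iso \conf \setminus (x \cup \{e\})$. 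The causality condition $j <_{R_2} i \iff f(j) <_{x \cup \{e\}} e$ is then read off from how causal dependency is generated in RCCS (a new event depends exactly on the events recorded in the memory it is pushed onto) matched against \autoref{def:causality} in the encoding; this is where \autoref{prop:cause_projection} — immediate causes in a product come from one side — is the workhorse, since $\enc{\cdot}$ is built from products, prefixes and restrictions.

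The main obstacle is \textbf{uniqueness} of $e$, and this is exactly where the singly-labelled hypothesis (\autoref{def:singlylabelled}) is indispensable. Two distinct candidates $e, e'$ with $x \cup \{e\} \in \conf$, $x \cup \{e'\} \in \conf$ and $\labl(e) = \labl(e') = \alpha$ are forbidden by the very definition of singly labelled, so at most one event of a given label is enabled at $x$; combined with existence this pins $e$ down. One must also check that the \emph{future} constraint $\enc{\erase{R_2}} \iso \conf \setminus (x \cup \{e\})$ is automatically satisfied once $e$ is the unique label-$\alpha$ enabled event and is not an extra constraint that could fail — here \autoref{lem:bisim_stfam_ccs} guarantees the residual after removing the correct event is exactly $\enc{\erase{R_2}}$, so no clash arises. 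Finally, re-establishing clauses (b), (c), (e) of the invariant at $R_2$ is bookkeeping: $\ids(R_2) = \ids(R_1) \cup \{i\}$ since $i \notin \ids(R_1)$ by the side condition on the forward rules, $f \cup \{e \leftrightarrow i\}$ is a bijection $x \cup \{e\} \to \ids(R_2)$, and the order condition at $e$ is what we just verified while the order among old events is unchanged because $x$ is downward-closed in $x \cup \{e\}$ (the only new comparabilities involve $e$). Termination of the recursion is immediate since each call to \eqref{fun1} strictly shortens the remaining trace, bottoming out at \eqref{fun2}, so the function is total and single-valued on all reachable $R$.
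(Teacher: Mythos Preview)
Your proposal is correct and follows the same overall strategy as the paper: induction on the length of the forward trace, maintaining as invariant that the accumulated pair \((x,f)\) is a configuration with a label-- and order--preserving bijection to the identifiers consumed so far and that \(\enc{\erase{R_1}} \iso \conf \setminus x\); existence of the next event \(e\) is obtained via \autoref{lem:corresp_ccs_rccs} and \autoref{lem:bisim_stfam_ccs}, and the causality clause is verified using \autoref{prop:cause_projection}, exactly as the paper does.

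The one genuine difference is the uniqueness argument. You appeal directly to \autoref{def:singlylabelled} on the configuration structure: at \(x\), two distinct enabled events with the same label are forbidden, so \(e\) is unique. The paper instead reasons syntactically on the RCCS term, writing \(R_n \congru m_1 \rhd a.P_1 \mid (m_2 \rhd a.P_2 \mid R_2)\) and splitting into \(m_1 = m_2\) (ruled out as auto-concurrency) versus \(m_1 \neq m_2\) (where the order-preservation clause of \(\mathaddress\) selects the correct thread). Your semantic route is shorter and arguably cleaner---indeed the \(m_1 \neq m_2\) case with two co-enabled \(a\)-prefixes already contradicts singly-labelledness of \(\conf\) at \(x_n\), so the paper's second branch is vacuous. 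The paper's syntactic detour, on the other hand, makes visible why the order clause is part of the definition of \(\mathaddress\) and ties the restriction on processes more directly to the shape of the term. Both arguments must still discharge the causality condition \(j <_{R_2} i \iff f(j) <_{x\cup\{e\}} e\), which you only sketch; the paper spells this out by tracing the memory prefixes back to causal dependency in \(\conf\).
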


\begin{proof}
	From \autoref{lem:rearrange_trace}, we consider the trace \(\orig{R}\red^{\star} R\) to be only forward.
	We proceed by induction on the trace \(\orig{R}\red^{\star} R\).
	For the inductive case we have the trace \(\orig{R}\red^{\star} R_n\) %
	and
	\(\funaddress{\conf}{\emptyset}{f_n,\orig{R}\red^{\star} R_n}=x_n\), for \(x_n\in\conf\), \(f_n\) a label and order preserving bijection between \(x_n\) and \(R_n\), and such that \(\encc{\erase{R_n}}=\conf\setminus x_n\).
	We have to show that for the trace \(\orig{R}\red^{\star} R_n\redl{i:a} R_{n+1}\) %
	there exists a unique configuration \(x_{n+1}\in \conf\) such that
	\begin{align*}
		\funaddress{\conf}{\emptyset}{\emptyset,\orig{R}\red^{\star} R_n \redl{i:\alpha}R_{n+1}} =x_{n+1}
		\shortintertext{ and }
		\enc{\erase{R_{n+1}}}=\conf\setminus {x_{n+1}}.
	\end{align*}
	We have that
	\[
		\funaddress{\conf}{\emptyset}{\emptyset,\orig{R}\red^{\star} R_n\redl{i:\alpha}R_{n+1}}=\funaddress{\conf}{x_n}{f_n,R_n\redl{i:\alpha}R_{n+1}}
	\]
	Hence we have that \(x_{n+1}=x\cup\{e\}\), \(f_{n+1}=f_n\cup\{e\leftrightarrow i\}\) and we have to show that there exists a unique \(e\in\conf\) such that \(\labl(e)=\alpha\) and
	\[\encc{\erase{R_{n+1}}}=\conf\setminus (x_n\cup\{e\}).\]
	However, if such an \(e\) exists then \(e\in\encc{\erase{R_n}}\) and
	\[\conf\setminus (x_n\cup\{e\})=\encc{\erase{R_n}}\setminus{e}.\]
	Hence we reason on the transition \(R_n\redl{i:\alpha}R_{n+1}\) to show that there exists a unique \(\{e\}\in\encc{\erase{R_n}}\) such that \(\encc{\erase{R_{n+1}}}=\encc{\erase{R_n}}\setminus{e}\).
	We consider only the case \(\alpha=a\), the rest being similar.
	Using structural congruence it is possible to rewrite \(R_n\) and \(R_{n+1}\) as follows
	\[
		R_n\congru(m_1\rhd a.P_1 \mid R_2)\bs (b_1\dots b_n)\qquad R_{n+1}\congru(m_1\rhd P_1 \mid R_2)\bs (b_1\dots b_n)
	\]
	and hence, for \(\erase{R_2}=P_2\),
	\[\erase{R_n}=(a.P_1 \mid P_2)\bs (b_1\dots b_n)\qquad\erase{R_{n+1}}=(P_1 \mid P_2)\bs (b_1\dots b_n).\]
	We have then to show that
	\[ \encc{(P_1 \mid P_2)\bs (b_1\dots b_n)}=\encc{(a.P_1 \mid P_2)\bs (b_1\dots b_n)}\setminus{e}.\]
	From \autoref{lem:bisim_stfam_ccs} such an event exists.
	To show its uniqueness, consider \(R_n\congru m_1\rhd a.P_1 \mid \big( m_2\rhd a.P_2 \mid R_2\big)\).
	Either \(m_1=m_2\) in which case the process exhibits auto-concurrency, or \(m_1\neq m_2\) and in this case the condition \(j<_{R_{n+1}}i\iff f_n(j)<_{x_n\cup\{e\}}e\) from the definition of the encoding, points to either \(m_1\) or \(m_2\).

	Let us prove that \(\forall x\in\encc{(P_1 \mid P_2)\bs (b_1\dots b_n)}\), \(x\in\encc{(a.P_1 \mid P_2)\bs (b_1\dots b_n)}\setminus{e}\).
	The other direction is similar.
	Let us unfold the encoding of \autoref{def:encoding-CCS-cf} using the operations on configurations structures of \autoref{cat-op-def}.
	\begin{align*}
		\encc{(P_1 \mid P_2)\bs (b_1\dots b_n)}   & =(\encc{P_1}\times\encc{P_2}) \restr \cup_{1 \leqslant i \leqslant n}E_{b_i}    \\
		\encc{(a.P_1 \mid P_2)\bs (b_1\dots b_n)} & = (\encc{a.P_1}\times\encc{P_2}) \restr \cup_{1 \leqslant i \leqslant n}E_{b_i}
	\end{align*}
	If \(x\in (\encc{P_1}\times\encc{P_2}) \restr \cup_{1 \leqslant i \leqslant n}E_{b_i}\) then
	\begin{equation}
		\label{eq:labl}
		\forall e\in x, \labl(e)\notin\{b_i,\out{b_i},0\}.
	\end{equation}
	Hence \(x\in(\encc{P_1}\times\encc{P_2})\).
	Let \(\pi_1\), \(\pi_2\) be the two projections defined by the product.
	Then
	\begin{equation}
		\label{eq:proj}
		\pi_1(x)\in\encc{P_1}\text{ and }\pi_2(x)\in\enc{P_2}.
	\end{equation}
	As \(\pi_1(x)\in\encc{P_1}\), and from the definition of \(\encc{a.P_1}\) we have that \(\exists e_1\), \(\labl(e_1)=a\) and such that \(\{e_1\}\cup \pi_1(x)\in a.\encc{P_1}\).
	From \autoref{eq:proj} we have that \(\exists x_2\in a.\encc{P_1}\times\encc{P_2}\) such that \(\pi_1(x_2)=\{e_1\}\cup \pi_1(x)\) and \(\pi_2(x_2)=\pi_2(x)\).
	Hence \(\exists !e\) such that \(\pi_1(e)=e_1\), \(\pi_2(e)=\star\) and \(x_2=\{e\}\cup x\).
	From \autoref{eq:labl} we have that \(x_2\in (a.\encc{P_1}\times\encc{P_2})\restr \cup_{1 \leqslant i \leqslant n}E_{b_i}\).
	We infer that if \(x\cup\{e\}\in(b_1\dots b_n)(a.\encc{P_1}\times\encc{P_2})\) then \(x\in\encc{(b_1\dots b_n)(a.P_1 \mid P_2)}\setminus{e}\).

	From \(\encc{\erase{R}}=\conf\setminus x_n\), we have that \(\forall y\in \encc{\erase{R}}\), \(y\cup x_n\in\conf\).
	In particular \(x_n\cup\{e\}\in\conf\).

	Let us denote \(x_{n+1}=x_n\cup\{e\}\).
	Remains to show that \(j<_{R_{n+1}}i\iff f(j)<_{x_{n+1}}e\).
	We show the implication \(j<_{R_{n+1}}i\implies f_n(j)<_{x_{n+1}}e\) and consider the immediate order for \(<_{R_{n+1}}\), as the order is transitive.
	From \(j<_{R_{n+1}}i\), we have that \(\mem{i,a}.\mem{j,b}\in R_{n+1}\), hence we retrieve a process \(R_k\) where \(b.a.P'\in R_k\).
	Hence the events \(f_n(j)\) and \(f_n(i)\) are causaly dependent in the configuration structure of \(\enc{\erase{R_k}}\), and therefore causally dependent in \(\conf\).
	For the other direction \(f_n (j)<_{x_{n+1}}e \implies j<_{R_{n+1}}i\) we show that \(\labl(f(j))\) and \(\labl(e)\) are causal in the origin process \(P\), hence they are causally dependent in the memory of \(R_{n+1}\).

	Hence \(\funaddress{\conf}{\emptyset}{\orig{R}\red^{\star} R_n\redl{a}R_{n+1}}=x_n\cup\{e\}\) with \(\encc{\erase{R_{n+1}}}=\encc{\orig{R_n}}\setminus (x_n\cup\{e\})\).
\end{proof}

\begin{remark}[On encoding RCCS]
	\label{rk:encode_rccs}
	Another encoding exists~\cite{Phillips2012}, but it is not compositional, since \(\encc{P_1 \mid P_2}\) is not defined as an operation on \(\encc{P_1}\) and \(\encc{P_2}\).
	Compositionality is important for the definition of contexts in configurations structures, in particular for the definition of congruence (\autoref{bisim-cs}).
\end{remark}

Let us now define a transition relation on configurations structures, useful in showing the operational correspondence between terms of RCCS and their encoding.

\begin{definition}[Reversible LTS in configurations structures]
	\label{def:lts_stfam}
	Define \((\encc{P}, x)\redl{\labl(e)}(\encc{P}, x\cup\{e\})\) for \(x\cup\{e\}\in\enc{P}\).
	Similarly to \autoref{def:rlts} we define \((\encc{P}, x)\revredl{\labl(e)}(\encc{P}, x\setminus{e})\), for some \(e\) such that \(x\setminus{e}\in\encc{P}\).
\end{definition}

We defined in \autoref{def:encod_rccs} the encoding of a process parametrically on a trace.
The following proposition shows that any trace from \(\orig{R}\) up to \(R\) leads to the same encoding.
\begin{proposition}
	\label{prop:unique_conf_trace}
	For all singly labelled processes \(R\) there exists \(x\) a configuration in \(\encc{\erase{\orig{R}}}\) such that \(\forall \sigma:\orig{R}\red^{\star} R \), \(\encr{R}_{\sigma}=x\) holds.
\end{proposition}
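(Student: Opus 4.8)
The plan is to reduce the statement to the uniqueness already granted by Proposition~\ref{prop-soundness-rccs} together with the parabolic-trace rearrangement of Lemma~\ref{lem:rearrange_trace}. Fix a singly labelled process \(R\) and write \(\conf = \encc{\erase{\orig{R}}}\). Proposition~\ref{prop-soundness-rccs} already tells us that for \emph{one particular} forward trace \(\sigma_0 : \orig{R} \red^{\star} R\) the value \(\encr{R}_{\sigma_0} = \funaddress{\conf}{\emptyset}{\emptyset,\sigma_0}\) is a well-defined configuration \(x \in \conf\); call this \(x\). What remains is to show that \emph{every} trace \(\sigma : \orig{R} \red^{\star} R\) yields the same \(x\). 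Note first that by Lemma~\ref{lem:rearrange_trace} applied to \(\orig{R} \red^{\star} R\) — which is already forward-only — nothing is gained there; the point is rather that \emph{any} two forward traces from \(\orig{R}\) to \(R\) must be compared directly.

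First I would isolate the invariant that makes \(\mathaddress_{\conf}\) path-independent. Unfolding the defining clauses \eqref{fun1}--\eqref{fun2}, after processing a prefix \(R_k \red^{\star} R\) one has a configuration \(x_k\), a bijection \(f_k\) between \(x_k\) and \(\ids(R_k)\setminus\ids(\orig{R})\) (the identifiers accumulated in the memory of \(R_k\)), and the structural identity \(\encc{\erase{R_k}} = \conf \setminus x_k\). The key observation is that \(x_k\) is determined by \(R_k\) alone, independently of how \(R_k\) was reached: indeed the event \(e\) adjoined at each step is, by the side conditions in \eqref{fun1} (label \(\labl(e)=\alpha\), residual \(\encc{\erase{R_2}} = \conf\setminus(x\cup\{e\})\), and the order condition \(j <_{R_2} i \iff f(j) <_{x\cup\{e\}} e\)), \emph{uniquely} pinned down — this is exactly the uniqueness argument carried out in the proof of Proposition~\ref{prop-soundness-rccs}, where the singly-labelled hypothesis rules out the ambiguous case \(m_1 = m_2\). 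So I would state and prove, by induction on the length of an arbitrary trace \(\sigma : \orig{R} \red^{\star} R'\), the stronger claim: \(\encr{R'}_{\sigma}\) depends only on \(R'\) (equivalently, on \(\orig{R}\) and the memory content of \(R'\)), not on \(\sigma\).

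For the inductive step, given two traces \(\sigma, \sigma'\) ending at the same \(R'\), consider their last transitions. If both traces are empty, \(R' = \orig{R}\) and both give \(\emptyset\). Otherwise, by the induction hypothesis the configurations reached just before the last step depend only on the corresponding intermediate terms; and the final event adjoined is forced by the uniqueness clauses recalled above, so both traces produce \(x_k \cup \{e\}\) for the same \(e\). The technically delicate point — and the one I expect to be the main obstacle — is handling traces that visit \emph{different} intermediate terms \(R_1, \dots, R_{n-1}\) and \(R_1', \dots, R_{m-1}'\) (possibly of different lengths, and up to structural congruence), since then the induction hypothesis does not directly apply to matching prefixes. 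The way around this is to observe that by Lemma~\ref{lem:corresp_ccs_rccs} each forward RCCS trace projects onto a CCS trace \(\erase{\orig{R}} \redl{\alpha_1}\dots\redl{\alpha_n}\erase{R'}\), and the encoding on the CCS side (Lemma~\ref{lem:bisim_stfam_ccs}, clause~1, used iteratively) produces an isomorphism \(\conf\setminus x \iso \encc{\erase{R'}}\) whose defining configuration \(x\) is characterised purely by the residual structure \(\encc{\erase{R'}} = \conf\setminus x\) plus the label-and-order data recorded in the memory of \(R'\) — data that, being stored in \(R'\) itself, is manifestly trace-independent. Thus both \(\encr{R'}_{\sigma}\) and \(\encr{R'}_{\sigma'}\) satisfy the same characterising conditions on \(\conf\), and uniqueness (again the singly-labelled hypothesis, via Proposition~\ref{prop-soundness-rccs}) forces them to coincide. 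Specialising \(R' = R\) gives the proposition, with \(x = \encr{R}_{\sigma_0}\) the configuration furnished by Proposition~\ref{prop-soundness-rccs}.
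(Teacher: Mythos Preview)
Your proposal reaches the right destination but takes an unnecessary detour, and leaves the decisive step as an assertion rather than a proof.

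The induction on trace length you set up in the second-to-last paragraph does not work cleanly, and you correctly identify why: two traces \(\sigma,\sigma'\) to the same \(R'\) generally pass through different intermediate terms, so the induction hypothesis cannot be applied to matching prefixes. You then abandon that induction and argue instead that both \(\encr{R'}_{\sigma}\) and \(\encr{R'}_{\sigma'}\) satisfy the same ``characterising conditions'' (same residual \(\conf\setminus x=\encc{\erase{R'}}\), and a label-and-order-preserving bijection with the identifiers in the memory of \(R'\)), and that ``uniqueness (again the singly-labelled hypothesis, via Proposition~\ref{prop-soundness-rccs}) forces them to coincide.'' This last sentence is the whole content of the proposition, and it is not what Proposition~\ref{prop-soundness-rccs} gives you: that proposition establishes uniqueness of the event adjoined at each step \emph{along a fixed trace}, not that two configurations arising from different traces coincide.

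The paper's proof is exactly the argument you gesture at but do not carry out. It skips the trace induction entirely and proves directly the abstract statement you need: if \(x,y\in\conf\) admit a label-and-order-preserving bijection and \(\conf\setminus x=\conf\setminus y\), then \(x=y\). This is done by a short induction on \(\card(x)=\card(y)\): at the inductive step one has \(x\cup\{e\}\) and \(x\cup\{e'\}\) with \(\labl(e)=\labl(e')\), and the singly-labelled hypothesis (Definition~\ref{def:singlylabelled}) applied to the configuration \(x\) forces \(e=e'\). That is the missing lemma in your sketch; once stated, the trace-level reasoning is superfluous.
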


\begin{proof}
	Denote \(\conf=\encc{\erase{\orig{R}}} = (E,C,\labl)\).
	From the definition of \(\encr{R}_{\sigma}\) it suffices to show that for any configurations \(x,y\in\conf\) such that there exists a label and order preserving bijection between the two and such that \(\conf\setminus x=\conf\setminus y\), \(x=y\) holds.

	We prove it by induction on the size of \(x\) and \(y\).
	Suppose that there exists two events \(e,e'\) such that \(y=x\cup\{e\}\) and \(z=x\cup\{e'\}\) are configurations of \(\conf\) as well, with \(f:y\leftrightarrow z\).
	Since \(R\) is singly labelled (\autoref{def:singlylabelled}), if \(\labl(e)=\labl(e')\), then \(e=e'\).
\end{proof}

\begin{example}
	Consider the configuration structure in \autoref{ex_unif_c}, encoding the process \(P=a.(a \mid c)+b\).
	The process
	\[S= \big(\mem{2,a,0}.\fork.\mem{1,a,b}\rhd 0 \big) \mid \big(\mem{3,c,0}.\fork.\mem{1,a,b}\rhd 0\big)\]
	can be reached on the trace \(\sigma_1: \emptymem \rhd P\fwlts{1}{a}\fwlts{2}{a}\fwlts{3}{c} S\) or \(\sigma_1: \emptymem \rhd P\fwlts{1}{a}\fwlts{3}{c}\fwlts{2}{a} S\).
	However both traces lead to the same encoding of \(S\).
\end{example}

Hence we write \(\enc{R}\) instead of \(\enc{R}_{\sigma}\).
It is an essential property to prove the existence of a bisimulation relation between a process and its encoding.

\begin{lemma}[Operational correspondence between a \(R\) and \(\encr{R}\)]
	\label{lem:operational_corresp}
	Let \(R\) a process and \(\enc{R}=(\conf,x)\) its interpretation.
	\begin{enumerate}
		\item \(\forall \alpha\), \(S\) and \(i\in\ids\) such that \(R\fwlts{i}{\alpha}S\) then \(\encr{R}\redl{\alpha}\encr{S}\);
		\item \(\forall \alpha\), \(S\) and \(i\in\ids\) such that \(R\bwlts{i}{\alpha}S\) then \(\encr{R}\revredl{\alpha}\encr{S}\);
		\item \(\forall e\in E\), \((\conf,x)\redl{\labl(e)}(\conf, x\cup\{e\})\) then \(\exists S\), such that for some \(i\in\ids\), \(R\fwlts{i}{\alpha}S\) and \(\encr{S}=(\conf,x\cup\{e\})\).
		\item \(\forall e\in E\), \((\conf,x)\revredl{\labl(e)}(\conf, x\setminus{e})\) then \(\exists S\), such that for some \(i\in\ids\), \(R\bwlts{i}{\alpha}S\) and \(\encr{S}=(\conf,x\setminus{e})\).
	\end{enumerate}
\end{lemma}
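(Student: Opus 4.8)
The plan is to prove the four items by exploiting the two facts already in place: \autoref{lem:corresp_ccs_rccs}, which gives a strong forward bisimulation between $R$ and $\erase{R}$, and \autoref{lem:bisim_stfam_ccs}, which gives the operational correspondence between a CCS term and its configuration-structure encoding. The key observation linking these to the reversible encoding is the defining property of $\mathaddress_{\conf}$: if $\encr{R}=(\conf,x)$ with $\conf=\encc{\erase{\orig R}}$, then $\encc{\erase R}=\conf\setminus x$, and moreover the address bijection $f$ relating $x$ to $R$ is label- and order-preserving. So all four items reduce to statements about how one forward step of $\erase R$ corresponds to removing a single event from $\conf\setminus x$, i.e.\ to adding a single event to $x$ inside $\conf$.

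For item (1), suppose $R\fwlts{i}{\alpha}S$. By \autoref{lem:corresp_ccs_rccs}, $\erase R\redl{\alpha}\erase S$, so by \autoref{lem:bisim_stfam_ccs}(1) there is an event $e\in E(\conf\setminus x)$ with $\labl(e)=\alpha$ and $(\conf\setminus x)\setminus e\iso\encc{\erase S}$. Since the events of $\conf\setminus x$ are exactly the events of $\conf$ not in $x$, this $e$ is an event of $\conf$ with $x\cup\{e\}\in\conf$, and $\conf\setminus(x\cup\{e\})=(\conf\setminus x)\setminus e\iso\encc{\erase S}$. One then checks that $x\cup\{e\}$ together with $f\cup\{e\leftrightarrow i\}$ satisfies all the clauses required of $\mathaddress_{\conf}$ in \autoref{def:encod_rccs} — the label clause is immediate, the $\conf\setminus(\cdot)$ clause was just verified, and the order clause $j<_{S}i\iff f(j)<_{x\cup\{e\}}e$ is exactly the argument already carried out in the proof of \autoref{prop-soundness-rccs} (causal dependency in the memory of $S$ corresponds to causal dependency in $\conf$). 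By \autoref{prop:unique_conf_trace} this configuration is independent of the chosen trace, so $\encr{S}=(\conf,x\cup\{e\})$ and hence $\encr R\redl{\alpha}\encr S$ by \autoref{def:lts_stfam}. Item (3) is the converse read off the same correspondence: given $e$ with $x\cup\{e\}\in\conf$, \autoref{lem:bisim_stfam_ccs}(2) applied to $\erase R$ and $\conf\setminus x=\encc{\erase R}$ produces a CCS transition $\erase R\redl{\labl(e)}P'$, which by \autoref{lem:corresp_ccs_rccs} lifts to $R\fwlts{i}{\labl(e)}S$ for some $i$ (picking $i\notin\ids(R)$ so the side conditions of \drule{ParL}/\drule{ParR} hold and $\erase S=P'$), and then item (1) already establishes $\encr S=(\conf,x\cup\{e\})$.

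For items (2) and (4), the backward direction, the clean approach is to reduce to the forward case rather than redo the analysis. If $R\bwlts{i}{\alpha}S$, then $S\fwlts{i}{\alpha}R$ by the symmetry of the forward/backward rules in \autoref{fig:lts_rccs} (each \drule{In$-$}/\drule{Out$-$}/\drule{Com$-$} rule is the exact reverse of its $+$ counterpart, and \drule{ParL}, \drule{ParR}, \drule{Hide}, \drule{Congr} are direction-agnostic). Both $R$ and $S$ are coherent with $\orig R=\orig S$, so $\conf=\encc{\erase{\orig R}}=\encc{\erase{\orig S}}$ is the same structure for both. Applying item (1) to $S\fwlts{i}{\alpha}R$ gives $\encr S=(\conf,y)$ and $\encr R=(\conf,y\cup\{e\})$ with $\labl(e)=\alpha$; writing $x=y\cup\{e\}$, this says $\encr R=(\conf,x)$ and $\encr S=(\conf,x\setminus\{e\})$, which is exactly $\encr R\revredl{\alpha}\encr S$ by \autoref{def:lts_stfam}. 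Item (4) is dual: given $x\setminus\{e\}\in\conf$, set $S$ to be the term obtained by the backward transition $R\bwlts{i}{\alpha}S$ removing the memory event mapped by $f$ to $e$ — such a backward transition exists because $e$ being removable from $x$ (i.e.\ $\leqslant_x$-maximal in $x$, since $x\setminus\{e\}$ is still a configuration) forces, via the order clause in the address function, the corresponding memory event to be $\leqslant_R$-maximal and hence at the top of some thread's memory after suitable use of \drule{Congr} — and then item (2) gives $\encr S=(\conf,x\setminus\{e\})$.

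The main obstacle I expect is not any single item but the bookkeeping that glues the three layers together: translating \enquote{removable/maximal event in a configuration} into \enquote{memory event at the top of a thread, up to structural congruence} for item (4), and dually ensuring in item (3) that the identifier $i$ and the shape of $S$ can be chosen so that $\erase S$ is literally (not just isomorphically) the CCS term produced by \autoref{lem:bisim_stfam_ccs}. Both of these are essentially the coherence/maximality analysis already done inside the proof of \autoref{prop-soundness-rccs} (the $m_1\neq m_2$ case and the causal-dependency argument), so the honest way to present the proof is to invoke that argument rather than repeat it, and to lean on \autoref{prop:unique_conf_trace} to suppress all trace-dependence. Everything else is a direct chase through \autoref{lem:corresp_ccs_rccs}, \autoref{lem:bisim_stfam_ccs}, and the defining clauses of $\mathaddress_{\conf}$.
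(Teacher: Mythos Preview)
Your proposal is correct and reaches the same conclusions by the same toolkit (\autoref{lem:corresp_ccs_rccs}, \autoref{lem:bisim_stfam_ccs}, \autoref{prop-soundness-rccs}, \autoref{prop:unique_conf_trace}), but the routing differs in two places worth noting. For item~(1) the paper is more economical: since \(\orig R=\orig S\), it simply extends the trace \(\orig R\fw^{\star}R\) by the step \(R\fwlts{i}{\alpha}S\) and reads off \(\encr S=(\conf,x_R\cup\{e\})\) directly from the definition of \(\mathaddress_{\conf}\) together with \autoref{prop-soundness-rccs}; there is no need to descend to \(\erase R\), invoke \autoref{lem:bisim_stfam_ccs}(1), and then re-verify the address clauses by hand, since that verification is exactly what \autoref{prop-soundness-rccs} already packaged. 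Your detour is not wrong, just redundant. For items~(2) and~(4) you reduce to the forward case via the syntactic symmetry \(R\bwlts{i}{\alpha}S\iff S\fwlts{i}{\alpha}R\), whereas the paper argues \enquote{similarly} using the trace up to \(R\) and \autoref{prop:unique_conf_trace} to ensure the backward step lands on the right configuration regardless of which trace is chosen; your reduction is arguably cleaner and makes the role of the forward/backward rule symmetry explicit, while the paper's phrasing keeps the argument parallel to item~(1). Item~(3) is handled identically in both.
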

\begin{proof}
	\begin{enumerate}
		\item As \(R\fwlts{i}{\alpha}S\), \(\orig{R}=\orig{S}\), we have that \(\encr{S}=(\conf,x_s)\), where
		      \(x_S= \funaddress{\conf}{\emptyset}{\orig{R}\red^\star S}=
		      \funaddress{\conf}{\emptyset}{\orig{R}\red^\star R\redl{\alpha} S}= x_R\cup\{e\}\) by \autoref{prop-soundness-rccs}.
		      As \(\encr{R}=(\conf,x_R)\) it follows that \((\conf,x_R)\redl{\alpha}(\conf,x_S)\).
		\item The proof for the backward direction is similar except that it uses the trace up to \(R\).
		      It uses \autoref{prop:unique_conf_trace}, which allows us to backtrack on any path from the emptyset and leading to \(x_R\).
		\item From \((\conf,x)\redl{\labl(e)}(\conf, x\cup\{e\})\) we have that \(x\cup\{e\}\in\conf\).
		      Then \(\{e\}\in\conf\setminus x\).
		      From \(\encr{R}=(\conf,x)\) we have that \(\conf\setminus x=\encc{\erase{R}}\), hence \(\{e\}\in\encc{\erase{R}}\).
		      We use \autoref{lem:bisim_stfam_ccs} and obtain that \(\exists P\) such that \(\erase{R}\redl{\labl(e)}P\).
		      Then due to the strong bisimulation between a RCCS term and its corresponding CCS term in \autoref{lem:corresp_ccs_rccs}, we have that, for some \(i\), \(R\fwlts{i}{\alpha}S\), where \(\erase{S}=P\).
		      That \(\encr{S}=(\conf,x\cup\{e\})\) follows from a similar argument to above and from \autoref{prop:unique_conf_trace}.\qedhere
		\item It is similar to the case above.
	\end{enumerate}
\end{proof}

\section{Contextual equivalence on configuration structures}
\label{sec:def-bisim-cs}

In this section we introduce a notion of context for the configurations structures and then adapt the back-and-forth barbed bisimulation to configurations structures (\autoref{bisim-cs}).
We define hereditary history preserving bisimulation and use two families of relations, denoted \(F_i\) and \(B_i\), to inductively approximate the bisimulation (\autoref{Fn-Bn-and-h}).
We use these relations to show that two processes are barbed congruent whenever their denotations are in the HHPB relation (\autoref{main-thm}).
Once the HHPB has been probed to be a congruence (\autoref{prop:HHPB_congr}), one direction is straightforward , whereas the other is more technical and, as in CCS~\cite{Milner1992}, follows by contradiction.
It uses the relations \(\Forw{i}\) and \(\Backw{i}\) (\autoref{ForwBackwDef}) to build contexts that discriminate processes that are not bisimilar.

\subsection{Contexts for configurations structures}
\label{sec:context_stfam}
Contexts for configurations structures have never been defined as it is not clear what a configuration structure with a hole could be.
However, if a structure \(\conf\) has an operational meaning, \ie there exists \(P\) a process such that \(\conf=\enc{P}\), we use a CCS context \(C[\cdot]\) to build a configuration structure \(\enc{C[P]}\).

When analysing the reductions of a process in context, we need to know the contribution the process and the context have in the reduction.
To this aim we associate to the context \(C[\cdot]\) instantiated by a process \(P\) a projection morphism \(\pi_{C, P}:\enc{C[P]}\to\enc{P}\) that retrieve in \(\enc{C[P]}\) the parts of a configuration belonging to \(\enc{P}\).

Following \autoref{prop:context_that_counts}, we continue to consider only context made of parallel compositions, but the following definition can be extended to arbitrary contexts~\cite[Definition 46]{Cristescu2015}.

\begin{definition}
	\label{def:conf_context}
	Let \(C[\cdot]\) a context, and \(P\) a process.
	The \emph{projection} \(\pi_{C, P}:\enc{C[P]}\to\enc{P}\) is defined on the structure of \(C\) as follows:
	\begin{itemize}
		\item if \(C[\cdot]=C'[\cdot]\mid P'\) then \(\pi_{C, P}:\enc{C'[P]\mid P'}\to\enc{P}\) is defined as \(\pi_{C, P}(e)=\pi_{C', P}(\pi_1(e))\), where \(\pi_1:\enc{C'[P]\mid P'}\to\enc{C'[P]}\) is the projection morphism defined by the product in \autoref{cat-op-def};
		\item if \(C[\cdot]=[\cdot]\) then \(\pi_{C, P}:\enc{C[P]}\to\enc{P}\) is the identity.
	\end{itemize}
\end{definition}
We naturally extend \(\pi_{C, P}\) to configurations, and prove by case analysis that \(\pi_{C, P}:\enc{C[P]}\to\enc{P}\) is a morphism.

\subsection{Relation induced by barbed congruence on configurations structures}
\label{sec:barbed_congruence_stfam}
We define a relation on configurations structures that have an operational meaning %
and we show it is the relation induced by the barbed congruence in RCCS (\autoref{def:sbfc_rccs}).
We call the relation barbed back-and-forth congruence, to highlight its meaning, though it is not strictly speaking a congruence on configurations structures.

\begin{definition}[Back-and-forth barbed congruence on configurations structures]
	\label{bisim-cs}
	A \emph{back-and-forth barbed bisimulation on configurations structures} is a symmetric relation \(\rel\subseteq C_1\times C_2\) such that \((\emptyset,\emptyset)\in {\rel}\), and if \((x_1,x_2)\in {\rel}\), then
	\begin{align*}
		x_1\revredl{e_1}x_1'         & \implies \begin{multlined}[t]
		\exists x_2'\in C_2\text{ \st~} x_2\revredl{e_2}x_2',\\
		\text{with }\labl_1(e_1)=\labl_2(e_2)=\tau\text{ and } (x_1',x_2')\in {\rel};
		\end{multlined}
		\tag{back}\\
		x_1\redl{e_1}x_1'            & \implies \begin{multlined}[t] \exists x_2'\in C_2 \text{ \st~}x_2\redl{e_2}x_2',                                     \\
		\text{with }\labl_1(e_1)=\labl_2(e_2)=\tau \text{ and }(x_1',x_2')\in {\rel};
		\end{multlined}
		\tag{forth} \\
		\text{if }\exists e_1\in E_1 & \text{ \st~}\labl_1(e_1)\neq\tau \begin{multlined}[t]\text{ and }x_1\redl{e_1}x_1' \text{ then } \exists x_2'\in C_2 \\
		\text{\st~}x_2\redl{e_2}x_2'\text{, with }\labl_1(e_1)=\labl_2(e_2).\end{multlined}\tag{barbed}
	\end{align*}
	Let \(\conf_1\bfbisim\conf_2\) if and only if there exists a back-and-forth barbed bisimulation between \(\conf_1\) and \(\conf_2\).

	Define \(\bfcong\) the \emph{back-and-forth barbed congruence} induced on configurations structures as a symmetric relation on configurations structures that have an operational meaning such that
	\[\enc{P_1}\bfcong\enc{P_2} \iff\forall C, \enc{C[P_1]}\bfbisim\enc{C[P_2]}.\]
\end{definition}

We now prove that this relation is the relation induced on the encoding of processes by the barbed back-and-forth congruence (\autoref{def:sbfc_rccs}).
We begin by proving it in the non-contextual case.
We remind the reader that, as we are going to manipulate encoding of RCCS terms, some restrictions on the terms applies (\autoref{rem-concur}).
\begin{proposition}
	\label{prop:bfbisim_corresp}
	For all \(P\) and \(Q\), \(\emptymem \rhd P \bfbisim\emptymem \rhd Q\iff \enc{P}\bfbisim\enc{Q}\).
\end{proposition}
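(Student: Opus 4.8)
The plan is to transport each of the two bisimulations across the operational correspondence of \autoref{lem:operational_corresp}, which gives a tight match between the reversible transitions of a coherent process $R$ and the forward/backward moves of its encoding $\encr{R} = (\conf, x)$. Concretely, for the forward direction I would assume $\emptymem \rhd P \bfbisim \emptymem \rhd Q$ via a back-and-forth barbed bisimulation $\rel_{\text{CCS}}$ on coherent processes, and define a candidate relation on configurations by
\[
	\rel_{\text{Conf}} = \{(x_R, x_S) \setst R \rel_{\text{CCS}} S,\ \encr{R} = (\enc{P}, x_R),\ \encr{S} = (\enc{Q}, x_S)\}.
\]
I would first check $(\emptyset,\emptyset) \in \rel_{\text{Conf}}$, which holds since $\emptymem \rhd P \rel_{\text{CCS}} \emptymem \rhd Q$ and the encoding of an empty-memory process is the empty configuration (first \autoref{ex:auto}-style example after \autoref{def:encod_rccs}). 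Symmetry of $\rel_{\text{Conf}}$ follows from symmetry of $\rel_{\text{CCS}}$. The three closure conditions of \autoref{bisim-cs} are then discharged one by one: given $(x_R, x_S) \in \rel_{\text{Conf}}$ and a move $x_R \redl{e_1} x_R'$ with $\labl(e_1) = \tau$, item~3 of \autoref{lem:operational_corresp} produces some $S_1$ and $i$ with $R \fwlts{i}{\tau} S_1$ and $\encr{S_1} = (\enc{P}, x_R')$; the \ref{avant} clause of $\rel_{\text{CCS}}$ gives $S \fwlts{i}{\tau} S'$ with $S_1 \rel_{\text{CCS}} S'$; and item~1 of \autoref{lem:operational_corresp} turns this into $x_S \redl{e_2} x_S'$ with $\labl(e_2) = \tau$ and $\encr{S'} = (\enc{Q}, x_S')$, so $(x_R', x_S') \in \rel_{\text{Conf}}$. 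The backward clause is handled identically using items~4 and~2 and the \ref{arriere} clause; the barbed clause uses a non-$\tau$ forward move together with \autoref{def:barb_rccs}, the \ref{barbe} clause of $\rel_{\text{CCS}}$, and again items~3 and~1 (this time without requiring the target to be related). This shows $\enc{P} \bfbisim \enc{Q}$.

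For the converse, I would run the same translation in the other direction: from a back-and-forth barbed bisimulation $\rel_{\text{Conf}}$ on configurations witnessing $\enc{P} \bfbisim \enc{Q}$, define
\[
	\rel_{\text{CCS}} = \{(R, S) \setst \orig{R} = \emptymem \rhd P,\ \orig{S} = \emptymem \rhd Q,\ \encr{R} = (\enc{P}, x_R),\ \encr{S} = (\enc{Q}, x_S),\ (x_R, x_S) \in \rel_{\text{Conf}}\}
\]
on coherent derivatives of $\emptymem \rhd P$ and $\emptymem \rhd Q$. Again $(\emptymem \rhd P, \emptymem \rhd Q) \in \rel_{\text{CCS}}$ and symmetry is inherited. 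For the \ref{avant} clause: given $R \fwlts{i}{\tau} R'$, item~1 of \autoref{lem:operational_corresp} yields $x_R \redl{e_1} x_R'$ with $\labl(e_1) = \tau$ and $\encr{R'} = (\enc{P}, x_R')$; the forth clause of $\rel_{\text{Conf}}$ gives $x_S \redl{e_2} x_S'$ with $\labl(e_2) = \tau$; item~3 then produces $S \fwlts{j}{\tau} S'$ with $\encr{S'} = (\enc{Q}, x_S')$. The only subtlety here is that \autoref{lem:operational_corresp}(3) returns \emph{some} identifier $j$, whereas \autoref{def:sbfc_rccs} asks that $S$ mimic the move with the \emph{same} identifier $i$; I would address this by invoking uniform renaming of identifiers in structural congruence (\autoref{def:congru_rccs}) to rename $j$ to $i$ in $S'$, noting that \autoref{lem:operational_corresp} is stated up to the encoding, which is invariant under $\congru$. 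The backward and barbed clauses are symmetric, using items~2 and~4 and \autoref{def:barb_rccs}.

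\textbf{Main obstacle.} The delicate point is ensuring the back-translation stays inside the class of coherent, singly labelled derivatives sharing a common origin, and that the identifiers line up so that \ref{arriere} and \ref{avant} in \autoref{def:sbfc_rccs} are literally satisfied rather than merely up to renaming. Coherence is preserved by \autoref{lem:rearrange_trace} and the Unique origin lemma, so any $R$ reachable from $\emptymem \rhd P$ has $\orig{R} = \emptymem \rhd P$ and stays coherent under further forward/backward steps; the singly-labelled hypothesis propagates because it is a property of the (fixed) encoding $\enc{P}$, not of the particular derivative. The identifier-matching issue is genuinely the crux: I would argue that since both $\rel$'s are defined via the encoding, and the encoding quotients away identifier choices (\autoref{prop:unique_conf_trace} and the uniform-renaming clause of $\congr$), one can always choose the fresh identifier on the $S$ side to coincide with the one used on the $R$ side — but this should be spelled out carefully rather than waved through, since it is exactly the place where the reversible setting differs from the purely forward CCS argument.
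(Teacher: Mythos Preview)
Your proposal is correct and follows essentially the same route as the paper: both directions transport the given bisimulation through \autoref{lem:operational_corresp}, defining the candidate relation on the other side exactly as you do and checking the three clauses of \autoref{bisim-cs} (resp.\ \autoref{def:sbfc_rccs}) one at a time. In fact you are more careful than the paper on the identifier-matching point: the paper's proof of the $\Leftarrow$ direction simply writes $S\fwlts{i'}{\tau}S'$ with a possibly different identifier $i'$ and moves on, relying implicitly on the ``up to uniform renaming of identifiers'' clause in \autoref{def:congru_rccs} that you make explicit.
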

\begin{proof}
	\begin{itemize}
		\item \(\emptymem \rhd P \bfbisim\emptymem \rhd Q\implies \enc{P}\bfbisim\enc{Q}\).
		      Let \({\relCCS}\) be a back-and-forth barbed bisimulation between \(P\) and \(Q\).
		      We show that the following relation
		      \[
		      	{\rel} =
		      	\begin{multlined}[t]
		      		\{(x_1,x_2) \setst x_1\in\enc{P}, x_2\in\enc{Q}, \exists R, S\text{ \st~}\orig{R}=P,\\
		      		\orig S = Q, R \relCCS S\text{ and }\enc{R}=(\enc{P},x_1), \enc{S}=(\enc{Q},x_2)\}
		      	\end{multlined}
		      \]
		      is a back-and-forth barbed bisimulation between \(\enc{P}\) and \(\enc{Q}\).

		      We have that \((\emptyset,\emptyset)\in {\rel}\), let \((x_1,x_2)\in {\rel}\).
		      We have to show that the conditions in \autoref{bisim-cs} hold.
		      Suppose that \(x_1\redl{e_1}x_1'=x_1\cup\{e_1\}\), for \(\labl(e_1)=\tau\).
		      \begin{align*}
		      	x_1\redl{e_1}x_1' & \implies (\enc{P},x_1)\redl{\labl(e_1)}(\enc{P}, x_1') \tag{From \autoref{def:lts_stfam}}                           \\
		      	                  & \implies R\fwlts{i}{\labl(e_1)}R'\text{ s.t. }\enc{R'}=(\enc{P}, x_1') \tag{From \autoref{lem:operational_corresp}} \\
		      	                  & \implies S\fwlts{i'}{\tau}S' \tag{From \(R \relCCS S\)}                                                             \\
		      	                  & \implies (\enc{Q},x_2)\redl{\labl(e_2)}(\enc{Q},x_2') \tag{From \autoref{lem:operational_corresp}}
		      \end{align*}
		      with \(\labl(e_2)=\tau\).
		      We have then \((x_1',x_2')\in {\rel}\).

		      We proceed in a similar manner to show that conditions on the backward transitions and on the barbs hold.

		\item \(\enc{P}\bfbisim\enc{Q}\implies\emptymem\rhd P\bfbisim\emptymem\rhd Q\).
		      Let \({\relCONF}\) be a back-and-forth barbed bisimulation between \(\enc{P}\) and \(\enc{Q}\).
		      We show that the following relation
		      \begin{align*}
		      	{\rel} =
		      	\begin{multlined}[t]
		      	\{(R,S) \setst \erase{\orig{R}}=P, \erase{\orig{S}} = Q\text{ and }\enc{R}=(\enc{P},x_1), \\
		      	\enc{S}=(\enc{Q},x_2),\text{ with }(x_1,x_2)\in {\relCONF} \}
		      	\end{multlined}
		      \end{align*}
		      is a back-and-forth barbed bisimulation between \(P\) and \(Q\).
		      Let \((R,S)\in {\rel}\), the following holds:
		      \begin{align*}
		      	R\fwlts{i}{\tau}R' & \implies(\enc{P},x_1)\redl{\labl(e_1)}(\enc{P}, x_1') \tag{From \autoref{lem:operational_corresp}} \\
		      	                   & \implies(\enc{Q},x_2)\redl{\labl(e_2)}(\enc{Q},x_2') \tag{From \((x_1,x_2)\in {\relCONF}\)}        \\
		      	                   & \implies S\fwlts{i'}{\tau}S' \tag{From \autoref{lem:operational_corresp}}
		      \end{align*}
		      where \(x_1'=x_1\cup\{e_1\}\), \(x_2'=x_2\cup\{e_2\}\) and \(\labl(e_1)=\labl(e_2)=\tau\).
		      We have that
		      \(\orig {R'} = P\), \(\orig {S'} = Q\), \(\enc{R'}=(\enc{P},x_1')\), \(\enc{S'}=(\enc{Q},x_2')\) and \((x_1',x_2')\in {\relCONF}\).
		      Hence \((R',S')\in {\rel}\).

		      To prove that the remaining conditions on the pair \((R,S)\) holds as well is similar.
	\end{itemize}
\end{proof}

The contextual version of the proposition is straightforward.

\begin{lemma}
	\label{soundness-bisim}
	For all singly labelled processes \(R\) and \(S\), \(\orig{R}\sbfbc\orig{S}\iff\encc{\erase{\orig{R}}}\sbfbc\encc{\erase{\orig{S}}}\).
\end{lemma}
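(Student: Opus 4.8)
The plan is to reduce this contextual statement to its non-contextual counterpart \autoref{prop:bfbisim_corresp}, by observing that an RCCS context applied to an origin collapses, up to structural congruence, to an ordinary CCS context applied to the corresponding memory-less term. Write \(P = \erase{\orig{R}}\) and \(Q = \erase{\orig{S}}\), so that \(\orig{R} = \emptymem \rhd P\), \(\orig{S} = \emptymem \rhd Q\) (an origin carries an empty memory and is its own origin), and \(\encc{\erase{\orig{R}}} = \enc{P}\), \(\encc{\erase{\orig{S}}} = \enc{Q}\). Unfolding \autoref{def:sbfc_rccs}, \(\orig{R} \sbfbc \orig{S}\) states that \(\context[\orig{R}] \bfbisim \context[\orig{S}]\) for every RCCS context \(\context[\cdot]\); unfolding \autoref{bisim-cs}, \(\enc{P} \sbfbc \enc{Q}\) states that \(\enc{C[P]} \bfbisim \enc{C[Q]}\) for every CCS context \(C[\cdot]\). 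Both quantifications range only over contexts of the shape \([\cdot]\) or \(P' \mid [\cdot]\) (by \autoref{def:rccs_context} on one side, by the reduction justified through \autoref{prop:context_that_counts} on the other), so it suffices to establish, context by context, the equivalence \(\context[\orig{R}] \bfbisim \context[\orig{S}] \iff \enc{C[P]} \bfbisim \enc{C[Q]}\), pairing each \(\context[\cdot]\) with the syntactically identical \(C[\cdot]\).

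The first step computes \(\context[\orig{R}]\). If \(C[\cdot] = [\cdot]\), then \(\context[\orig{R}] = \orig{R} = \emptymem \rhd P = \emptymem \rhd C[P]\). If \(C[\cdot] = P' \mid [\cdot]\), then by \autoref{def:rccs_context} we have \(\context[\orig{R}] = \fork.\emptymem \rhd P' \mid \addfork(\emptymem \rhd P)\); since the memory of \(\emptymem \rhd P\) is already empty, \(\addfork(\emptymem \rhd P) = \fork.\emptymem \rhd P\), and the distribution-of-memory congruence rule of \autoref{def:congru_rccs} gives \(\context[\orig{R}] = \fork.\emptymem \rhd P' \mid \fork.\emptymem \rhd P \congru \emptymem \rhd (P' \mid P) = \emptymem \rhd C[P]\). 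In every case \(\context[\orig{R}] \congru \emptymem \rhd C[P]\), and symmetrically \(\context[\orig{S}] \congru \emptymem \rhd C[Q]\).

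The second step uses that the back-and-forth barbed bisimulation \(\bfbisim\) on RCCS absorbs structural congruence. Because rule \drule{Congr} of \autoref{fig:lts_rccs} derives transitions modulo \(\congru\), congruent terms have exactly the same forward transitions, backward transitions and barbs; hence \(\congru\) is itself a back-and-forth barbed bisimulation, so \({\congru} \subseteq {\bfbisim}\), and since \(\bfbisim\) is transitive, \(R_1 \congru R_2 \bfbisim S_2 \congru S_1\) entails \(R_1 \bfbisim S_1\). Therefore \(\context[\orig{R}] \bfbisim \context[\orig{S}]\) holds iff \(\emptymem \rhd C[P] \bfbisim \emptymem \rhd C[Q]\) does, which by \autoref{prop:bfbisim_corresp} holds iff \(\enc{C[P]} \bfbisim \enc{C[Q]}\) does. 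As the correspondence \(\context[\cdot] \leftrightarrow C[\cdot]\) is onto in both directions, quantifying over all contexts yields \(\orig{R} \sbfbc \orig{S} \iff \enc{P} \sbfbc \enc{Q}\), which is the claim.

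The substantive ingredient, \autoref{prop:bfbisim_corresp}, is already at hand; the only genuine care needed is the fork bookkeeping of the first step — namely that \(\addfork\) applied to an empty-memory thread is exactly \(\fork.\emptymem \rhd P\), so that the distribution-of-memory rule collapses \(\context[\orig{R}]\) onto \(\emptymem \rhd C[P]\) — together with the elementary observation that \(\bfbisim\) absorbs \(\congru\), which is where the argument leans on the RCCS LTS being closed under structural congruence. I expect this last bit of bookkeeping to be the main (and rather mild) obstacle; everything else is a direct consequence of the results already proven.
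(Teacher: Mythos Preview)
Your proof is correct and follows essentially the same route as the paper: unfold both congruences, reduce \(\context[\orig{R}]\) to \(\emptymem \rhd C[P]\) via \autoref{def:rccs_context}, then invoke \autoref{prop:bfbisim_corresp}. You are simply more explicit than the paper about the fork bookkeeping and about \(\bfbisim\) absorbing \(\congru\), both of which the paper leaves implicit in its one-line citation of \autoref{def:rccs_context}.
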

\begin{proof}
	\begin{align*}
		\orig{R}\sbfbc\orig{S} & \iff\forall C[\cdot], \context[\orig{R}]\bfbisim \context[\orig{S}] \tag{From \autoref{def:sbfc_rccs}}                                    \\
		                       & \iff \forall C[\cdot],\emptymem \rhd C[\erase{\orig{R}}]\bfbisim \emptymem \rhd C[\erase{\orig{S}}] \tag{From \autoref{def:rccs_context}} \\
		                       & \iff \forall C[\cdot], \enc{C[\erase{\orig{R}}]} \bfbisim \enc{C[\erase{\orig{S}}]} \tag{From \autoref{prop:bfbisim_corresp}}             \\
		                       & \iff \encc{\erase{\orig{R}}}\sbfbc\encc{\erase{\orig{S}}} \tag{From \autoref{bisim-cs}}
	\end{align*}
\end{proof}

\subsection{Inductive characterisation of HHPB}
\label{sec:inductive_hhpb}
Similarly to the proof in CCS, the correspondence between a contextual equivalence and a non-contextual one necessitates to approximate HHPB with (a family of) inductive relations defined on configuration structures.
If we are interested only in the forward direction (as in CCS), the inductive reasoning starts with the empty set, and constructs the bisimilarity relation by adding pairs of configurations reachable in the same manner from the empty set.
However, to approximate HHPB, we need to have an inductive reasoning on the backward transition as well (\autoref{ForwBackwDef}).
These relations are of major importance to prove our main theorem (\autoref{main-thm}), as they re-introduce the possibility of an inductive reasoning thanks to a stratification of the HHPB relation.

\begin{definition} [Hereditary history preserving bisimilarity]
	\label{sbisimilarity_h-def}
	The hereditary history preserving bisimilarity, denoted \(\hhpb\), is the union of all HHPB relations (\autoref{def:hhpb}).
\end{definition}

\begin{figure}
	\centering
	\begin{tikzpicture}
		\node (emptyset1) at (0, -1) {\(\emptyset\)};
		\node (a1) at (-1, 0) {\(\{e_1\}\)};
		\node (b1) at (1, 0) {\(\{e_1'\}\)};
		\draw [->] (emptyset1) -- (a1);
		\draw [->] (emptyset1) -- (b1);
		\node[align=center] (labels) at (0, -3) {\(\labl_1(e_1) =\labl_1(e'_1) = a\)};
		\node (emptyset2) at (4, -1) {\(\emptyset\)};
		\node (a2) at (3, 0) {\(\{e_2\}\)};
		\node (b2) at (5, 0) {\(\{e_2'\}\)};
		\draw [->] (emptyset2) -- (a2);
		\draw [->] (emptyset2) -- (b2);
		\node[align=center] (labels) at (4, -3) {\(\labl_2(e_2) = \labl_2(e_2') =a\)};
		\draw[<->,colorp2, thick] (a1) .. controls (-1, -2.5) and (5, -2.5) .. (b2);
		\draw[<->,colorp2, thick] (b1) .. controls (1, -2) and (3, -2) .. (a2);
		\node[colorp2] (f2) at (2, -1.8) {\(f_2\)};
		\draw[<->,colorp3, thick] (a1) .. controls (-1, 2) and (3, 2) .. (a2);
		\draw[<->,colorp3, thick] (b1) .. controls (1, 2) and (5, 2) .. (b2);
		\node[colorp3] (f1) at (2, 1.8) {\(f_1\)};
	\end{tikzpicture}
	\caption{Two possible hereditary history preserving bisimulations}
	\label{fig:bisimilarity}
\end{figure}

\begin{remark}[On the uniqueness of hereditary history preserving bisimilarity]
	Writing \(\conf_1\hhpb\conf_2\) is an abuse of notation as hereditary history preserving \emph{bisimulations} are defined on \(C_1\times C_2\times\power(E_1\times E_2)\).
	Also the union of all bisimulations may contain triples that do not have \enquote{compatible} bijections.
	For instance, we have two possible bisimulations between the configurations structures of \autoref{fig:bisimilarity}:
	\[f_1 = \{ e_1\leftrightarrow e_2, e_1'\leftrightarrow e_2' \} \qquad f_2 = \{e_1\leftrightarrow e_2', e_1'\leftrightarrow e_2\}\]
	However, the bisimilarity relation contains both tuples \( (\{ e_1, e_2\}, \{ e_1', e_2'\}, f_1) \) and \( (\{ e_1, e_2\}, \{ e_1', e_2'\}, f_2) \).

\end{remark}

We give an inductive characterisation of HHPB by reasoning on the structures up to a level: we ignore the configurations that have greater cardinality then the considered level.
HHPB is then the relation obtained when we reach the top level.
Hence we are able to detect, whenever two configurations structures are not HHPB, at which level the bisimulation does no longer hold.

In the following, denote \(\card(x)\) the cardinality of a set \(x\).
\begin{definition}[Maximal and top configurations]
	A configuration \(x\in\conf\) is \emph{maximal} if there is no configuration \(y\in \conf\) such that \(x\subsetneq y\).
	If moreover \(\forall y\in\conf\), \(\card(y)\leqslant\card(x)\) then \(x\) is a \emph{top configuration}.
\end{definition}

\begin{definition}[\(\Forw{i}\), \(\Backw{i}\)]
	\label{ForwBackwDef}
	Given \(\conf_1\), \(\conf_2\) two configurations structures
	define, for all \(x_1 \in C_1\), \(x_2 \in C_2\) and \(f\) a label and order-preserving function:
	\begin{align*}
		(x_1, x_2, f)\in \Forw{i}                                                  & \iff
		\begin{cases}
		\card(x_1) = \card(x_2) = i,                                               & \text{if }x_1\text{ and }x_2\text{ are maximal } \\
		\forall x'_1, \exists x'_2, x_1 \redl{e_1} x'_1, x_2 \redl{e_1} x'_2 \text{ and }\\
		f=f'\restr x_1\text{ \st~} (x'_1, x'_2, f') \in \Forw{i+1}                 & \text{otherwise}
		\end{cases}\\
		(x_1, x_2, f) \in \Backw{i}                                                & \iff
		\begin{cases}
		(x_1, x_2, f) \in \Forw{i}                                                 & \text{if \(i = 0\)}                              \\
		\forall x'_1, \exists x'_2, x_1 \revredl{e_1} x'_1, x_2 \revredl{e_1} x'_2 \text{ and }\\
		f'=f\restr x_2\text{ \st~}(x'_1, x'_2, f') \in \Forw{i-1} \cap \Backw{i-1} & \text{otherwise}
		\end{cases}
	\end{align*}

\end{definition}
The label and order-preserving function in the two relations helps to ensure that configurations that are in relation have the same labels and causal structure.

The relation \(\Backw{i}\) is built on top of \(\Forw{i}\): it tests for the backward steps all the couples that passed the forward test.
It should be remarked that, with this definition, \(\Backw{i} \subseteq \Forw{i}\), but, at the price of slight modifications, one could have defined \(\Forw{i}\) on top of \(\Backw{i}\).

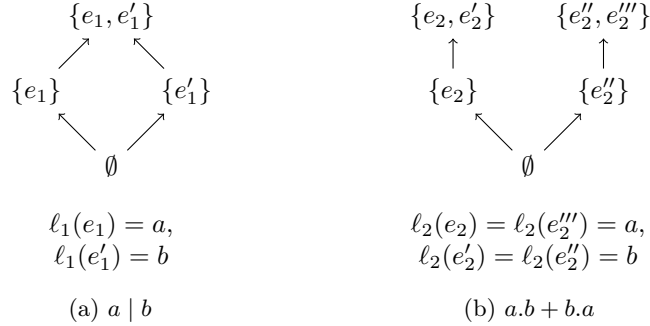
\begin{figure}
	\begin{minipage}[b]{.45\linewidth}
		\centering
		\begin{tikzpicture}
			\node (emptyset) at (0, -1) {\(\emptyset\)};
			\node (a) at (-1, 0) {\(\{e_1\}\)};
			\node (b) at (1, 0) {\(\{e_1'\}\)};
			\node (ab) at (0, 1) {\(\{e_1, e_1'\}\)};
			\draw [->] (emptyset) -- (a);
			\draw [->] (emptyset) -- (b);
			\draw [->] (a) -- (ab);
			\draw [->] (b) -- (ab);
			\node[align=center] (labels) at (0, -2) {\(\labl_1(e_1) = a\),\\ \(\labl_1(e'_1) = b\)};
		\end{tikzpicture}
		\subcaption{\(a \mid b\)}\label{fig:ex_bf_a}
	\end{minipage}
	\begin{minipage}[b]{.45\linewidth}
		\centering
		\begin{tikzpicture}
			\node (emptyset) at (0, -1) {\(\emptyset\)};
			\node (a) at (-1, 0) {\(\{e_2\}\)};
			\node (b) at (1, 0) {\(\{e''_2\}\)};
			\node (ab) at (-1, 1) {\(\{e_2, e'_2\}\)};
			\node (ba) at (1, 1) {\(\{e''_2, e'''_2\}\)};
			\draw [->] (emptyset) -- (a);
			\draw [->] (emptyset) -- (b);
			\draw [->] (a) -- (ab);
			\draw [->] (b) -- (ba);
			\node[align=center] (labels) at (0, -2) {\(\labl_2(e_2) = \labl_2(e_2''') =a\),\\ \(\labl_2(e_2') = \labl_2(e_2'') =b\)};
		\end{tikzpicture}
		\subcaption{\(a.b+b.a\)}\label{fig:ex_bf_b}
	\end{minipage}
	\caption{Encoding parallel and sum in configurations structures}
	\label{fig:ex_bf}
\end{figure}

\begin{example}
	\label{example2}
	Consider the configurations structures in Figures \ref{ex_unif_b} and \ref{ex_unif_c}, the relations \(F_n\) are enough to discriminate them:
	\begin{align*}
		  & F_2 = \big(\{e_1, e_1'\}, \{e_2, e_2'\}\big); \big(\{e_1, e_1'\}, \{e_2'',e_2'''\}\big) \\
		  & \qquad F_1 = \big(\{e_1\}, \{e_2\}\big); \big(\{e_1\}, \{e_2''\}\big)                   \\
		  & \qquad \qquad F_0 = \emptyset
	\end{align*}

	This intuitively is due to the fact that forward transitions are enough to discriminate \(a+a.b\) and \(a.b+a.b\).
	However for comparing the processes \(a \mid b\) and \(a.b+b.a\) whose configurations are in Figures \ref{fig:ex_bf_a} and \ref{fig:ex_bf_b}, we need the backward moves as well.
	Let us first build the \(F_n\) relations:
	\begin{align*}
		  & F_2 = \big(\{e_1,e_1'\},\{e_2,e_2'\}\big); \big(\{e_1,e_1'\};\{e_2'',e_2'''\}\big) \\
		  & \qquad	F_1 = \big(\{e_1\},\{e_2\}\big); \big(\{e_1'\};\{e_2''\}\big)               \\
		  & \qquad \qquad F_0 = \big(\emptyset,\emptyset\big)
	\end{align*}

	We first construct the \(B_0\) relation and then move up in the structures.
	In our example, the \(B_2\) relation breaks the HHPB.
	\begin{align*}
		  & \qquad \qquad B_0 =F_0=\big(\emptyset,\emptyset\big)                 \\
		  & \qquad B_1 = \big(\{e_1\},\{e_2\}\big); \big(\{e_1'\};\{e_2''\}\big) \\
		  & B_2 = \emptyset
	\end{align*}

\end{example}
The following proposition states that pairs of configurations are in a bisimulation relation if they have the same cardinality.
It follows from the fact that any configuration is reachable from the empty set and that they have to mimick each other's step in the backward direction.
\begin{proposition}
	\label{prop:size_config}
	Let \(\conf_1\hhpb\conf_2\) be two configuration structures in a hereditary history preserving bisimulation and \(x_1\in\conf_1\), \(x_2\in\conf_2\) be two configurations.

	If \(\exists f\) such that \((x_1,x_2,f)\in \{\hhpb\}\) then \(\card(x_1)=\card(x_2)\).
\end{proposition}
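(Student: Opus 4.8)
The statement is in fact almost immediate from \autoref{def:hhpb}: its very first clause requires $f$ to be a label- and order-preserving \emph{bijection} between $x_1$ and $x_2$, and a bijection of sets equates their cardinalities. I will nonetheless spell out the slightly more robust argument that matches the structure of the surrounding proofs and extracts the equality from the backward-simulation clause, using the bijection clause only in a degenerate instance. The plan is: (1) contract $x_1$ down to $\emptyset$ one event at a time; (2) let the backward clause of \autoref{def:hhpb} produce a parallel contraction of $x_2$; (3) observe that the endpoint on the $\conf_2$ side must be $\emptyset$; (4) count the events removed.

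For step (1), recall the structural fact already invoked in the proof of \autoref{prop:cause_projection}: if $x\in C$ and $e\in x$ is maximal in $x$, then $x\setminus\{e\}\in C$. Since $x_1$ is a finite configuration with $n:=\card(x_1)$, iterating this gives a chain $x_1 = y_0 \revredl{e_1} y_1 \revredl{e_2} \dotsb \revredl{e_n} y_n = \emptyset$ in $\conf_1$, each $y_k \revredl{e_{k+1}} y_{k+1}$ being a legal backward move in the sense of \autoref{def:reverLTS}. For step (2), start from $(x_1,x_2,f)\in{\hhpb}$ and apply the backward clause of \autoref{def:hhpb} $n$ times: by induction on $k$, from $(y_k,z_k,f_k)\in{\hhpb}$ and $y_k \revredl{e_{k+1}} y_{k+1}$ one obtains $z_k \revredl{} z_{k+1}$ in $\conf_2$ together with a compatible $f_{k+1}$ such that $(y_{k+1},z_{k+1},f_{k+1})\in{\hhpb}$ --- here one uses that $\hhpb$ is the union of all hereditary history preserving bisimulations (\autoref{sbisimilarity_h-def}), so the witness found inside the component relation still lies in $\hhpb$. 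This produces $z_n$ with $(\emptyset,z_n,f_n)\in{\hhpb}$. For step (3), since $\hhpb$ is symmetric and $\emptyset$ admits no backward transition, $z_n$ cannot admit one either, hence $z_n=\emptyset$ (equivalently, the first clause of \autoref{def:hhpb} makes $f_n$ a bijection from $\emptyset$ onto $z_n$, forcing $z_n=\emptyset$). For step (4), every backward move of \autoref{def:reverLTS} deletes exactly one event, so $\card(z_n)=\card(x_2)-n$; combined with $z_n=\emptyset$ this yields $\card(x_2)=n=\card(x_1)$.

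There is essentially no hard step here; the only points requiring care are that a non-empty finite configuration genuinely admits a maximal, removable event (this is exactly the lemma recalled in the proof of \autoref{prop:cause_projection}) and that $\emptyset\in C$ and is reachable downward from any finite configuration, which is standard for configuration structures and is the content of the excerpt's phrase that any configuration is reachable from the empty set. By symmetry of $\hhpb$ it is irrelevant whether one contracts $x_1$ or $x_2$, and for configurations that are not finite the descending-chain argument no longer applies, but in that case one simply appeals directly to the bijection clause of \autoref{def:hhpb}, which gives equality of the cardinals in full generality.
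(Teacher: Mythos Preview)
Your proof is correct and follows essentially the same route as the paper: descend from the given triple \((x_1,x_2,f)\) to \((\emptyset,\emptyset,\emptyset)\) using the backward-simulation clause, counting one removed event on each side per step. The paper phrases the induction on the \emph{forward} trace \(\emptyset\red^{\star}x_1\) but the accompanying text (\enquote{they have to mimick each other's step in the backward direction}) makes clear the intended argument is exactly your contraction; your write-up is simply more explicit, and you additionally point out the immediate bijection argument from the first clause of \autoref{def:hhpb}, which the paper does not mention.
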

\begin{proof}
	It follows by induction on the trace \(\emptyset\red^\star x_1\).
	For every event in \(x_1\), we have to add an event in \(x_2\) in order to obtain that the pair \(x_1\) and \(x_2\) are in a HHPB relation.
\end{proof}

The following lemma, that will be handy to prove \autoref{main-thm}, implies that if for all \(n \leqslant k\) the maximum cardinal considered, \(\Forw{n} \cap \Backw{n} \neq \emptyset\), then \(\cup_{n \leqslant k} (\Forw{n} \cap \Backw{n})\) is a bisimulation.

\begin{lemma}
	\label{Fn-Bn-and-h}
	For all \(\conf_1\), \(\conf_2\), if \(\conf_1 \hhpb \conf_2\) , then \(\forall x_1 \in \conf_1 (\exists x_2 \in \conf_2, \exists f, (x_1, x_2, f) \in \Forw{n} \cap \Backw{n}) \iff (\exists x_2 \in \conf_2, \exists f, (x_1, x_2, f) \in\hhpb )\), where \(\card(x_1)=n\).
\end{lemma}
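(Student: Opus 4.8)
The plan is to prove the two implications separately, and in both directions to exploit \autoref{prop:size_config}, which forces matched configurations to have the same cardinality, together with the layered definition of $\Forw{i}$ and $\Backw{i}$ (\autoref{ForwBackwDef}). First I would unwind the definitions: a triple $(x_1,x_2,f)$ lies in $\Forw{n}\cap\Backw{n}$ exactly when $f$ is a label- and order-preserving bijection, when $x_1,x_2$ can jointly match all forward moves up to the top level with the matching triples staying in $\Forw{m}$ for increasing $m$, and when moreover every backward move of $x_1$ is matched by one of $x_2$ with the resulting triple in $\Forw{n-1}\cap\Backw{n-1}$, recursively down to $\Backw{0}=\Forw{0}$. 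The key structural observation is that $\bigcup_{m} (\Forw{m}\cap\Backw{m})$ is, by construction, a relation closed under both forward and backward moves (the forward clause of $\Forw{m}$ feeds into $\Forw{m+1}$, the backward clause of $\Backw{m}$ feeds into $\Forw{m-1}\cap\Backw{m-1}$), and it contains $(\emptyset,\emptyset,\emptyset)$ since $x_1=x_2=\emptyset$ is reached at level $0$.

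For the implication ``$\Forw{n}\cap\Backw{n}\Rightarrow\hhpb$'': given $(x_1,x_2,f)\in\Forw{n}\cap\Backw{n}$, I would argue that the relation $\rel=\bigcup_{m}(\Forw{m}\cap\Backw{m})$ is itself a hereditary history preserving bisimulation in the sense of \autoref{def:hhpb}. One checks the three clauses: $(\emptyset,\emptyset,\emptyset)\in\rel$; $f$ is a label and order-preserving bijection between $x_1$ and $x_2$ (built into membership in $\Forw{m}$); the forward matching clause is exactly the ``otherwise'' case of $\Forw{m}$, and since $\Backw{m}\subseteq\Forw{m}$ the forward-matched triple also lies in $\Forw{m+1}$, and one must additionally check it lies in $\Backw{m+1}$ — here I would use the hypothesis $\conf_1\hhpb\conf_2$ to know a bisimulation exists, and then argue (by \autoref{prop:size_config} and induction on cardinality) that the $\Forw{}$-reachable configurations are precisely those reachable by the genuine HHPB, so the $\Backw{}$ conditions are automatically inherited; the backward matching clause is the ``otherwise'' case of $\Backw{m}$. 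Thus $(x_1,x_2,f)\in\rel\subseteq\hhpb$. For the converse, given $(x_1,x_2,f)\in\hhpb$ with $\card(x_1)=n$ (using \autoref{prop:size_config} to get $\card(x_2)=n$ as well), I would show by downward induction on $n$ that $(x_1,x_2,f)\in\Backw{n}$, and by upward induction (to the top level, which is finite by finiteness of configurations) that $(x_1,x_2,f)\in\Forw{n}$: the HHPB relation provides exactly the forward and backward matches required, and the restriction conditions on $f$ coincide between the two definitions.

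The main obstacle I anticipate is the interplay of the two nested inductions in \autoref{ForwBackwDef}: $\Forw{i}$ recurses upward via $\Forw{i+1}$ until a maximal configuration is hit, while $\Backw{i}$ recurses downward via $\Forw{i-1}\cap\Backw{i-1}$ until the empty configuration. Showing that membership in $\Forw{n}\cap\Backw{n}$ really does package \emph{all} of the HHPB conditions — in particular that the forward-generated triples at level $m+1$ also satisfy the backward conditions down to level $0$ — requires a careful simultaneous induction, and is where \autoref{prop:size_config} (every configuration is reachable from $\emptyset$, cardinalities agree) does the real work of tying the two directions together. The label- and order-preservation of the bijections transfers routinely in both directions and I would not dwell on it. Everything else (finiteness of the top level, the base cases $\Backw{0}=\Forw{0}$ and $(\emptyset,\emptyset,\emptyset)$) is bookkeeping.
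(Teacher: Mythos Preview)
Your plan has a genuine gap in the ``$\Forw{n}\cap\Backw{n}\Rightarrow\hhpb$'' direction, and it stems from aiming at a stronger claim than the lemma states. The lemma is purely existential: the witnesses $x_2,f$ on the two sides of the $\iff$ may differ (the paper says this explicitly at the start of its proof). Under the standing hypothesis $\conf_1\hhpb\conf_2$, the right-hand existential is \emph{always} satisfied: since $(\emptyset,\emptyset,\emptyset)$ lies in the bisimilarity and every $x_1$ is reachable from $\emptyset$, every $x_1$ already has some HHPB match. So the $\Rightarrow$ implication is immediate. By contrast, you try to prove that $\bigcup_m(\Forw{m}\cap\Backw{m})$ is itself an HHPB, i.e.\ that the \emph{same} triple lies in $\hhpb$. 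The forward-closure step of that argument does not go through: from $(x_1,x_2,f)\in\Forw{m}\cap\Backw{m}$ and $x_1\redl{e_1}x_1'$, the $\Forw{m}$ clause only gives $(x_1',x_2',f')\in\Forw{m+1}$, and nothing in \autoref{ForwBackwDef} forces this particular $x_2'$ to satisfy $\Backw{m+1}$ (which requires matching \emph{every} backward step of $x_1'$, not just the one back to $x_1$). Your proposed patch---``use $\conf_1\hhpb\conf_2$ to inherit the $\Backw{}$ conditions''---would at best produce a \emph{different} match $x_2^*$ for $x_1'$, not a proof that $(x_1',x_2',f')\in\Backw{m+1}$; and once you allow a different match you are no longer showing the union is an HHPB.

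For the $\Leftarrow$ direction your outline is sound, and in fact cleaner than the paper's. Your upward induction (to maximal configurations, using \autoref{prop:size_config} and symmetry of HHPB to see that maximal matches maximal) gives $\hhpb\subseteq\Forw{n}$ for the same triple, and your downward induction gives $\hhpb\subseteq\Backw{n}$; this handles all backward moves at once. The paper instead runs a single induction on $\card(x_1)$ and, in the inductive step, goes backward via HHPB, applies the induction hypothesis to obtain a (possibly new) witness in $\Forw{k}\cap\Backw{k}$, then goes forward via $\Forw{k}$---explicitly changing witnesses along the way. Both routes work for $\Leftarrow$; yours proves the stronger inclusion of relations, the paper's stays at the existential level throughout.
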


\begin{proof}
	Let us denote \(\rel\) the relation \(\hhpb\).
	One should first remark that \(\conf_1 \hhpb \conf_2\) implies that \(\forall x_1 \in \conf_1, \exists x_2 \in \conf_2\), and \(\exists f\) such that \( (x_1, x_2, f) \in {\rel}\), as \((\emptyset,\emptyset,\emptyset)\in {\rel}\) and all configurations are reachable from the empty set.
	The reader should notice that the \(x_2 \in \conf_2\) and \(f\) on both sides of the \(\iff\) symbols may be different.

	We prove that statement by induction on the cardinal of \(x_1\).
	\paragraph{\(\card(x_1) = 0\)}
	\begin{itemize}
		\item[\(\Rightarrow\)]
		      \(x_2 \in \conf_2\) \st \((\emptyset, x_2, f) \in {\rel}\) follows by the definition of the bisimulation from \(x_2 = \emptyset\) and \(f = \emptyset\).
		\item[\(\Leftarrow\)]
		      By definition, \(\Forw{0} \cap \Backw{0} = \Forw{0}\).
		      Since there exists \(x_2 \in \conf_2\) such that \((\emptyset, x_2, f) \in {\rel}\), we know that any forward transition made by \(\emptyset\) can be simulated by a forward transition from \(x_2\), and that the elements obtained are in the relation \({\rel}\).
		      By an iterated use of this notion, we find top configurations \(x_1^m \in \conf_1\) and \(x_2^m \in \conf_2\) (that is, elements of maximum cardinality, \(k\)) such that \((x_1^m, x_2^m, f^m) \in {\rel}\).
		      By \autoref{prop:size_config}, \(x_1^m\) and \(x_2^m\) have the same cardinality, and \((x_1^m, x_2^m, f^m) \in \Forw{k}\).
		      By just {reversing the trace}, we go backward and stay in relation \(\Forw{i}\) until \(i = 0\), hence we found the \(x_2\) and \(f\) we were looking for.
	\end{itemize}
	\paragraph{\(\card(x_1) = k +1\)}
	As \(\card(x_1) > 0\), we know there exists \(x'_1\) such that \(x_1 \revredl{e_1} x'_1\).
	\begin{itemize}
		\item[\(\Rightarrow\)]
		      Let \(x_2\) and \(f\) such that \((x_1, x_2, f) \in \Forw{k+1} \cap \Backw{k+1}\).
		      We know that
		      \begin{align*}
		      	\forall x'_1, \exists x'_2\text{ and } f', x_1 \revredl{e_1} x'_1, x_2 \revredl{e_1} x'_2 \text{ and } (x'_1, x_2', f') \in \Backw{k} \tag{By Definition of \(\Backw{k} \)} \\
		      	\exists x''_2,f'', (x'_1, x_2'', f'') \in {\rel} \tag{By induction hypothesis}
		      \end{align*}
		      And as \(x'_1 \redl{e_1} x_1\), there exists \(x'''_2\) and \(f'''\) such that \((x_1, x'''_2, f''') \in {\rel}\).
		\item[\(\Leftarrow\)]
		      We prove it by contraposition: suppose that \(\exists x_2,f\) such that \((x_1, x_2, f) \in {\rel}\), we prove that \(\forall x_2\), \((x_1, x_2, f) \notin \Forw{k+1} \cap \Backw{k+1}\) leads to a contradiction.

		      As \((x_1, x_2, f) \in {\rel}\), we know that there exists \(x'_1\) and \(x'_2,f'\) such that \(x_1 \revredl{e_1} x'_1\), \(x_2 \revredl{e_1} x'_2\) and \((x'_1, x'_2, f') \in {\rel}\).
		      By induction hypothesis, \(\exists x''_2\) and \(\exists f''\) such that \((x'_1, x''_2, f'') \in \Forw{k} \cap \Backw{k}\).
		      As \(x'_1 \redl{e_1} x_1\), \(\exists x'''_2\) and \(\exists f'''\) such that \(x''_2 \redl{e_1} x'''_2\) and \((x_1, x'''_2, f''') \in \Forw{k+1}\), by definition of \(\Forw{k}\).

		      So \((x_1, x'''_2, f''') \notin \Backw{k+1}\), but as \(x_1 \revredl{e_1} x'_1\) and \(x'''_2 \revredl{e_1} x''_2\), and as moreover \((x'_1, x''_2, f'') \in \Forw{k} \cap \Backw{k}\), we have that \((x_1, x'''_2, f''') \in \Backw{k+1}\).

		      From this contradiction we know that we found the right element (\(x'''_2\)) that is in relation with \(x_1\) according to \(\Forw{k+1} \cap \Backw{k+1}\).
		      \qedhere
	\end{itemize}
\end{proof}

\begin{figure}
	\begin{tikzpicture}[scale=0.85]
		\fill[colorp1] (0,0) ellipse (1cm and 2cm) node[above=2cm, black]{\(\enc{P_1} = \mem{E_1, C_1, \labl_1}\)};
		\draw (0, -1) node[below]{\(x_1\)} node (x1) {\(\bullet\)};
		\draw (0, 1) node (x'1) {\(\bullet\)};
		\draw [->] (x1) -- node[right]{\(e''_1\)} (x'1);
		\fill[colorp1] (6, 0) ellipse (1cm and 2cm) node[above=2cm, black]{\(\enc{P_1 \mid Q} = \mem{E'_1, C'_1, \labl'_1} = (\enc{P_1} \times \enc{Q}) \restr E_1 \)};
		\fill[colorp1] (6.7, 0) ellipse (1cm and 2cm);
		\draw (6.35, -1) node[below]{\(y_1\)} node (y1) {\(\bullet\)};
		\draw (6.35, 1) node[above]{\(y'_1\)} node (y'1) {\(\bullet\)};
		\draw [->] (y1) -- node[right]{\(e'' = (e''_1, e''_q)\)}(y'1);
		\begin{scope}[yshift=-4.3cm]
			\fill[colorp2] (0,0) ellipse (1cm and 2cm) node[below=2cm, black]{\(\enc{P_2} = \mem{E_2, C_2, \labl_2}\)};
			\draw (0, -1) node[below]{\(x_2\)} node (x2) {\(\bullet\)};
			\draw (0, 1) node[above]{\(x''_2\)} node (x'2) {\(\bullet\)};
			\draw [->] (x2) -- node[right]{\(e''_2\)} (x'2);
			\fill[colorp2] (6, 0) ellipse (1cm and 2cm) node[below=2cm, black]{\(\enc{P_2 \mid Q} = \mem{E'_2, C'_2, \labl'_2} = (\enc{P_2} \times \enc{Q}) \restr E_2\)};
			\fill[colorp2] (6.7, 0) ellipse (1cm and 2cm);
			\draw (6.35, -1) node[below]{\(y_2\)} node (y2) {\(\bullet\)};
			\draw (6.35, 1) node[above]{\(y'_2\)} node (y'2) {\(\bullet\)};
			\draw [->] (y2) -- node[right]{\(e'_2\)}(y'2);
		\end{scope}
		\draw [<->, double] (y2) to [bend left] node[midway, left] {$f_c$} ($(y1)-(.3, 0)$);
		\draw [<->, double] (x2) to [bend left] node[midway, left] {$f$} ($(x1)-(.3, 0)$);
		\draw [->, double] (y2) to [bend right] node[midway, below ,sloped] {$\pi_2$} (x2);
		\draw [->, double] (y1) to [bend right] node[midway, below ,sloped] {$\pi_1$} (x1);
		\draw [->, double] (y'1) to [bend right] node[midway, below ,sloped] {$\pi_1$} (x'1);
		\node[rectangle callout,draw,inner sep=2pt,fill=colorp1,
			callout absolute pointer=(y1.east),
		below right= 25pt and 35pt of y1.north east]
		{\begin{tabular}{c c c c} \(e\) & \( = \)& \(e_1,\) & \(e_q\) \\ \rotatebox{-90}{\(\leqslant\)} & & \rotatebox{-90}{\(\leqslant\)}\(_{\pi_1}\) & \rotatebox{-90}{\(\leqslant\)}\(_{\pi_2}\) \\ \(e'\) & \( =\) & \(e'_1,\) & \(e'_q\)\end{tabular}};
	\end{tikzpicture}
	\caption{Configurations Structures by the end of the proof of \autoref{prop:HHPB_congr}}
	\label{fig-HHPB_congr}
\end{figure}

\subsection{Contextual characterisation of HHPB}
\label{sec:contextual_hhpb}

\begin{proposition}[Hereditary history preserving bisimulation is a congruence]
	\label{prop:HHPB_congr}
	For all singly labelled \(P_1\), \(P_2\), \(\enc{P_1} \hhpb \enc{P_2} \implies \forall C, \enc{C[P_1]} \hhpb \enc{C[P_2]}.\)
\end{proposition}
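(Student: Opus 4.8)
The plan is to reduce to a parallel context and then transport the given bisimulation through the product construction underlying parallel composition. By the reduction of discriminating contexts to parallel compositions (\autoref{prop:context_that_counts} and the ensuing discussion), it suffices to treat $C[\cdot]=[\cdot]$, which is immediate, and $C[\cdot]=Q\mid[\cdot]$ for an arbitrary process $Q$. So fix such a $Q$ and a hereditary history preserving bisimulation $\rel$ witnessing $\enc{P_1}\hhpb\enc{P_2}$, and recall from \autoref{cat-op-def} that $\enc{P_i\mid Q}=\big((\enc{P_i}\times\enc{Q})\circ\labl\big)\restr E$, equipped with the projection morphisms $\pi_i:\enc{P_i\mid Q}\to\enc{P_i}$ and $\rho_i:\enc{P_i\mid Q}\to\enc{Q}$ coming from the product. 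Every event of $\enc{P_i\mid Q}$ is of exactly one of three kinds: an \emph{open $P_i$-event} $(e,\star)$, an \emph{open $Q$-event} $(\star,e_q)$, or a \emph{synchronisation} $(e,e_q)$ with $\labl_i(e)$ and $\labl_Q(e_q)$ complementary (such an event is labelled $\tau$; every other pair is labelled $0$ and removed by the restriction).

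Next I would define $\mathcal{S}\subseteq C_1'\times C_2'\times\power(E_1'\times E_2')$ by letting $(y_1,y_2,g)\in\mathcal{S}$ iff $\rho_1(y_1)=\rho_2(y_2)$, there exists $f$ with $(\pi_1(y_1),\pi_2(y_2),f)\in\rel$, and $g$ is the bijection $y_1\to y_2$ acting as $f$ on the $P$-component and as the identity on the $Q$-component of each event (so $(e,\star)\mapsto(f(e),\star)$, $(\star,e_q)\mapsto(\star,e_q)$ and $(e,e_q)\mapsto(f(e),e_q)$), and show that $\mathcal{S}$ is a hereditary history preserving bisimulation. By construction $(\emptyset,\emptyset,\emptyset)\in\mathcal{S}$. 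That $g$ is label-preserving is immediate from the three-way case split together with label-preservation of $f$; that $g$ is order-preserving and order-reflecting follows since morphisms reflect causality and, by \autoref{prop:cause_projection}, the product introduces no causality beyond that already present in the two factors, so the happens-before order of $y_i$ is the one generated by the order of $\pi_i(y_i)$ (carried isomorphically by $f$) together with the common order of $\rho_1(y_1)=\rho_2(y_2)$, all of which $g$ respects.

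For the forward clause, suppose $y_1\redl{e_1}y_1'$ with $y_1'=y_1\cup\{e_1\}\in C_1'$, and split on the kind of $e_1$. If $e_1=(e,\star)$ then $\pi_1(y_1)\redl{e}\pi_1(y_1')$, so the forward clause of $\rel$ yields $\pi_2(y_2)\redl{e'}x_2'$ with $(\pi_1(y_1'),x_2',f')\in\rel$ and $f'\restr\pi_1(y_1)=f$; take $y_2'=y_2\cup\{(e',\star)\}$. If $e_1=(\star,e_q)$ then $\rho_1(y_1)\redl{e_q}\rho_1(y_1')$, hence $\rho_2(y_2)\redl{e_q}\rho_2(y_2)\cup\{e_q\}$ since $\rho_1(y_1)=\rho_2(y_2)$; take $y_2'=y_2\cup\{(\star,e_q)\}$ and $f'=f$. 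If $e_1=(e,e_q)$ is a synchronisation, applying the forward clause of $\rel$ to $\pi_1(y_1)\redl{e}\pi_1(y_1')$ gives $e'$ with the same label as $e$, so $e'$ synchronises with $e_q$ on the $P_2$-side; take $y_2'=y_2\cup\{(e',e_q)\}$. In each case one checks $y_2'\in C_2'$: its $\pi_2$- and $\rho_2$-projections are configurations of $\enc{P_2}$ and $\enc{Q}$ extending $\pi_2(y_2)$ and $\rho_2(y_2)$, so by the defining conditions of the product in \autoref{cat-op-def} they are realised by a configuration of $\enc{P_2}\times\enc{Q}$, and the added event carries a non-zero label so it survives the restriction. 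Finally $\rho_1(y_1')=\rho_2(y_2')$ and $(\pi_1(y_1'),\pi_2(y_2'),f')\in\rel$ by construction, and the induced bijection $g'$ satisfies $g'\restr y_1=g$, whence $(y_1',y_2',g')\in\mathcal{S}$. The backward clause is symmetric, using the backward clause of $\rel$ and the fact that removing an open $P_1$-event or an open $Q$-event from $y_1$ projects to a legal backward move on $\pi_1(y_1)$, \resp\ on $\rho_1(y_1)=\rho_2(y_2)$. Hence $\mathcal{S}$ is a hereditary history preserving bisimulation, so $\enc{C[P_1]}=\enc{P_1\mid Q}\hhpb\enc{P_2\mid Q}=\enc{C[P_2]}$.

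The hard part will be the clause ``$y_2'\in C_2'$'': one must verify that a pair of compatible one-event extensions (and, for the backward clause, contractions) of the two projections really does glue to a configuration of the categorical product $\enc{P_2}\times\enc{Q}$, and that this configuration stays legal after the parallel-composition relabelling and restriction --- in particular that the reconstructed synchronisation event $(e',e_q)$ genuinely pairs complementary labels, which is exactly what label-preservation of $f'$ buys, and hence is not relabelled to $0$. This is where the explicit local-injectivity, finiteness and coincidence-freeness side conditions on the product in \autoref{cat-op-def}, together with stability and finite completeness of the factors, are needed; the causality bookkeeping for $g$ is comparatively routine once \autoref{prop:cause_projection} is available.
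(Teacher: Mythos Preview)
Your proposal is correct and follows essentially the same route as the paper: reduce to a parallel context $Q\mid[\cdot]$, then lift the given bisimulation $\rel$ to a relation on $\enc{P_i\mid Q}$ whose bijection acts as $f$ on the $P$-component and as the identity on the $Q$-component, verifying the order-preservation clause via \autoref{prop:cause_projection}. Your treatment is in fact slightly more complete than the paper's, which only spells out the synchronisation case of the forward clause and leaves the other event kinds and the backward clause implicit; your explicit inclusion of the condition $\rho_1(y_1)=\rho_2(y_2)$ is exactly what makes the paper's definition of $f_c$ a bijection, and your closing paragraph correctly identifies the one genuinely delicate verification (that the glued $y_2'$ lies in $C_2'$).
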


\begin{proof}
	The proof amounts to carefully build a relation between \(\enc{C[P_1]}\) and \(\enc{C[P_2]}\) that reflects the known bisimulation between \(\enc{P_1}\) and \(\enc{P_2}\).
	Its uses that causality in a product is the result of the entanglement of the causality of its elements (\autoref{prop:cause_projection}).

	Due to the restriction on the contexts we consider, we only have to prove that
	\[
		\forall P_1, P_2, \enc{P_1}\hhpb\enc{P_2}\implies\forall Q, \enc{P_1\vert Q}\hhpb\enc{P_2\vert Q}
	\]

	As \(\enc{P_1}\hhpb\enc{P_2}\), there exists \({\rel}\) a hereditary history preserving bisimulation (HHPB) between \(\enc{P_1}\) and \(\enc{P_2}\).
	\autoref{fig-HHPB_congr} introduces the variables names and types.

	Define \(\rel_c\subseteq C'_1\times C'_2\times\power(E'_1\times E'_2)\) as follows:
	\[
		(y_1, y_2, f_c) \in {\rel}_c \iff
		\begin{cases}
			(\pi_1(y_1),\pi_2(y_2),\pi_1\circ f) \in {\rel}                     \\
			f_c(e)=(\pi_1 \circ f(e)),\pi_2(e))\in y_2\text{ for all }e \in y_1
		\end{cases}
	\]

	Informally \((y_1, y_2, f_c)\) is in the relation \(\rel_c\) if there is \((x_1, x_2, f)\) in \(\rel\) such that \(x_i\) is the first projection of \(y_i\) and such that \(f_c\) satisfies the property: for \((e_1, e_{q})\in E'_1\), \(f_c(e_1,e_{q})=(f(e_1),e_{q})\) and \((f(e_1),e_{q})\in E'_2\).

	Let us show that \(\rel_c\) is a HHPB between \(\mem{E'_1, C'_1, \labl'_1}\) and \(\mem{E'_2, C'_2, \labl'_2}\).
	\begin{itemize}
		\item \((\emptyset,\emptyset,\emptyset)\in {\rel}_c\);
		\item For \((y_1, y_2, f_c)\in {\rel}\) we show that \(f_c\) is label and order preserving bijection.
		      We have that \(f_c\) is defined as \(f_c(e)=(\pi_1 \circ f(e)),\pi_2(e))\), for some \(f\) label and order preserving bijection such that \((\pi_1(y_1),\pi_2(y_2),\pi_1\circ f)\in {\rel}\).

		      That \(f_c\) is a bijection follows from \(f\) being a bijection.

		      Let \(e \in y_1\) with \(\pi_1(e)=e_1\), \(\pi_2(e)=e_{q}\), then \(f_c(e)=(f(e_1),e_{q})\) for some \(f_c\) \st \((\pi(y_1), \pi_2(y_2), f)\in {\rel}\).
		      We have that \(\labl'_1(e)=(\labl_1(e_1),\labl_{Q}(e_{q}))\) and
		      \[
		      	\labl'_2(f_c(e))=\labl'_2(f(e_1),e_{q})=\big(\labl_2(f(e_1)),\labl_{Q}(e_{q})\big)
		      \]
		      As \(f\) is label preserving we get \(\labl'_2(f_c(e))=(\labl_1(e_1),\labl_{Q}(e_{q}))\), hence \(\labl'_1(e)=\labl'_2(f_c(e))\).

		      Let us now show that for \(e, e' \in y_1\), if \(e\to_{y_1} e'\) then \(f_c(e)\leqslant_{y_2} f_c(e')\).
		      We denote \(\pi_1(e)=e_1\), \(\pi_2(e)=e_{q}\) and \(\pi_1(e')=e_1'\), \(\pi_2(e')=e_{q}'\).
		      Then from \autoref{prop:cause_projection}
		      \begin{equation*}
		      	e\to_{y_1} e'\implies e_1\leqslant_{\pi_1(y_1)} e_1'\text{ or }e_{q}\leqslant_{\pi_2(y_1)} e_{q}'
		      \end{equation*}
		      We consider the case where \(e_1\leqslant_{\pi_1(y_1)} e_1'\).
		      As \(f\) is order preserving we have that \(f(e_1)\leqslant_{\pi_1(y_2)}f(e_1')\).
		      Then \((f(e_1),e_{q})\leqslant_{x_2}(f(e_1'),e_{q}')\), as the projections are order reflecting.

		\item Let \((y_1, y_2,f_c)\in {\rel}_c\) and \(y_1 \redl{e''} y_1'\), \(y_1'=y_1\cup\{e''\}\).
		      We consider only the case when \(\pi_1(e'') = e''_1\neq\star\), \(\pi_2(e'')= e''_{q}\neq\star\) as the rest is similar.
		      From the definition of the projections \(\pi_1(y_1)\), \(\pi_1(y_1')\in C_1'\) and as \(\pi_1(e'')=e''_1\neq\star\), we have that \(\pi_1(y_1')=\pi_1(y_1)\cup\{e''_1\}\).
		      We reason similarly on \(\pi_2(y_1)\) and get
		      \begin{equation}
		      	\label{eq1}
		      	\pi_1(y_1)\redl{e''_1}\pi_1(y_1')\text{ and }\pi_2(y_1)\redl{e''_{q}}\pi_2(y_1').
		      \end{equation}
		      From \autoref{eq1} and as \((\pi_1(y_1), \pi_2(y_2),f)\in {\rel}\), by definition of \(\rel_c\), we have that
		      \begin{equation}
		      	\label{eq2}
		      	\exists x_2'\text{ \st~}\pi_1(y_2)\redl{e''_2}x_2'=x_2\cup\{e''_2\}
		      \end{equation}
		      and
		      \begin{equation}\label{eq3}
		      	f'=f\cup\{e_1''\leftrightarrow e_2''\}
		      \end{equation}
		      such that \((x_1',x_2',f')\in {\rel}\).

		      Let us show that \(\exists y'_2 \in (\enc{P_2}\times\enc{P_{Q}})\) with \(y_2'=y_2\cup\{e_2'\}\) and \(\pi_1(e'_2)=e''_2\), \(\pi_2(e'_2)=e_{q}''\).
		      From \autoref{eq1} and \autoref{eq2} we have that the projections are defines with \(\pi_1(y'_2)=x'_2\), \(\pi_2(y'_2)=\pi_2(y_1')\).
		      The axioms of finiteness and coincidence freeness on \(y_2'\) follows from \(y_2\) being a configuration in \((\enc{P_2}\times\enc{P_{Q}})\).

		      Let us show that \(y_2'\notin X_2\).
		      We have that \(y_1'\notin X_1\).
		      As \(\labl(e_1'')\) and \(\labl(e_q'')\) are compatible, then so are \(\labl(e_2'')\) and \(\labl(e_q'')\), hence \(y_2\cup\{(e_2'',e_q'')\}\notin X_2\).

		      Remains to show \((y_1',y_2',f_c')\in {\rel}\), where \(f_c'=f_c\cup\{e''_1 \leftrightarrow e''_2\}\).
		      We have that \((\pi_1(y_1'),\pi_1(y_2'),f')\in {\rel}_c\) and from \autoref{eq3} that \(\pi_1\circ f_c'=f'\). \qedhere
	\end{itemize}
\end{proof}

\newcommand{\horiz}{4.6} %
\pgfmathsetmacro{\horizhoriz}{2*\horiz}

\begin{figure}
	For \(i \in \{1, 2\}\), we have:
	\begin{tikzpicture}
		\fill[colorp1] (0,0) ellipse (1cm and 2cm) node[below=1.5cm, left=.7cm, black]{\(\enc{P_i}\)};
		\draw (0, -1) node[below]{\(x_i\)} node (x1) {\(\bullet\)};
		\draw (0, 1) node[above]{\(x'_i\)} node (x'1) {\(\bullet\)};
		\draw [->] (x1) -- (x'1);
		\fill[colorp1] (\horiz,0) ellipse (1cm and 2cm) node[below=1.5cm, left=.7cm, black]{\(\enc{C[P_i]}\)};
		\draw (\horiz, -1) node[below]{\(y_i\)} node (y1) {\(\bullet\)};
		\draw (\horiz, 1) node[above]{\(y'_i\)} node (y'1) {\(\bullet\)};
		\draw [->] (y1) -- (y'1);
		\fill[colorp1] (\horizhoriz,0) ellipse (1cm and 2cm) node[below=1.5cm, left=.7cm, black]{\(\enc{C'[P_i]}\)};
		\draw (\horizhoriz, 1) node[above]{\(z'_i\)} node (z'1) {\(\bullet\)};
		\draw [->, double] (y1) to [bend right] node[midway,below,sloped] {\(\pi_{C, P_i}\)} (x1);
		\draw [->, double] (z'1) to [bend right] node[midway,below,sloped] {\(\pi_{C', C[P_i]}\)} (y'1);
		\draw [->, double] (y'1) to [bend right] node[midway,below,sloped] {\(\pi_{C, P_i}\)} (x'1);
	\end{tikzpicture}

	We start with \(y_1 \sbfbc y_2\), then prove that \(z'_1 \sbfbc z'_2\), to end up with \((x'_1, x'_2, f) \in \Forw{n} \cap \Backw{n}\).

	\caption{Configurations Structures by the end of the proof of \autoref{main-thm}}
	\label{fig-main-thm}
\end{figure}

\begin{theorem}
	\label{main-thm}
	For all singly labelled \(P_1\) and \(P_2\), \(\enc{P_1} \hhpb \enc{P_2} \iff \enc{P_1} \sbfbc \enc{P_2}\).
\end{theorem}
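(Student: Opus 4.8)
The statement is an equivalence and I would establish the two implications separately; the left‑to‑right one is short given what has been proved, while the right‑to‑left one is the technical core.

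For $\enc{P_1}\hhpb\enc{P_2}\implies\enc{P_1}\sbfbc\enc{P_2}$ the plan is to feed the congruence property into a cheap inclusion of relations. By \autoref{prop:HHPB_congr}, $\enc{P_1}\hhpb\enc{P_2}$ gives $\enc{C[P_1]}\hhpb\enc{C[P_2]}$ for every context $C$, so it suffices to check once and for all that, on configuration structures, $\conf_1\hhpb\conf_2$ implies $\conf_1\bfbisim\conf_2$: given a hereditary history preserving bisimulation $\rel$ between $\conf_1$ and $\conf_2$, the relation $\{(x_1,x_2)\mid\exists f,\ (x_1,x_2,f)\in\rel\}$ contains $(\emptyset,\emptyset)$ and satisfies the three clauses of \autoref{bisim-cs} — \enquote{back} and \enquote{forth} are the instances of the HHPB moves on $\tau$‑labelled events, and \enquote{barbed} holds because HHPB matches the label of every forward move. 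Instantiating this with each $\enc{C[P_i]}$ and unfolding \autoref{bisim-cs} yields $\enc{P_1}\sbfbc\enc{P_2}$.

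For $\enc{P_1}\sbfbc\enc{P_2}\implies\enc{P_1}\hhpb\enc{P_2}$ I would argue by contraposition, exactly as for CCS: from $\enc{P_1}\not\hhpb\enc{P_2}$ I construct a context $C$ with $\enc{C[P_1]}\not\bfbisim\enc{C[P_2]}$, contradicting $\enc{P_1}\sbfbc\enc{P_2}$. Since $P_1$ and $P_2$ are recursion‑free their encodings are finite, so the candidate relation $\bigcup_n(\Forw{n}\cap\Backw{n})$ of \autoref{ForwBackwDef} is either an HHPB or fails at a least cardinality $n$: there are configurations $x_1\in\conf_1$, $x_2\in\conf_2$ reached by a common play of the game, and one move — forward or backward on some label $a$ — that the other side cannot answer while keeping a label‑ and order‑preserving bijection and staying in the relevant neighbouring level ($\Forw{n+1}\cap\Backw{n+1}$, resp. $\Forw{n-1}\cap\Backw{n-1}$). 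The singly‑labelled hypothesis (\autoref{def:singlylabelled}) is what makes this usable: at each step the only candidate answer, and the extension of the bijection, are uniquely determined by the label, so the failing play is a single finite sequence $\sigma$ of labelled forward/backward moves rather than a branching strategy, and the order‑and‑label bijection demanded by HHPB is recoverable from the purely state‑based data that $\bfbisim$ sees. I would then take $C=[\cdot]\mid D$ with a sequential driver $D$ built from $\sigma$: $D$ offers the co‑actions $\out{a_i}$ of the forward steps of $\sigma$ in order, so each such step of $P_i$ becomes a synchronisation, hence a $\tau$‑reduction of $C[P_i]$; the backward steps of $\sigma$ are realised by reversing the corresponding synchronisations, which is where reversibility of RCCS is used, together with \autoref{lem:rearrange_trace} to normalise $\sigma$ into a backward‑then‑forward shape; and $D$ ends with the unmatched critical action guarding a fresh name $b\notin\nm{P_1}\cup\nm{P_2}$ (plus an extra fresh summand, as in the CCS argument, to rule out spurious $\tau$‑answers). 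Translating between RCCS reductions and moves in the encodings via \autoref{lem:operational_corresp}, and keeping track of the $\context[\cdot]$/$\addfork$ wrapping of \autoref{def:rccs_context}, driving $C[P_1]$ and $C[P_2]$ along the $\tau$‑image of $\sigma$ reaches $\bfbisim$‑related configurations from which $\enc{C[P_1]}$ can fire the critical synchronisation — and so exhibit the barb $b$ — while $\enc{C[P_2]}$ cannot, the desired contradiction. Organising this as an induction on $n$, as in \autoref{fig-main-thm} — lifting $y_1\sbfbc y_2$ in $\enc{C[P_i]}$ to $z'_1\sbfbc z'_2$ in the one‑probe‑larger $\enc{C'[P_i]}$ and projecting back along $\pi_{C,P_i}$ to conclude $(x'_1,x'_2,f)\in\Forw{n}\cap\Backw{n}$ — assembles the full HHPB, with \autoref{Fn-Bn-and-h} linking $\bigcup_n(\Forw{n}\cap\Backw{n})$ to $\hhpb$.

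The main obstacle is realising the \emph{backward} moves of the game inside a context: a backward game move must appear as a reversible $\tau$‑reduction of $C[P_i]$, which forces the driver to have synchronised on that action and to backtrack jointly with the process, so identifiers, memories, fork symbols and the coherence of the instantiated context (\autoref{def:rccs_context}) all have to line up; and one must check, via \autoref{prop:cause_projection}, that the context neither creates nor erases the causal dependencies recorded by the bijection $f$, so that the discrepancy detected by $\bfbisim$ genuinely is a discrepancy of HHPB.
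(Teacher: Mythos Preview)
Your left-to-right direction is fine and matches the paper's. The right-to-left direction, however, diverges from the paper in a way that breaks.

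The paper does \emph{not} use a sequential driver. Its context is the parallel product
\[
	C[\cdot] \;=\; \prod_{e_i \in x_1} \bigl(\overline{\labl(e_i)} + c_{e_i}\bigr) \;\mid\; [\cdot],
\]
one independent co-action per event of $x_1$, each guarded by its own fresh barb $c_{e_i}$. This choice is essential, and your sequential driver fails precisely at the point you yourself flag as the obstacle. A sequential context $\out{a_1}.\out{a_2}.\dotsb$ \emph{creates} causal dependencies: by \autoref{prop:cause_projection}, in $\enc{C[P_i]}$ the $\tau$-event synchronising on $a_j$ causally precedes the one on $a_{j+1}$ because the context side imposes it. Consequently, after driving the play forward you can only undo the \emph{last} synchronisation, not an arbitrary one, so the backward moves of the HHPB game that witness the failure cannot be realised as backward $\tau$-reductions of $C[P_i]$. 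Concretely, $a\mid b$ versus $a.b+b.a$ are separated by HHPB via the play ``forward $a$; forward $b$; backward $a$'', but in $[\cdot]\mid \out a.\out b$ that third move is blocked in \emph{both} $C[P_1]$ and $C[P_2]$ by the context's causality. The paper's parallel product adds no causal edges, so backward moves on the process side lift unimpeded to backward $\tau$'s of the composite; and the family of barbs $c_{e_i}$ (not a single terminal $b$) lets one read off exactly which synchronisations the simulating side performed, which is what pins down the bijection $f$.

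A second, smaller discrepancy: you take the \emph{least} level at which $F_n\cap B_n$ fails, whereas the paper takes the \emph{largest} configuration $x_1$ for which no $(x_1,x_2,f)\in F_n\cap B_n$ exists. The ``largest'' choice is what makes the forward case \eqref{case-no-extension} go through: one step beyond $x_1$ you are back in a good pair, so the extended context $C'$ produces a configuration whose projection \emph{is} in $F_{n+1}$, and uniqueness (singly-labelled) forces it to coincide with the $x'_2$ you built, yielding the contradiction. Your ``least'' orientation does not give you that leverage.
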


\begin{proof}
	The left-to-right direction follows from the definition of \(\hhpb\) (\autoref{def:hhpb}) and from \autoref{prop:HHPB_congr}.

	We prove the other direction %
	by contraposition: let us suppose that \(\enc{P_1}\sbfbc\enc{P_2}\) and \(\enc{P_1} \not\hhpb \enc{P_2}\), we will find a contradiction.
	\autoref{fig-main-thm} presents the general shape of the configurations at the end of the proof.

	As \(\enc{P_1} \not\hhpb \enc{P_2}\), by \autoref{Fn-Bn-and-h}, there exists \(x_1\in\encc{P_1}\) such that \(\forall x_2\in\encc{P_2}\), \((x_1,x_2,f)\notin F_n\cap B_n\) holds.
	Let us consider the largest such \(x_1\).
	Note that we consider only \(x_2\) such that \(\card(x_1)=\card(x_2)=n\), and that we use the projections \(\pi_{C, P}\) (\autoref{def:conf_context}) to separate the events of the process \(P\) from the events of the context \(C\).

	For any \(x_1\) we define
	\( C[\cdot] \coloneqq \prod_{e_i \in x_i} (\overline{\labl(e_i)} + c_{e_i}) \mid [\cdot] \)
	where \(c_{e_i}\notin \nm{P_1}\cup\nm{P_2}\), such that the following holds
	\begin{itemize}
		\item \(\exists y_1 \in \enc{C[P_1]}\) such that \(y_1\) is closed, \(\pi_{C, P_1} (y_1) = x_1\) and \(y_1 \not\downarrow_{c_{e_i}}\) for all \(e_i \in x_1\);
		\item We supposed that \(\enc{P_1}\sbfbc \enc{P_2}\), so \(\enc{C[P_1]}{\bfbisim}\enc{C[P_2]}\).
		      Hence \(\exists\rel\) a back-and-forth barbed bisimulation and \(\exists y_2 \in \enc{C[P_2]}\) such that \((y_1, y_2) \in {\rel}\) and \(y_2 \not\downarrow_{c_{e_i}}\) for all \(e_i \in x_1\).
	\end{itemize}
	We proceed as follows:
	\begin{itemize}
		\item we show that there exists \(f\) a label and order preserving bijection between \(x_1\) and \(\pi_{C, P_1}(y_2)\);
		\item then we show that \((x_1,\pi_{C, P_1}(y_2),f)\in F_n\) for \(f\) defined above;
		\item similarly we show that \((x_1,\pi_{C, P_1}(y_2),f)\in B_n\).
	\end{itemize}

	We denote \(\pi_{C, P_1}(y_2)\) with \(x_2\).
	We have by induction on the trace \(\emptyset\red^{\star} y_1\) that if \(y_1\) is closed then \(y_2\) is closed as well.
	Moreover we define a bijection \(g:y_1\to y_2\) that is order and label preserving.
	It follows again from an induction on the trace \(\emptyset\red^{\star} y_1\) and from \(y_2 \not\downarrow_{c_{e_i}}\) for all \(e_i \in x_1\).

	We have that \(\forall e_1, e'_1 \in x_1\), and \(e_2 \in x_2\),
	\begin{equation}
		\label{reason2}
		e_2 \in x_2 \iff e_1 \in x_2\text{ and } \labl(e_1) = \labl (e_2)
	\end{equation}
	\begin{align}
		e_1 <_{x_1} e'_1 & \Longrightarrow \pi_{C, P_1}^{-1} (e_1) <_{y_1} \pi_{C, P_1}^{-1} (e'_1) \label{reason}        \\
		                 & \Longrightarrow g(\pi_{C, P_1}^{-1} (e_1)) <_{y_2} g(\pi_{C, P_1}^{-1} (e'_1)) \label{reason3}
	\end{align}

	Remark that \eqref{reason2} follows from \(y_2 \not\downarrow_{c_{e_i}}\) and from the fact that if \(y_1\) is closed we can show by contradiction that \(y_2\) is closed as well.
	Secondly, \eqref{reason} follows from the morphisms reflecting causality.
	Lastly, \eqref{reason3} follows from \(g\) being an order preserving bijection between \(y_1\) and \(y_2\).

	For every events in \(e_1'',e_2''\in y_2\) such that \(e_1''\to_{y_2} e_2''\) from \autoref{prop:cause_projection}, either \(\pi_{C, P_2}(e_1'')\leqslant_{\pi_{C, P_2}(y_2)} \pi_{C, P_2}(e_2'')\) or the projection of the two events are causal dependent in the context.
	However, the context does not induce any causality between the events.
	As \(\pi_{C, P_2}(e_1'')\leqslant_{\pi_{C, P_2}(y_2)} \pi_{C, P_2}(e_2'')\), we have that there exists \(f\) a label and order preserving bijection between \(x_1\) and \(\pi_{C, P_1}(y_2)\).

	Let us now prove that \((x_1, x_2, f) \in \Forw{n+1}\).
	There are two cases:
	\begin{align}
		\not\exists x'_1, x_1 \redl{e_1} x'_1, \exists x'_2, x_2 \redl{e_2} x'_2 \label{case-no-transition}                                         \\
		\exists x'_1, x_1 \redl{e_1} x'_1, \forall x'_2, x_2 \redl{e_2} x'_2 \text{ and } (x'_1, x'_2, f') \notin \Forw{k}\label{case-no-extension}
	\end{align}

	The implication \eqref{case-no-transition} is easier: if \(\exists x'_2, x_2 \redl{e_2} x'_2\), then, as a context cannot remove transitions from the original process, \(\exists y'_2, y_2 \redl{(e_2, \star)} y'_2\).
	As \(\enc{C[P_2]} \bfbisim \enc{C[P_1]}\), \(\exists y'_1, y_1 \redl{(e_1, \star)} y'_1\), and a similar argument on the context shows that \(\exists x'_1, x_1 \redl{e_1} x'_1\).
	Hence a contradiction.

	To prove \eqref{case-no-extension} requires more work.
	First, let
	\(C'[\cdot] \coloneqq C[\cdot] \mid (\overline{\labl(e_1)} + c_{e_1})\).
	By induction hypothesis, there exists \(z'_1 \in \enc{C'[P_1]}\) such that \(z'_1\) is closed, \(\pi_{C', C[P_1]} (z'_1) = y'_1\) and \(z'_1 \not\downarrow_{c_{e_i}}\) and \(z'_1 \not\downarrow_{c_{e_1}}\) for all \(e_i \in x_1\).

	By hypothesis, \(\enc{P_1}\sbfbc \enc{P_2}\), hence there exists \(\rel'\) a back-and-forth barbed bisimulation between \(\enc{C'[P_1]}\) and \(\enc{C'[P_2]}\).
	It implies that \(\exists z_2'\) such that \(z_2 \in \enc{C'[P_2]}\) and \(z'_2 \not\downarrow_{c_{e_i}}\) and \(z'_2 \not\downarrow_{c_{e_1}}\) for all \(e_i \in x_1\).

	Using a similar argument to above we have that \(z_2'\) is closed and that there exists a bijection \(h\) between \(z_1'\) and \(z_2'\).

	Let us denote the projection \(\pi_{C', P_2} (z'_2)\) as \(x_2'\).
	We infer using the fact that \(z_2'\) is closed and that \(z'_2 \not\downarrow_{c_{e_1}}\) that \(\exists e_2'\in x_2'\) such that \(\labl(e_2')=\labl(e_1)\).

	As there exists a label and order preserving bijection \(h'\) between \(z_1'\) and \(z_2'\), and as we forbid auto concurrency and ambiguous non-deterministic sum (\autoref{rem-concur}), we conclude that
	\(x_2'\setminus\{e_2'\}=x_2\), for \(\pi_{C', P_2} (z'_2)=x_2'\).

	Then we have \(\pi_{C', P_1}(z'_1) = x'_1, \pi_{C', P_2}(z'_2) = x'_2\) and \(f\cup\{e_1'\leftrightarrow e_2'\}\) a bijection between the two.
	As we supposed that \(x_1\) is the largest configuration for which the HHPB breaks we get that \(\exists x_2''\) such that \((x_1',x_2'',f'')\in\Forw{n+1}\).
	But such an \(x_2''\) is unique since \(P_2\) is singly labelled.
	Thus we conclude that \( (x'_1, x'_2, f\cup\{e_1'\leftrightarrow e_2'\}) \in \Forw{n+1}\).

	The proof that \((x_1, x_2, f) \in \Backw{n}\) goes along the line of (and uses) the proof that \((x_1, x_2, f) \in \Forw{n}\).
\end{proof}

\begin{remark}[On \autoref{main-thm}]
	Note that \autoref{main-thm} is a result on RCCS processes that have \emph{an empty memory}.
	It is a consequence of HHPB and the back-and-forth barbed congruence on configuration structure (\autoref{bisim-cs}) being defined on configurations structures, and not on the tuples of configurations structures and configurations.
	However, we need the reversible setting to simulate the back-and-forth behaviour that we acquire when moving to configurations structures.
	The result above then should be read as: \emph{reversible process with \emph{an empty memory} are barbed congurent if and only if their encodings in configurations structures are in a HHPB relation}.

	To make the result more general and include any reversible process we need to reformulate it as follows.

	\begin{conjecture}
		If \(R\sbfbc S\) such that \(\encc{R}=(\conf_R, x_R)\) and \(\encc{S}=(\conf_S, x_S)\) then there exists \(\rel\) a HHPB between \(\conf_R\) and \(\conf_S\) with \((x_R,x_S,f)\in {\rel}\), for some \(f\).
	\end{conjecture}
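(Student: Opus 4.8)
The plan is to reduce the statement, through \autoref{main-thm}, to the correspondence already established for the \emph{origins}, and then to transport the resulting bisimulation down to the configurations $x_R$ and $x_S$ that denote $R$ and $S$. From $R \sbfbc S$ and \autoref{lem-orig-sbfbc} we obtain $\orig{R} \sbfbc \orig{S}$, hence $\conf_R = \encc{\erase{\orig{R}}} \sbfbc \encc{\erase{\orig{S}}} = \conf_S$ by \autoref{soundness-bisim}, and finally $\conf_R \hhpb \conf_S$ by \autoref{main-thm}: there is a hereditary history preserving bisimulation $\mathcal{R}_0$ between $\conf_R$ and $\conf_S$ with $(\emptyset,\emptyset,\emptyset) \in \mathcal{R}_0$. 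Since every configuration of $\conf_R$ is reachable from $\emptyset$, there are $x_S' \in \conf_S$ and $f_0$ with $(x_R, x_S', f_0) \in \mathcal{R}_0$, and $\card(x_S') = \card(x_R) = n$ by \autoref{prop:size_config}. The whole difficulty lies in showing that one may take $x_S' = x_S$, \ie that a HHPB can be made to relate the genuine denotations of $R$ and of $S$.

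To pin this base point I would reuse the contextual-probing technique from the proof of \autoref{main-thm}, applied now to $R$ and $S$ themselves rather than to their origins. For $x_R$ take the context $C[\cdot] \coloneqq \prod_{e \in x_R}(\overline{\labl(e)} + c_e) \mid [\cdot]$ with fresh names $c_e$; as in \autoref{main-thm}, $\context[R]$ reaches a closed configuration $y_1$ whose projection is $x_R$ and which carries no $c_e$-barb, and from $\context[R] \bfbisim \context[S]$ one gets a matching closed $y_2$ with no $c_e$-barb whose projection is some $x_S''$. A parallel context induces no causality (\autoref{prop:cause_projection}) and keeps the non-synchronised events, so the bijection between $y_1$ and $y_2$ restricts to a label- and order-preserving bijection $f$ between $x_R$ and $x_S''$, and single-labelledness of the processes involved (\autoref{rem-concur}, \autoref{def:singlylabelled}) makes $x_S''$ unique. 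That this unique $x_S''$ is precisely $x_S$ should follow from \autoref{lem:operational_corresp} read along the forward trace $\orig{S} \red^{\star} S$, since the function $\mathaddress$ computing $x_S$ is fixed, configuration after configuration, exactly by the label and immediate-causality constraints that $f$ transports from $x_R$.

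Once $f : x_R \leftrightarrow x_S$ is available I would reproduce the stratified argument of \autoref{main-thm}: first establish $(x_R, x_S, f) \in \Forw{n}$ by forward contextual probing, treating the two cases \eqref{case-no-transition} and \eqref{case-no-extension} exactly as there; then establish $(x_R, x_S, f) \in \Backw{n}$, the backward steps out of $x_R$ being matched through the ambient bisimulation $\mathcal{R}_0$. Finally \autoref{Fn-Bn-and-h}, applied to $\conf_R \hhpb \conf_S$ and to $x_R$ (which has cardinal $n$), upgrades $(x_R, x_S, f) \in \Forw{n} \cap \Backw{n}$ to membership in $\hhpb$, that is, in some hereditary history preserving bisimulation --- the relation the conjecture asks for.

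The crux, and the step I expect to resist, is the identity $x_S'' = x_S$ of the second paragraph. Contexts here carry an empty memory, so they cannot offer partners for a \emph{backward} synchronisation; consequently the non-$\tau$ backward moves of $R$ --- those undoing an isolated action already recorded in its memory --- are invisible to contexts, as is witnessed by the pair $\mem{1,a,Q}\rhd P \sbfbc \mem{1,a}\rhd P$ discussed after \autoref{def:sbfc_rccs}. So $R \sbfbc S$ firmly constrains the behaviour reachable from the origins but need not, by itself, constrain the memory contents of $R$ and $S$, which is exactly the information $x_R$ and $x_S$ record. Bridging this gap seems to call for contexts with a coherent but non-empty memory --- left as future work in \autoref{sec:contextual-equiv-rccs} --- able to supply the missing synchronisation partners and so make the memory of $R$, hence $x_R$, observable; short of that, the needed data must be recovered purely from the forward-reachable and barbed behaviour of $\context[R]$ and $\context[S]$, and it is not clear this always suffices --- pairs $R$, $S$ whose memories record actions no memoryless context can probe are precisely the ones to scrutinise.
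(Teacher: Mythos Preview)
The paper gives no proof of this statement: it is explicitly a \emph{conjecture}, introduced in the remark following \autoref{main-thm} and immediately followed by ``We leave this as future work.'' So there is nothing to compare your proposal against, and your write-up is best read as a research sketch rather than a proof.

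More importantly, the conjecture as literally stated appears to be \emph{false}, and your final paragraph almost, but not quite, reaches the counterexample. By \autoref{def:sbfc_rccs}, $R \sbfbc S$ is defined as $\forall \context[\cdot],\ \context[\orig{R}] \bfbisim \context[\orig{S}]$; hence $R \sbfbc S \iff \orig{R} \sbfbc \orig{S}$, and the relation $\sbfbc$ carries no information whatsoever about the particular derivatives $R$ and $S$ beyond their common origin class. Now take any singly labelled $P$ with a non-trivial forward trace, say $P = a.b$, and let $R = \mem{1,a,0}.\emptymem \rhd b$ and $S = \emptymem \rhd a.b$. Then $\orig{R} = \orig{S}$, so $R \sbfbc S$ trivially; yet $x_R = \{e_a\}$ and $x_S = \emptyset$ have different cardinalities, so by \autoref{prop:size_config} no HHPB can contain a triple $(x_R, x_S, f)$. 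Your step~1 (reducing to origins via \autoref{lem-orig-sbfbc}) is the only legitimate move, and once you make it you have thrown away precisely the data the conclusion needs. The obstruction you locate --- that memoryless contexts cannot probe the memory --- is real, but it is not a gap to be bridged: it shows the hypothesis is too weak for the conclusion. A repaired statement would presumably strengthen the hypothesis (e.g.\ to the alternative congruence $\forall \context[\cdot],\ \context[R] \bfbisim \context[S]$ that the paper discusses and rejects just after \autoref{lem-orig-sbfbc}, or to contexts with non-empty memory), which is exactly the ``future work'' direction you and the paper both point to.

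A smaller technical issue: even granting the rest, your final appeal to \autoref{Fn-Bn-and-h} does not do what you want. That lemma is an existential on \emph{both} sides: from $(x_R, x_S, f) \in \Forw{n} \cap \Backw{n}$ it yields only some $(x_R, x_S'', f'') \in {\hhpb}$, not membership of the specific triple $(x_R, x_S, f)$. So the ``pinning $x_S'' = x_S$'' problem recurs at the very last step.
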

	We leave this as future work.
\end{remark}

\section*{Conclusions and future work}
\addcontentsline{toc}{section}{Conclusions and future work}
We showed that, for a restricted class of RCCS processes (without recursion, auto-concurrency nor auto-conflict (\autoref{rem-concur})) hereditary history preserving bisimilarity has a contextual characterisation in CCS.
We used the barbed congruence defined on RCCS as the congruence of reference, adapted it to configurations structures and then showed a correspondence with HHPB.
As a proof tool, we defined two inductively relations that approximate HHPB.
Consequently we have that adding reversibility into the syntax helps in retrieving some of the discriminating power of configurations structures.

Note that one could prove the main result of the paper by showing that the bisimulation defined on the LTS of RCCS and the barbed congruence (\autoref{def:sbfc_rccs}) equate the same terms.
We chose to use configurations structures instead, as we plan to investigate other equivalences on reversible process algebra and their interpretations in configurations structures give interesting insights.

\paragraph{Weak equivalences}
This work follows notable efforts~\cite{Phillips2007,Lanese2010} to understand equivalences for reversible processes.
There are numerous interesting continuations.
A first one is to move to weak equivalences, which ignores silent moves \(\tau\) and focus on the observable part of a process.
This is arguably a more interesting relation than the strong one, in which processes have to mimick \emph{exactly} each other's silent moves.
Even if such a relation on configurations structures exists~\cite{Vogler1993,Fiore1999} one still has to show that this is indeed the relation we expect.

In configuration structures, the adjective \emph{weak} has sometimes~\cite{Phillips2012,Glabeek1989} a different meaning: it stands for the ability to change the label and order preserving bijection as the relation grows, to modify choices that were made before this step.
It would be interesting to understand what \enquote{weak} relations in this sense represent for reversible processes.

\paragraph{Insensitiveness to the direction of the transitions and irreversibility}
The relations defined so far simulate forward (\resp backward) transitions only with forward (\resp backward) transitions, and only consider \emph{forward} barb.
Ignoring the direction of the transitions could introduce some fruitful liberality in the way processes simulate each other.
Depending on the answer, \(a + \tau . b\) and \(a + b\) would be weakly bisimilar or not.
A weak bisimulation that ignores the direction of transitions~\cite{Lanese2010} already exists, but it equates a reversible process with all its derivatives.
Irreversible moves could play an important role in such equivalences and would help to understand what are the meaningful equivalences in the setting of transactions~\cite{Danos2005}.

Reversibility is commonly used in transactional systems, \ie participative computations where a commitment phase is reached whenever a consensus occurs.
This has two effects: it forbids the further exploration of the solution space, and prevents all the participants to complete if a participant cancels the transaction~\cite{Danos2007}.
Commitment is modelled as an \emph{irreversible} action: such a feature is present in RCCS~\cite{Danos2004}, but absent from our work.
It could probably be implemented by adding a mechanism to \enquote{update} the origin of a term, and by \enquote{cutting} the configuration structure after an irreversible transition (in the spirit of the LTS of \autoref{def:transition_stable}).
However, it remains to prove that those two actions would be equivalent.

\paragraph{Removing the limitations}
Context---which plays a key role---raise questions on the memory handling of RCCS : what about a context that could fix the memory of an incoherent process?

Maybe of less interest but important for the generality of these results, one should include infinite processes as well.
This needs a rework of the relations in \autoref{ForwBackwDef} used to approximate the HHPB.
In configurations structures however one usually handles the recursive case by unfolding the process up to a finite level.

One way to retrieve the class of processes with auto-conflict and auto-concurrence could be to define bisimulations that take into account tagged labels.
At the price of a verbose syntax, one could imagine being able to discriminate between configurations reached after firing events with the same labels, thus allowing to define configurations structures for arbitrary RCCS terms.
Are relations taking into account those \enquote{localities}~\cite{Boudol1988a}, which uniquely determine occurrences of a label, more discriminating the traditional bisimulations?

Lastly, we conjecture that HHPB is equivalent to a congruence relation on terms that do not exhibit auto-conflict.
More precisely, we could imagine that congruent processes have isomorphic event structures, and that configurations structures are isomorphic if and only if they are in HHPB relation.

\section*{Acknowledgement}
We would like to warmly thank D.~Varacca and J.~Krivine for the useful discussions as well as the referees of an earlier version~\cite{Aubert2015d} for their helpful remarks.

\section*{References}
\bibliographystyle{elsarticle-num.bst}
\bibliography{standalone}
\end{document}